\newcommand{\TernaryInfC}{\TrinaryInfC}
\journal{}
\newtheorem{theorem}{Theorem}
\newtheorem{definition}[theorem]{Definition}
\newtheorem{lemma}[theorem]{Lemma}
\newtheorem{corollary}[theorem]{Corollary}
\newtheorem{example}[theorem]{Example}
\newtheorem{observation}[theorem]{Observation}
\newcommand{\Q}{\mathbb{Q}}
\newcommand{\N}{\mathbb{N}}
\newcommand{\eps}{\varepsilon}
\newcommand{\expr}{\mathbb{E}}
\newcommand{\ITE}{\mathsf{ITE}}
\newcommand{\val}{\mathsf{val}}
\renewcommand{\O}{\mathcal{O}}
\newcommand{\pto}{\rightharpoonup}
\newcommand{\kstar}{^{\textstyle *}}
\newcommand{\kostar}{^{\odot\hspace{-0.47em}*}}
\newcommand{\kplus}{^+}
\newcommand{\tg}[1]{\texttt{#1}}
\newcommand{\G}{\mathsf{G}}
\newcommand{\edge}{\mathsf{edge}}
\newcommand{\sspc}{\hspace{0.2em}}
\newcommand{\MSO}{\mathsf{MSO}}
\newcommand{\Word}{\mathsf{Word}}
\newcommand{\tsum}{{\textstyle\sum}}
\newcommand{\tbigwedge}{{\textstyle\bigwedge}}
\newcommand{\tbigvee}{{\textstyle\bigvee}}
\newcommand{\tbigcup}{{\textstyle\bigcup}}
\newcommand{\tto}{\to\hspace{-1.9ex}\to}
\newcommand{\dom}{\mathrm{dom}}
\newcommand{\sem}[1]{\llbracket #1 \rrbracket}
\newcommand{\true}{\mathrm{true}}
\newcommand{\false}{\mathrm{false}}
\newcommand{\first}{\mathsf{first}}
\newcommand{\id}{\mathrm{id}}
\newcommand{\aut}[1]{\mathcal{#1}}
\newcommand{\sepa}{;\sspc}
\newcommand{\rate}{\mathsf{R}}
\newcommand{\op}{\mathit{op}}
\newcommand{\fold}{\mathsf{fold}}
\newcommand{\elseQ}{\mathbin{\texttt{else}}}
\newcommand{\splitQ}{\texttt{split}}
\newcommand{\iterQ}{\texttt{iter}}
\newcommand{\prefsumQ}{\texttt{prefix-sum}}
\newcommand{\SR}{\mathsf{SR}}
\newcommand{\SLR}{\mathsf{SLR}}
\newcommand{\unary}{\mathsf{unary}}
\newcommand{\sink}{\mathsf{sink}}
\newcommand{\R}{\mathcal{R}}
\newcommand{\compl}{{\sim}}
\newcommand{\At}{\mathsf{At}}
\newcommand{\sring}[1]{\mathcal{#1}}
\newcommand{\imax}{i_{\textsf{max}}}
\newcommand{\weight}{\mathsf{weight}}
\newcommand{\last}{\mathsf{last}}
\begin{document}

\begin{frontmatter}


\author[up]{Rajeev Alur}
\author[bgu]{Dana Fisman}
\author[ru]{Konstantinos Mamouras}
\author[usc]{Mukund Raghothaman}
\author[up]{Caleb Stanford}

\address[up]{University of Pennsylvania, Philadelphia, PA, USA}
\address[bgu]{Ben-Gurion University, Be'er Sheva, Israel}
\address[ru]{Rice University, Houston, TX, USA}
\address[usc]{University of Southern California, Los Angeles, CA, USA}

\title{Streamable Regular Transductions}


\address{}

\begin{abstract}
Motivated by real-time monitoring and data processing applications, we develop a formal theory of quantitative queries for streaming data that can be evaluated efficiently. We consider the model of unambiguous Cost Register Automata (CRAs), which are machines that combine finite-state control (for identifying regular patterns) with a finite set of data registers (for computing numerical aggregates). The definition of CRAs is parameterized by the collection of numerical operations that can be applied to the registers. These machines give rise to the class of \emph{streamable regular transductions} ($\SR$), and to the class of \emph{streamable linear regular transductions} ($\SLR$) when the register updates are \emph{copyless}, i.e.\ every register appears at most once in the right-hand-side expressions of the updates. We give a logical characterization of the class $\SR$ (resp., $\SLR$) using MSO-definable transformations from strings to DAGs (resp., trees) without backward edges. Additionally, we establish that the two classes $\SR$ and $\SLR$ are closed under operations that are relevant for designing query languages. Finally, we study the relationship with weighted automata (WA), and show that CRAs over a suitably chosen set of operations correspond to WA, thus establishing that WA are a special case of CRAs.
\end{abstract}

\begin{keyword}
Cost Register Automata \sep
MSO transductions \sep
regular functions \sep
quantitative automata \sep
weighted automata \sep
stream processing



\end{keyword}

\end{frontmatter}


\section{Introduction}
\label{sec:intro}

Finite-state automata, and the associated class of regular languages, have many appealing properties.
In particular, the class of regular languages has robust, and well understood, expressiveness with multiple
characterizations, and is closed under a variety of language operations.
Furthermore, a finite-state automaton processes an input string in a single left-to-right pass 
using only constant space, in compliance with the {\em streaming\/} model of computation.
This forms the basis of practical applications of automata in design and implementation
of query languages for pattern matching in strings.

In a diverse range of applications such as financial tickers, network traffic monitoring, and click-streams of
web usage, the core computational problem is to map a stream of data items to a numerical value.
For example, suppose the monitoring software at a network router wants
to compute the average number of packets per VoIP (Voice over IP) session 
in the incoming stream of packets.
This requires detection of VoIP sessions, which can be characterized by regular patterns in the input stream, and computing numerical aggregates in a hierarchical manner, where
the cost of a VoIP session is the number of packets it contains, and the cost of a stream is the average of costs
of VoIP sessions it contains.
Recent work on {\em Quantitative Regular Expressions\/} (QREs) \cite{AFR2016QRE} provides a declarative query language
for specifying such computation in a modular fashion, and has led to prototype implementations StreamQRE \cite{MRAIK2017SQRE, AM2017SQRE}
and NetQRE~\cite{YLMMAL2017NQRE}.
These quantitative queries involve a mix of regular patterns and numerical aggregates, and crucially, all queries can be evaluated in a streaming manner. In this paper we develop an understanding of the class of streaming transductions definable by such queries: the streamable regular transductions.

A natural starting point for this study is the model of {\em cost register automata\/} (CRA)
and the associated class of {\em regular transductions\/}~\cite{AdADRY2013CRA}.
A CRA processes a data word, that is, a sequence over $\Sigma\times D$, where $\Sigma$ is a finite set of tags
and $D$ is a set of data values, and outputs a value in $D$ that is computed using a given set of
operations over $D$ (for example, $D$ can be the set of natural numbers, and the set of operations can include
addition, minimum, and maximum).
A CRA has a finite-state control that is updated based only on the tag values of the input word,
and a finite set of (write-only) registers that are updated at each step using the given operations.
The partial functions from $(\Sigma\times D)\kstar$ to $D$ 
computed by the CRAs are called regular with
respect to the allowed set of operations.
The computation of a regular transduction can be viewed as, first mapping a data word to a tree whose
nodes are labeled with the allowed operations, and then evaluating the resulting term,
where the former mapping is an MSO-definable word-to-tree transformation. 
This gives an alternative logical characterization of the class of regular transductions, and allows us to
draw upon the well developed theory of regular tree transformations~\cite{BE2000, EM1999, AdA2017STT}.

The original CRA model is a deterministic machine that processes its input in a single pass.
Its registers can hold data values as well as functions represented by terms with parameters,
and the update at each step is required to be {\em copyless\/}, that is, a register can appear at most once
in the right-hand-side expressions of the updates.
The CRA model we study in section~2 differs in three ways.
First, since our focus is on streamability, and terms can grow linearly with the size of the input stream,
we require registers to hold only values. 
Second, we allow the finite-state control to be updated nondeterministically based on the input tags, but it
is required to be {\em unambiguous} (that is, any accepted data word has exactly one accepting run).
While it is known that when registers can hold terms, unambiguous machines are no more expressive than
deterministic ones~\cite{AdA2017STT}, we prove that when registers can hold only values that are updated in a copyless manner,
unambiguous machines are strictly more expressive than their deterministic counterparts.
The complexity of evaluation for unambiguous CRAs is only a constant (equal to the number of states of the machine)
factor more than the deterministic ones.
Finally, we consider both copyless and copyful (that is, unrestricted) updates since a CRA has a constant
number of registers that hold data values and get updated by executing a constant number of operations
while processing each data item. The number of bits needed to store data values can grow with the length
of the input stream when the data domain is unbounded, and the exact complexity of the streaming algorithm
for evaluation depends on the nature of operations and implementation details.

We call the class of transductions over data words defined by such unambiguous CRAs 
with registers containing only values and copyful updates as {\em streamable regular transductions} ($\SR$),
and the subclass when updates are required to be copyless as {\em streamable linear regular transductions} ($\SLR$).
Both these classes turn out to be closed under operations relevant for design of query languages.
For instance, consider the {\tt split} operation, the quantitative analog 
of the (unambiguous) language concatenation: given two partial transductions $f$ and $g$ over data words
and a binary data operation {\it op}, ${\tt split}(f,g,{\it op})$ is defined on an input string $w$ if it can be 
split uniquely into two parts $w=w_1 w_2$ such that both $f(w_1)$ and $g(w_2)$ are defined, and if so returns
${\it op}(f(w_1),g(w_2))$.
Both classes $\SR$ and $\SLR$ are closed under this operation.
It is worth noting that the class of general streaming algorithms with an appropriate complexity bound is not closed under this operation.
That is, it is possible that both $f$ and $g$ can be computed in a streaming fashion using memory that is logarithmic
in the size of the input string, but ${\tt split}(f,g,{\it op})$ requires memory that grows
linearly with the length of the input string (Theorem~\ref{thm:general-streaming-algorithms-not-closed}).
This justifies the role of regularity in the design of streaming algorithms in a modular fashion.

In section~3, we give a logical characterization of the streamable regular transductions.
Recall that the class of regular transductions corresponds to MSO-definable string-to-tree
transformations. We show that streamability corresponds to the requirement that in the output graph
edges cannot go backwards: the output graph has multiple copies corresponding to an index position $i$ 
in the input string, and the children of each such node must be copies corresponding to input  positions less
than or equal to $i$.
Furthermore, allowing updates in a CRA to be copyful corresponds to allowing sharing in the output graph
as long as no cycles are created. More precisely,
the class $\SR$ corresponds to MSO-definable transformations from string graphs to
directed acyclic graphs without backward edges, and
the class $\SLR$ corresponds to MSO-definable transformations from string graphs to
trees without backward edges.
The proof builds on ideas developed in \cite{BE2000, EM1999, AdA2017STT}, but is self-contained.

In section~4, we study the relationship with {\em weighted automata\/}, an extensively studied 
quantitative extension of automata~\cite{DKV2009HWA}.
A weighted automaton defines a (total) function from $\Sigma\kstar$ to a set $D$ equipped with two
operations, product and sum, that form a semiring. Such an automaton is a nondeterministic 
finite-state machine, each transition of which is labeled with a tag in $\Sigma$ and a weight in $D$.
The cost of a run of the automaton is the product of all the weights of the transitions in the run,
and the cost associated with a string is the sum of the costs of all the corresponding runs of the automaton.
If we choose the set of operations to contain only multiplication by constants, for each value in $D$,
then we show that the two classes $\SR$ and $\SLR$ coincide, and correspond exactly to unambiguous weighted automata.
If we choose the set of operations to contain multiplication by constants, for each value in $D$, and sum,
then we show that the class $\SR$ corresponds exactly to transductions definable by weighted automata.

Section~\ref{sec:related} contains an extensive discussion of related work, including weighted automata, register automata, and MSO-definable transformations.

\section{Cost Register Automata}
\label{sec:CRAs}

Cost Register Automata (CRAs) were introduced in \cite{AdADRY2013CRA} with the goal of providing a machine-based characterization of the class of \emph{regular transductions} (called \emph{regular cost functions} in \cite{AdADRY2013CRA}) from strings to costs. To capture precisely the class of transductions considered in \cite{AdADRY2013CRA}, it is sufficient to restrict the updates so that they are ``copyless'' (each variable is used at most once on the right-hand size of a parallel update), but it is necessary to allow variables to hold terms with at most one ``hole'' (parameter that can be used for substitution). For the classes of transductions that we consider here, we consider CRAs that can hold only values and the copylessness restriction is dropped to accommodate word-to-DAG transformations (defined later in section~\ref{sec:MSO}).

We will start by introducing the semantic objects of data transductions, which represent stream transformations, and several regular combinators on these objects. Then, we will present the formal computational model of (nondeterministic, copyful) Cost Register Automata (NCRAs) and illustrate its use with several examples. We define the class of \emph{streamable regular} (resp., \emph{streamable linear regular}) transductions, denoted $\SR$ (resp., $\SLR$), as the class of transductions that are computed by copyful (resp., copyless) unambiguous CRAs. The restriction of ``copylessness'' essentially says that it is not allowed to copy the contents of registers. A key feature of the CRA model is that it is parameterized by the operations on data values that are allowed when updating the registers.

We will also show that in the case of copyless CRAs, unambiguous machines are more expressive than their deterministic counterparts. It will then be established that both classes $\SR$ and $\SLR$ are closed under regular combinators that are straightforward analogs of the familiar regular combinators for word languages. We will conclude with a short discussion of the complexity of evaluation in the CRA model.

\subsection{Data Transductions and Combinators}
\label{subsec:transductions}

We fix a finite alphabet $\Sigma$ whose elements are called \emph{tags}, a data set $D$ of \emph{data values}. A \emph{data word} is a sequence of tagged values, i.e.\ a word over the alphabet $\Sigma \times D$. A \emph{data transduction} is a partial function of type
\[
  (\Sigma \times D)\kstar \pto D.
\]
For a data word $w \in (\Sigma \times D)\kstar$, we write $w|_\Sigma$ to denote the elementwise projection of $w$ to the tag component. More formally,
\[
  (a_1, d_1) \sspc
  (a_2, d_2) \sspc
  \cdots \sspc
  (a_n, d_n)
  \,|_\Sigma =
  a_1 a_2 \cdots a_n.
\]
We will only consider transductions $f$ that satisfy the following property: for all input sequences $u, v$ with $u|_\Sigma = v|_\Sigma$, $f(u)$ is defined iff $f(v)$ is defined. The \emph{rate} of $f$ is the language $\rate(f) \subseteq \Sigma\kstar$ defined as follows:
\[
  \rate(f) = \{ \sigma \in \Sigma\kstar \mid
    \text{for every $w$ with $w|_\Sigma = \sigma$, $f(w)$ is defined}
  \}.
\]
So, the domain of a transduction is equal to
\[
  \dom(f) = \{
    w \in (\Sigma \times D)\kstar \mid
    w|_\Sigma \in \rate(f)
  \}.
\]
A transduction $f: (\Sigma \times D)\kstar \pto D$ is said to be \emph{value-oblivious} if for all input sequences $u, v$ with $u|_\Sigma = v|_\Sigma$ it satisfies $f(u) = f(v)$. So, a value-oblivious transduction $f$ can be represented by a partial function $\hat f: \Sigma\kstar \pto D$ whose domain is equal to the rate of $f$ and which satisfies $f(w) = \hat f(w|_\Sigma)$ for every input sequence $w$.

Now, we also fix a family $\O$ of constants and operations that are allowed on the data set $D$. We write $\O_n$ to denote the set of $n$-ary operations that are contained in $\O$. In particular, $\O_0$ is the set of constants in $\O$. We will consider below several combinators for composing data transductions using the allowed operations. Before we give the definitions of these combinators, we need to introduce some additional notation.
The operation $\odot$ is \emph{unambiguous concatenation} of strings and ${}\kostar$ is \emph{unambiguous iteration}, which are defined as follows:
\begin{align*}
A \odot B &=
\{ w \mid
  \text{there are unique $u \in A$ and $v \in B$ with $w = uv$}
\}
\\
A\kostar &=
\{ w \mid
  \text{there are unique $n \geq 0$ and $w_1, \ldots, w_n \in A$}
\\ &\hspace{3.5em}
\text{with $w = w_1 \cdots w_n$} \}
\end{align*}
The \emph{left fold} combinator $\fold: (B \times A \to B) \times B \times A\kstar \to B$ is defined by
\begin{align*}
\fold(f, b, \eps) &= b
\ \text{and}\ 
\\
\fold(f, b, a \cdot w) &= \fold(f, f(b,a), w)
\end{align*}
for all $f: B \times A \to B$, $b \in B$, $a \in A$, and $w \in A\kstar$. For example:
\begin{align*}
\fold(f, b, a_1) &= f(b, a_1)
\\
\fold(f, b, a_1 a_2) &= f(f(b, a_1), a_2)
\\
\fold(f, b, a_1 a_2 a_3) &= f(f(f(b, a_1), a_2), a_3)
\end{align*}
Table~\ref{table:combinators} contains the typing rules and the definitions of several combinators for data transductions: output combination, choice ($\elseQ$), quantitative concatenation ($\splitQ$), and quantitative iteration ($\iterQ$). These combinators are quantitative analogs of the familiar combinators for string languages: intersection, union, concatenation, and iteration. These regular operations on data transductions are relevant for the design of query languages for stream processing, since they enable the modular description of complex streaming computations.

\begin{table}
\centering\small
$\begin{gathered}
\AxiomC{$\op: D^n \to D$}
\AxiomC{$f_i: (\Sigma \times D)\kstar \pto D$ for all $i$}
\RightLabel{(output combination)}
\BinaryInfC{$\begin{aligned}
\op(f_1,\ldots,f_n) &: (\Sigma \times D)\kstar \pto D
\\
\rate(\op(f_1,\ldots,f_n)) &= \rate(f_1) \cap \cdots \cap \rate(f_n)
\\
\op(f_1,\ldots,f_n)(w) &= \op(f_1(w),\ldots,f_n(w))
\end{aligned}$}
\DisplayProof
\\[4ex]
\AxiomC{$f, g: (\Sigma \times D)\kstar \pto D$}
\RightLabel{(choice)}
\UnaryInfC{$\begin{aligned}
f \elseQ g &: (\Sigma \times D)\kstar \pto D
\\
\rate(f \elseQ g) &= \rate(f) \cup \rate(g)
\\
(f \elseQ g)(w) &= \begin{cases}
  f(w), &\text{if $w|_\Sigma \in \rate(f)$} \\
  g(w), &\text{otherwise}
\end{cases}
\end{aligned}$}
\DisplayProof
\\[4ex]
\AxiomC{$f, g: (\Sigma \times D)\kstar \pto D$}
\AxiomC{$\op: D \times D \to D$}
\RightLabel{$\begin{aligned}
  (&\text{quantitative} \\[-0.5ex]
  &\text{concatenation})
\end{aligned}$}
\BinaryInfC{$\begin{aligned}
\splitQ(f, g, \op)&: (\Sigma \times D)\kstar \pto D
\\
\rate(\splitQ(f, g, \op)) &= \rate(f) \odot \rate(g)
\\
\splitQ(f, g, \op)(u v) &= \op(f(u), g(v)),\ \text{if $u|_\Sigma \in \rate(f)$ and $v|_\Sigma \in \rate(g)$}
\hspace{-3em}
\end{aligned}$}
\DisplayProof
\\[4ex]
\AxiomC{$f: (\Sigma \times D)\kstar \pto D$}
\AxiomC{$c \in D$}
\AxiomC{$\op: D \times D \to D$}
\RightLabel{$\begin{aligned}
  (&\text{quantitative} \\[-0.5ex]
  &\text{iteration})
\end{aligned}$}
\TernaryInfC{$\begin{aligned}
\iterQ(f, c, \op) &: (\Sigma \times D)\kstar \pto D
\\
\rate(\iterQ(f, c, \op)) &= \rate(f)\kostar
\\
\iterQ(f, c, \op)(w_1 \cdots w_n) &=
\fold(\op, c, [f(w_1),\ldots,f(w_n)]),
\\ &\hspace{1.4em}
\text{if $w_i|_\Sigma \in \rate(f)$ for all $i=1,\ldots,n$}
\end{aligned}$}
\DisplayProof
\end{gathered}$
\caption{Some regular combinators for data transductions: output combination, choice, quantitative concatenation, and quantitative iteration.
Here, $\odot$ is \emph{unambiguous concatenation} and ${}\kostar$ is \emph{unambiguous iteration}, so that quantitative concatenation (resp., quantitative iteration) is defined only on input strings where the concatenation (resp., iteration) of the underlying rates $\rate(-)$ is unambiguous.
}
\label{table:combinators}
\end{table}

The work \cite{AFR2016QRE} introduces a family of regular combinators that captures a class of transductions that is defined in terms of CRAs whose registers use terms with variables instead of values. The StreamQRE language of \cite{MRAIK2017SQRE, AM2017SQRE} extends the combinators of Table~\ref{table:combinators} with additional constructs (such as streaming composition and key-based partitioning) that are useful in practical stream processing applications.

\subsection{Syntax and Semantics of CRAs}

A Cost Register Automaton (CRA) is a machine that maps data words, i.e.\ strings over an input alphabet of tagged values, to output values. It uses a finite-state control and a finite set of registers that contain values. At each step, the machine reads an input tag and an input value, updates its control state, and updates its registers using a parallel assignment. The definition of the machine is parameterized by the set of expressions that can be used in the assignments. For example, a CRA with increments can use multiple registers to compute alternative costs, perform updates of the form
$x := y + c$, where $x$ and $y$ are registers and $c$ is a constant, at each step, and commit to the cost computed in one of the registers at the end.

Let $X$ be a set of variables, and suppose that $\O$ is the family of available constants and operations on the data set $D$. The set of \emph{expressions} over $X$, denoted $\expr_\O[X]$ is defined by the following rules:
\begin{gather*}
\AxiomC{constant $c \in \O$}
\UnaryInfC{$c: \expr_\O[X]$}
\DisplayProof
\qquad
\AxiomC{variable $x \in X$}
\UnaryInfC{$x: \expr_\O[X]$}
\DisplayProof
\\[1ex]
\AxiomC{$n$-ary operation $\op \in \O$}
\AxiomC{$t_i: \expr_\O[X]$ for all $i=1,\ldots,n$}
\BinaryInfC{$\op(t_1,\ldots,t_n): \expr_\O[X]$}
\DisplayProof
\end{gather*}
A function $\alpha: X \to D$ is called a \emph{variable assignment}, and it extends uniquely to a homomorphism (i.e., operation-preserving mapping) $\hat\alpha: \expr_\O[X] \to D$. An expression $t \in \expr_\O[X]$ denotes a function $\sem{t}: D^X \to D$, defined as follows: for a variable assignment $\alpha: X \to D$, we put $\sem{t}(\alpha) = \hat\alpha(t)$. That is, $\sem{t}(\alpha)$ is the result of evaluating the variable-free expression that is obtained from $t$ by replacing each variable $x$ by $\alpha(x)$.

In the development that follows, we will also consider a special symbol $\val$, which should be assumed to be always distinct from all variables in $X$. The symbol $\val$ is meant to be used as a reference to the value of the current data item. An expression $t \in \expr_\O[X \cup \{\val\}]$ that potentially contains occurrences of $\val$ denotes a function $\sem{t}: D^X \times D \to D$, given by
\[
  \sem{t}(\alpha,d) = \hat\beta(t)
  \ \text{where $\beta = \alpha[\val \mapsto d]: X \cup \{\val\} \to D$}
\]
for every variable assignment $\alpha: X \to D$ and every value $d \in D$. We use the notation $\alpha[\val \mapsto d]$ to mean the extension of $\alpha$ that maps $\val$ to $d$.

\begin{definition}[CRA]
\label{def:CRA}
\normalfont
A (nondeterministic, copyful) \emph{Cost Register Automaton} (\emph{NCRA}) over the tag alphabet $\Sigma$, data values $D$, and data operations $\O$ is a tuple
\[
  \aut A =
  (Q,X,\Delta,I,F),
\]
where
\begin{itemize}
\item
$Q$ is a finite set of states,
\item
$X$ is a finite set of registers,
\item
$\Delta \subseteq Q \times \Sigma \times U_\O \times Q$ is the set of transitions with $U_\O$ the set of register updates $X \to \expr_\O[X \cup \{\val\}]$,
\item
$I: Q \pto (X \to \expr_\O[\emptyset])$ is the initialization function, and
\item
$F: Q \pto \expr_\O[X]$ is the finalization function.
\end{itemize}
The domain of $I$ is the set of \emph{initial states}, and the domain of $F$ is the set of \emph{final} or \emph{accepting states}. A CRA is said to be \emph{unambiguous} (resp., deterministic) if the underlying NFA is unambiguous (resp., deterministic). We use the abbreviations NCRA, UCRA, and DCRA to indicate that an automaton is nondeterministic, unambiguous, and deterministic respectively.

A CRA is said to be \emph{copyless} if it satisfies the following restrictions:
\begin{inparaenum}[(1)]
\item
for every transition $(p,a,\theta,q) \in \Delta$ and every register $x$, there is at most one occurrence of $x$ in the list of expressions $\theta(x_1), \ldots, \theta(x_n)$, where $x_1, \ldots, x_n$ is an enumeration of $X$, and
\item
for every final state $q$ and every register $x$, there is at most one occurrence of $x$ in the expression $F(q)$. Note that we do \emph{not} require that the special symbol $\val$ is only used once in each expression. This is reasonable because $\val$ represents the current value in the input data word, which changes at each computation step of the automaton. Therefore, a particular value in the input data word will only be used (copied) at most a constant number of times.
\end{inparaenum}

A CRA is said to be \emph{trim} if the satisfies the properties: (1) every state of the automaton is reachable from an initial state, and (2) some final state is reachable from every state.
\end{definition}

\paragraph{\bf\em Semantics of CRAs}
A CRA computes like a classical finite-state automaton, but the configuration consists of both a current state and a register assignment $X \to D$. A transition specifies the next state, as well as the update of the registers using their current values and the current data value from the input. For an input sequence $w = (a_1,d_1) \sspc (a_2,d_2) \sspc \cdots \sspc (a_n,d_n)$ in $(\Sigma \times D)\kstar$, we define a \emph{$w$-run} in $\aut A$ to be a sequence
\[
  (q_0,\alpha_0)
  \xlongrightarrow{(a_1,d_1)} (q_1,\alpha_1)
  \xlongrightarrow{(a_2,d_2)} (q_2,\alpha_2)
  \xlongrightarrow{(a_3,d_3)} \cdots
  \xlongrightarrow{(a_n,d_n)} (q_n,\alpha_n)
\]
with $q_i \in Q$ and $\alpha_i: X \to D$ for all $i$ so that the following hold:
\begin{enumerate}
\item
\emph{Initialization}:
$q_0$ is an initial state and $\alpha_0(x) = \sem{I(q_0)(x)}$ for every register $x \in X$.
\item
\emph{Transition}: for every transition $(p,\alpha) \to^{(a,d)} (q,\beta)$ of the run there is a register update function $\theta \in U_\O$ with $(p,a,\theta,q) \in \Delta$ s.t.
\[
  \beta(x) = \sem{\theta(x)}(\alpha,d)
  \ \text{for every register $x \in X$}.
\]
\item
\emph{Finalization}: $q_n$ is a final state.
\end{enumerate}
The \emph{value} of the run is $\sem{F(q_n)}(\alpha_n)$. A nondeterministic CRA implements a multi-valued transduction $ (\Sigma \times D)\kstar \to \mathcal{P}(D)$, where $\mathcal{P}$ is the powerset operator. The value of this transduction on an input sequence $w$ is the set of values of all $w$-runs. An unambiguous CRA has at most one run for each input sequence, and therefore implements a transduction $(\Sigma \times D)\kstar \pto D$. We write $\sem{\aut A}$ to denote the transduction that an automaton $\aut A$ implements.

As for classical NFAs, we can think that the computation of a CRA proceeds by placing \emph{tokens} on the states. A token on a state $q$ indicates that there is a run from an initial state to $q$ after consuming the given input prefix. When a state has a token on it we say that it is \emph{active}. The computation starts by placing a token on each of the initial states. Every time an item is consumed, each token is transformed into a set of new tokens according to the transition relation. In the case of plain NFAs, a token carries no information. For CRAs, on the other hand, each token carries a register assignment $X \to D$. For this reason, it is possible to have more than one token per state in the case of nondeterministic (ambiguous) CRAs. For unambiguous trim CRAs, at any time during the computation there can be at most one token per state.

\begin{definition}[Streamable Regular Transductions]
\label{def:streamable}
\normalfont
Fix a tag alphabet $\Sigma$, a set $D$ of values, and a family $\O$ of operations over $D$. We define the class of \emph{streamable regular transductions} (over $\O$), denoted $\SR(\O)$, to be the class of data transductions that can be computed by some UCRA (over $\O$). Similarly, we define the class of \emph{streamable linear regular transductions}, denoted $\SLR(\O)$, to be the class of data transductions that can be computed by some copyless UCRA. Table~\ref{table:classesCRA} summarizes these definitions.
\end{definition}

\begin{table}[]
\centering
\begin{tabular}{c|c}
Unambiguous CRA & Class of transductions
\\ \hline
copyless
& $\SLR$: Streamable Linear Regular
\\
copyful
& $\SR$: Streamable Regular
\end{tabular}
\caption{Classes of transductions computed by unambiguous CRAs.}
\label{table:classesCRA}
\end{table}

\subsection{Examples}
\label{subsec:examples}

We present in this subsection several examples that illustrate how CRAs compute. We consider both deterministic and unambiguous variants, and we give examples of copyless and copyful (i.e., unrestricted) register updates. Whenever convenient, we will allow the use of $\eps$-transitions for the unambiguous variants. This is w.l.o.g.\ because $\eps$-transitions can be eliminated with a straightforward variant of the $\eps$-elimination procedure for classical finite-state automata (assuming there are no $\eps$-cycles). 

\begin{example}[Copyless DCRA]
\label{ex:copylessDCRA}
\normalfont
Suppose the tag alphabet is $\Sigma = \{ a,b \}$, the type of values is $D = \N$ (the set of nonnegative integers), and $\O$ consists of the constant $0$ and the binary addition operation. Consider the following transduction $f: (\Sigma \times D)\kstar \pto D$, which is defined on all nonempty sequences. If a sequence ends with an $a$-labeled value, then $f$ outputs the sum of all $a$-labeled values in the sequence. Similarly, if a sequence ends with a $b$-labeled value, then $f$ outputs the sum of all $b$-labeled values in the sequence. Then, $f$ is implemented by the following copyless CRA:
\begin{gather*}
\begin{tikzpicture}[node distance=3cm, ->]
\small
\node (i) {};
\node [state, below of=i, node distance=1.25cm] (p) {$p$};
\node [below of=p, node distance=2cm] (p2) {};
\node [state, accepting, rounded rectangle, left of=p2] (qa) {$q_a \mid x$};
\node [state, accepting, rounded rectangle, right of=p2] (qb) {$q_b \mid y$};
\path (i) edge node[right] {$\theta$} (p);
\path (p) edge node[above, yshift=1ex] {$a \mid \theta_a$} (qa);
\path (p) edge node[above, yshift=1ex] {$b \mid \theta_b$} (qb);
\path (qa) edge[loop left] node[left] {$a \mid \theta_a$} (qa);
\path (qb) edge[loop right] node[right] {$b \mid \theta_b$} (qb);
\path (qa) edge[bend left=10] node[above] {$b \mid \theta_b$} (qb);
\path (qb) edge[bend left=10] node[below] {$a \mid \theta_a$} (qa);
\end{tikzpicture}
\end{gather*}
where the initialization $\theta$ and the register updates $\theta_a$, $\theta_b$ are defined as follows:
\begin{align*}
\theta &= \begin{cases}
  x := 0 \\
  y := 0
\end{cases}
&
\theta_a &= \begin{cases}
  x := x + \val \\
  y := y
\end{cases}
&
\theta_b &= \begin{cases}
  x := x \\
  y := y + \val
\end{cases}
\end{align*}
In the diagram, we have indicated the initialization function with the arrow labeled with $\theta$, and the finalization function is given by the annotations $x$ and $y$ to the two accepting states (shown with a doubly circled border).

The register $x$ holds the sum of all $a$-labeled values seen so far, and the register $y$ holds the sum of all $b$-labeled values seen so far. The state $p$ is active only upon initialization, and $q_a$ (resp., $q_b$) is active when the input sequence ends with an $a$-labeled (resp., $b$-labeled) value. The CRA is deterministic. Moreover, it is copyless, since every register is used at most once in the right-hand side of the updates $\theta_a$ and $\theta_b$.
\end{example}

\begin{example}[Copyless UCRA]
\label{ex:copylessUCRA}
\normalfont
In Example~\ref{ex:copylessDCRA}, it was shown how to implement the transduction $f$ using a copyless DCRA with two registers. We will show now how to implement the same transduction using a copyless UCRA with one register. Informally, the idea is to employ unambiguous nondeterminism in order to guess whether the input data word will end with the tag $a$ or the tag $b$. We assume additionally that $\eps$-transitions are allowed.
\begin{gather*}
\begin{tikzpicture}[node distance=3.5cm, ->]
\footnotesize
\node (i) {};
\node [state, below of=i, node distance=1.25cm] (p) {$p$};
\node [below of=p, node distance=1.25cm] (p2) {};
\node [state, left of=p2, node distance=2.5cm] (pa) {$p_a $};
\node [state, accepting, rounded rectangle, left of=pa] (qa) {$q_a \mid x$};
\node [state, right of=p2, node distance=2.5cm] (pb) {$p_b$};
\node [state, accepting, rounded rectangle, right of=pb] (qb) {$q_b \mid x$};
\path (i) edge node[right] {$x := 0$} (p);
\path (p) edge node[below, pos=0.3, yshift=-1.5ex] {$\eps \mid x := 0$} (pa);
\path (p) edge node[below, pos=0.3, yshift=-1.5ex] {$\eps \mid x := 0$} (pb);
\path (pa) edge[loop above] node[above] {$a \mid x := x + \val$} (pa);
\path (pa) edge[loop below] node[below] {$b \mid x := x$} (pa);
\path (pa) edge node[above] {$a \mid x := x + \val$} (qa);
\path (pb) edge[loop above] node[above] {$b \mid x := x + \val$} (pb);
\path (pb) edge[loop below] node[below] {$a \mid x := x$} (pb);
\path (pb) edge node[above] {$b \mid x := x + \val$} (qb);
\end{tikzpicture}
\end{gather*}
The register $x$ holds the sum of all $a$-labeled values at the states $p_a$ and $q_a$, and the sum of all $b$-labeled values at the states $p_b$ and $q_b$. The automaton is copyless, since the register $x$ is used at most once in the right-hand side of the updates.
\end{example}

\begin{example}[Copyless DCRA]
\label{ex:twoLevels}
\normalfont
Suppose the tag alphabet is $\Sigma = \{ a,\# \}$, the type of values is $D = \N$, and $\O$ consists of the constant $0$, the binary addition operation, and $\max$. Consider the following transduction $f: (\Sigma \times D)\kstar \pto D$, whose rate is given by the regex $(a\kplus \cdot \#)\kstar$. A maximal subsequence of the input that is of the form $a a \ldots a \#$ is called a \emph{block}. The transduction $f$ outputs the maximum cost over all input blocks, where the cost of a block is the sum of the $a$-labeled values. This is implemented by the following two-register copyless CRA:
\begin{gather*}
\begin{tikzpicture}[node distance=4cm, ->]
\small
\node (i) {};
\node [state, accepting, rounded rectangle, right of=i, node distance=1.25cm] (p) {$p \mid y$};
\node [state, right of=p] (q) {$q$};
\path (i) edge node[above] {$\theta$} (p);
\path (p) edge[bend left=12] node[above] {$a \mid \theta_a$} (q);
\path (q) edge[bend left=12] node[below] {$\# \mid \theta_\#$} (p);
\path (q) edge[loop right] node[right, xshift=1ex] {$a \mid \theta_a$} (q);
\end{tikzpicture}
\end{gather*}
where the initialization $\theta$ and the register updates $\theta_a$, $\theta_\#$ are given by:
\begin{align*}
\theta &= \begin{cases}
  x := 0 \\
  y := 0
\end{cases}
&
\theta_a &= \begin{cases}
  x := x + \val \\
  y := y
\end{cases}
&
\theta_\# &= \begin{cases}
  x := 0 \\
  y := \max(y,x)
\end{cases}
\end{align*}
The register $x$ holds the sum of all $a$-labeled values in the current block, and the register $y$ holds the maximum cost over all complete blocks seen so far. The CRA is deterministic and copyless (every register is used at most once in the right-hand side of the updates $\theta_a$ and $\theta_\#$).
\end{example}

\begin{example}[Copyful UCRA]
\label{ex:copyfulUCRA}
\normalfont
Suppose the tag alphabet is $\Sigma = \{ a,b \}$, the type of values is $D = \N$, and $\O$ consists of the constant $0$ and the binary operations $\max$ and $\ominus$, where the latter is natural number subtraction, sometimes called ``monus'', and defined as $x \ominus y = \max(x - y, 0)$ for all $x, y \in \N$. Consider the following transduction $f: (\Sigma \times D)\kstar \pto D$, which is defined on sequences that contain at least one $a$-labeled value. For these sequences, the value of $f$ is the \emph{maximum drawdown} (the maximum loss from a peak to a trough) in the input signal after the last occurrence of a $b$-labeled value. The transduction $f$ is implemented by the following copyful CRA:
\begin{gather*}
\begin{tikzpicture}[node distance=3.5cm, ->]
\small
\node (i) {};
\node [state, right of=i, node distance=1.25cm] (p) {$p$};
\node [state, accepting, rounded rectangle, right of=p] (q) {$q \mid y$};
\path (i) edge node[above] {$\theta_0$} (p);
\path (p) edge node[above] {$b \mid \theta_0$} (q);
\path (p) edge[loop above] node[left, xshift=-1ex] {$a,b \mid \theta_0$} (p);
\path (q) edge[loop above] node[right, xshift=1ex] {$a \mid \theta$} (q);
\end{tikzpicture}
\end{gather*}
where the initialization $\theta_0$ and the register update $\theta$ are defined as follows:
\begin{align*}
\theta_0 &= \begin{cases}
  x := 0 \\
  y := 0
\end{cases}
&
\theta &= \begin{cases}
  x := \max(x, \val) \\
  y := \max(y, \max(x, \val) \ominus \val)
\end{cases}
\end{align*}
The numerical computation on the values concerns the part of the input after the last occurrence of a $b$-labeled value. The register $x$ holds the maximum value, and the register $y$ holds the maximum drawdown. The state $p$ is always active, and $q$ is active when at least one $b$-labeled value is seen in the input. The CRA is unambiguous. Moreover, it is not copyless, since the register $x$ is used twice in the right-hand sides of the register update $\theta$.
\end{example}

\begin{example}[CRAs with holes]
\normalfont
Suppose the tag alphabet is $\Sigma = \{ a \}$, the type of values is $D = \Q$ (the set of rational numbers), and $\O$ consists of the constant $0$ and the binary operation $\op$ given by $\op(x,y) = \lambda \cdot x + y$, where $\lambda$ is a fixed rational constant in the open interval $(0,1)$. The transduction $f$ maps a data word $w$ with $w|_D = d_1 d_2 \ldots d_n \in D\kstar$ to $\lambda^{n-1} \cdot d_1 + \cdots + \lambda \cdot d_{n-1} + d_n$. We write $w|_D$ to denote the elementwise projection of the data word to the value component. The transduction can be computed by the following CRA:
\begin{gather*}
\begin{tikzpicture}[node distance=4cm, ->]
\small
\node (i) {};
\node [state, accepting, rounded rectangle, right of=i, node distance=2.5cm] (q) {$p \mid x$};
\path (i) edge node[above] {$x := 0$} (q);
\path (q) edge[loop right] node[right, xshift=1ex] {$a \mid x := \op(x, \val) = \lambda \cdot x + \val$} (q);
\end{tikzpicture}
\end{gather*}
Define now the transduction $g$ as $g(w_1 w_2 \ldots w_n) = f(w_n \ldots w_2 w_1)$. The transduction $g$ cannot be computed by a CRA (over $\O$), because the registers can only hold values. However, if the registers were allowed to hold terms with a hole $\Box$ (a placeholder that can be substituted later in the computation), then $g$ could be expressed.
\begin{gather*}
\begin{tikzpicture}[node distance=4cm, ->]
\small
\node (i) {};
\node [state, accepting, rounded rectangle, right of=i, node distance=3cm] (q) {$p \mid x[0/\Box]$};
\path (i) edge node[above] {$x := \Box$} (q);
\path (q) edge[loop right] node[right, xshift=1ex] {$a \mid x := x[\op(\Box,\val)/\Box]$} (q);
\end{tikzpicture}
\end{gather*}
In the machine shown above the register is meant to hold a term (with potential occurrences of $\Box$), and the operation $x[t/\Box]$ denotes the result of substituting $t$ for $\Box$ in the term that $x$ holds. For example, if the input is $w = (a,d_1) \sspc (a,d_2) \sspc (a,d_3)$, then the successive contents of the register $x$ (after 3 steps of computation) are shown below:
\begin{align*}
\Box &&
\op(\Box,d_1) &&
\op(\op(\Box,d_2),d_1) &&
\op(\op(\op(\Box,d_3),d_2),d_1)
\end{align*}
So, the ability to use the holes $\Box$ and substitute terms into it provides additional computation power. The disadvantage of using terms is that in some cases their size can grow linearly in the size of the input.
\end{example}

\subsection{Choice of Operations on Data Values}
\label{sec:choiceOps}

The definition of CRAs is parameterized by the choice of operations that are allowed when updating the registers. We now emphasize that seemingly inconsequential variations of the available operations can change what can be computed by a CRA. Suppose that $M = (D,\cdot,1)$ is a monoid, i.e.\ $\cdot$ is an associative binary operation on $D$ and $1$ is a left and right identity for the $\cdot$ operation. Define $\O_M$ to be the family of operations that contains $1$ and the binary operation $\cdot$, and $\O^\unary_M$ to contain the identity $1$ and the unary operation $(- \cdot d)$ for every $d \in D$. (Note that we include only unary multiplication on the right, not on the left---this is not crucial for the present discussion, but will be necessary for a later result about $\O^\unary_M$, Theorem~\ref{thm:UWA}.) The following lemma applies to $\O^\unary_M$:

\begin{lemma}
\label{lemma:monoidWeak}
\normalfont
Suppose that $\O$ consists only of unary operations and constants.
Then the class of transductions $\SR(\O)$ is equal to $\SLR(\O)$.
\end{lemma}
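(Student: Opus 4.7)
The plan is to show $\SR(\O) \subseteq \SLR(\O)$, since the reverse inclusion is immediate. The key observation is that when $\O$ consists only of constants and unary operations, every expression in $\expr_\O[X \cup \{\val\}]$ is either a constant, a unary chain $g(\val)$, or a unary chain $g(x_j)$ applied to a single register. Consequently every update $\theta(x_i)$ has at most one ``register source'' $\sigma(i) \in X$ (undefined when $\theta(x_i)$ does not mention a register, a case I will call a \emph{restart}), and every finalization $F(q) \in \expr_\O[X]$ is either a constant or of the form $h(x_i)$ for some chain $h$ and a unique register $x_i$.

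Given a copyful UCRA $\aut{A} = (Q, X, \Delta, I, F)$, I build a copyless UCRA $\aut{B}$ with a single register $y$ and states $Q \times (X \cup \{\star\})$: a state $(q, i)$ means ``we are currently tracking the value of $x_i$ in $y$,'' while $(q, \star)$ means ``we have not started tracking yet.'' From $(p, \star)$ on input $a$, $\aut{B}$ may either remain in $\star$-mode via $y := y$, or, for any $i$ such that $\theta(x_i)$ is a restart, transition to $(q, i)$ with $y := \theta(x_i)$. From $(p, i)$ on input $a$, $\aut{B}$ transitions to $(q, i')$ for every $i'$ with $\theta(x_{i'}) = g(x_i)$, performing $y := g(y)$. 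The initial states are $(q_0, \star)$ and $(q_0, i)$ for each initial $q_0$ and each register $i$, with $y$ initialized to $I(q_0)(x_i)$. A state $(q, i)$ is accepting iff $q$ is accepting in $\aut{A}$ and $F(q) = h(x_i)$, producing $h(y)$; a state $(q, \star)$ is accepting iff $F(q)$ is a constant, producing $F(q)$. Copylessness is manifest: $y$ appears at most once in each update and in each finalization expression.

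The hard part is verifying unambiguity. Fix an input $w$; by unambiguity of $\aut{A}$ there is a unique run $q_0, \ldots, q_T$ with uniquely determined updates $\theta^{(1)}, \ldots, \theta^{(T)}$. If $F(q_T)$ is a constant, the only accepting $\aut{B}$-run stays in $\star$-mode throughout. Otherwise $F(q_T) = h(x_{i_T})$, and I perform the \emph{backward trace}: set $i_{t-1} := \sigma^{(t)}(i_t)$ as long as $\theta^{(t)}(x_{i_t})$ has a register source, and let $s$ be the largest index at which $\theta^{(s)}(x_{i_s})$ is a restart (or $s = 0$ if there is no such index). Because $\sigma^{(t)}$ is a function of its target, the indices $i_T, i_{T-1}, \ldots, i_s$ are uniquely determined, and the unique candidate accepting run of $\aut{B}$ is $(q_0, \star), \ldots, (q_{s-1}, \star), (q_s, i_s), \ldots, (q_T, i_T)$, with $y$ storing exactly the value of $x_{i_t}$ in $\aut{A}$ at each step $t \geq s$. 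Any alternative accepting run must end at $(q_T, i_T)$ and, travelling backward through its tracking suffix, would still be forced by $\sigma^{(t)}$ to match the unique backward trace; an earlier commitment to tracking or an alternate restart point would fail to propagate to $i_T$, since at step $s$ the true target has a restart update while any alternate target does not, breaking the source-matching condition for forward tracking. Hence $\aut{B}$ has exactly one accepting run on $w$, producing the same output as $\aut{A}$.
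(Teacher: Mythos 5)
Your construction is essentially the paper's own: the paper likewise builds a copyless UCRA with state space $Q \times (X \cup \{\bot\})$ and a single register, using unambiguous nondeterminism to guess which register of the copyful machine will flow into the output. Your proof is correct, and the backward-trace argument for unambiguity carefully fills in a verification step that the paper's proof leaves implicit.
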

\begin{proof}
It suffices to see that a copyful UCRA over $\O$ can be simulated by a copyless UCRA over $\O$. The key observation is that in a copyful UCRA with only unary operations, at most one register can contribute to the final output, because there is no way to combine multiple registers. So at every step of the computation, the machine can guess the register $x$ that will contribute to the final output (using unambiguous nondeterminism), and proceed by updating only $x$. In particular, if the copyful UCRA over $\O$ has states $Q$ and registers $X$, we construct a copyless UCRA over $\O$ with states $Q \times (X \cup \{\bot\})$ and only one register, where the single register holds the value of the register $x$ that we guess will contribute to the final output, and the second component of the state tracks which register it is in the original machine---or $\bot$ if we guess that none of the current registers will contribute to the output. Given any UCRA over $\O$, this construction produces an equivalent copyless UCRA with only one register.
\end{proof}

The above lemma shows that if $M = (D,\cdot,1)$ is a monoid, then the class of transductions $\SR(\O^\unary_M)$ is equal to $\SLR(\O^\unary_M)$. However, in general, the class of transductions $\SR(\O)$ may be strictly larger than $\SLR(\O)$. In particular, consider the family of operations $\O_M$, where $M = (D,\cdot,1)$ is the free monoid generated by the alphabet $\Sigma$. That is, $D$ is the set of finite words over $\Sigma$, $\cdot$ is word concatenation, and $1$ is the empty word. For all transductions on $\SLR(\O_M)$, the output is of size linear in the size of the input. This property does not hold, however, for the transductions of $\SR(\O_M)$. The copyful single-state single-register CRA
\begin{gather*}
\begin{tikzpicture}[->]
\small
\node (i) {};
\node [state, accepting, rounded rectangle, right of=i, node distance=2.5cm] (q) {$q \mid x$};
\path (i) edge node[above] {$x := a$} (q);
\path (q) edge[loop right] node[right] {$a,b \mid x := x \cdot x$} (q);
\end{tikzpicture}
\end{gather*}
emits outputs that are of size exponential in the input. Lemma~\ref{lemma:monoidWeak} and these observations establish that the precise choice of $\O$ can affect the class of transductions that can be computed.

Another important point about the set $\O$ of operations is that it can incorporate tests on values. The definition of CRAs does not allow tests on registers as guards on the transitions, which means that the finite control of the automaton depends only on the observed sequence of input tags. It is possible, however, to compute a transduction such as
\[
  f(w) = \begin{cases}
    \text{sum of $a$-tagged values}, &\text{if $|w|_a = |w|_b$} \\
    \text{sum of $b$-tagged values}, &\text{if $|w|_a \neq |w|_b$}
  \end{cases}
\]
(where $|w|_a$ is the number of occurrences of the tag $a$ in the data word $w$) by including in $\O$ an \emph{if-then-else} operation such as:
\[
  \ITE(x, y, z, w) = \begin{cases}
    z, &\text{if $x = y$} \\
    w, &\text{if $x \neq y$}
  \end{cases}
\]

\subsection{Unambiguity Versus Determinism}

In this section we explore the relationship between unambiguous and deterministic CRAs. The first result is a kind of subset construction for transforming a copyful UCRA to an equivalent copyful DCRA. This means that unambiguous nondeterminism is not essential when the automata are allowed to copy values. In the case of copyless CRAs, however, we will see that unambiguity is essential. That is, there are data transductions that can be computed by copyless UCRAs but not by copyless DCRAs.

To get an intuitive understanding of why copyless DCRAs are weaker, suppose that a copyless DCRA has a register $x$ which can contribute to the output in two different ways, depending on some regular property of the future input. Because the automaton is not allowed to copy the value of $x$, it cannot commit to any of the two different eventualities in order to update $x$ appropriately. If this situation can only arise a fixed number of times in the computation, a solution can be given by blowing up the number of registers: create from the outset two copies of the register $x$, one for each possible eventuality and update them separately. We will see, however, that there are examples where this situation can arise an arbitrary number of times. Such examples witness the power of unambiguity in the model of copyless CRAs.

\begin{theorem}[\bf Copyful UCRA to Copyful DCRA]
\label{thm:UCRAtoDCRA}
\normalfont
For any $\O$, every copyful unambiguous CRA over $\O$ is equivalent to a copyful deterministic CRA over $\O$.
\end{theorem}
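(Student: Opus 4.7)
The plan is to generalize the classical subset construction for NFAs to UCRAs, using copyfulness to maintain a full register snapshot for each of the potentially many active tokens in parallel. Given $\aut A = (Q,X,\Delta,I,F)$, I would first replace $\aut A$ by an equivalent trim UCRA (removing unreachable and dead states); this does not change the transduction. Then I would build a DCRA $\aut B$ whose state set is $\mathcal{P}(Q)$ (or the reachable subsets), and whose register set is $X \times Q$, thought of as one private copy $X_q = \{x_q \mid x \in X\}$ of $X$ for every $q \in Q$. Intuitively, when $\aut B$ reaches state $S \subseteq Q$ after reading a prefix $u$, the intended invariant is that for every $q \in S$ the assignment $x \mapsto x_q$ stores exactly the register values of the (unique) $u$-run of $\aut A$ that ends in $q$; for $q \notin S$ the values are irrelevant.

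Transitions of $\aut B$ on tag $a$ from $S$ go to $S' = \{q' \mid \exists q \in S,\ (q,a,\theta,q') \in \Delta\}$. The central step is to define the simultaneous update of the $X \times Q$ registers. Here I would use the following \emph{unique predecessor} lemma: since $\aut A$ is trim and unambiguous, whenever $q,q'' \in S$ both have $a$-transitions into the same state $r$, we would obtain two distinct accepting runs on some completion of $ua$, a contradiction. Consequently, for each $q' \in S'$ there is a unique $q \in S$ and a unique $\theta$ with $(q,a,\theta,q') \in \Delta$; in $\aut B$ I set the update of $x_{q'}$ to be $\theta(x)$ with each register $y$ syntactically replaced by $y_q$ (and $\val$ preserved). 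This is a well-defined deterministic register update in $\expr_\O[(X\times Q) \cup \{\val\}]$, and it clearly uses only operations from $\O$. Note that copylessness is not preserved (in general $y_q$ may be referenced by several different successors $q'$), but that is fine: the theorem concerns the copyful setting.

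For initialization, I declare the unique initial state of $\aut B$ to be $\dom(I) \subseteq Q$ and set $x_q := I(q)(x)$ for each $q \in \dom(I)$ (using the constant $0$ or any fixed constant for $q \notin \dom(I)$). For finalization, the accepting states of $\aut B$ are those $S$ that meet $\dom(F)$; by unambiguity applied to the prefix realized at that point, $S \cap \dom(F)$ contains at most one state $q$ (two final states in $S$ would yield two accepting runs of $\aut A$ on the same input). The output is $F(q)$ with each register $y$ replaced by $y_q$.

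Finally I would verify the invariant by induction on the length of the input: the initialization clause gives the base case, and the induction step follows from the unique-predecessor lemma together with the definition of the updates, which copy exactly the contents of the predecessor's private register bank through $\theta$. Combined with the finalization clause, this yields $\sem{\aut B}(w) = \sem{\aut A}(w)$ for every $w$. The main conceptual obstacle is precisely the unique-predecessor property; once that is justified via trimness and unambiguity, the rest of the construction is a straightforward bookkeeping exercise analogous to the subset construction for NFAs.
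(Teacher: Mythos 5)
Your construction is essentially identical to the paper's own proof: the same subset construction with register set $Q \times X$, the same unique-predecessor argument derived from trimness plus unambiguity, the same at-most-one-final-state observation for finalization, and the same invariant verified by induction. The only detail the paper adds is an explicit up-front disposal of the degenerate case where $\O$ contains no constants (so that registers for non-initial states can be initialized to something); your parenthetical ``using the constant $0$ or any fixed constant'' silently assumes one exists, but this edge case is trivial since without constants the initialization function must be everywhere undefined and the transduction is empty.
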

\begin{proof}
The result holds trivially when $\O$ does not contain any constants. In this case, the initialization function then has to be undefined and hence the automaton can only implement the transduction that is undefined for every input, which can be implemented by a DCRA. So, we assume w.l.o.g. that $\O$ contains at least one constant. We will use this assumption in the rest of the proof in order to set registers to arbitrary constant values.

Let $\aut A = (Q,X,\Delta,I,F)$ be an arbitrary copyful unambiguous CRA. W.l.o.g.\ we assume that $\aut A$ is trim. We define the deterministic CRA $\aut B$ using a modified subset construction. The automaton $\aut B$ has registers $Q \times X$ and its state space is equal to
\[
  \{ \Delta(\dom(I),w) \mid w \in \Sigma\kstar \},
\]
where $\Delta(P,w)$ for $P \subseteq Q$ is the set of states reachable from a state in $P$ via a path that is labeled with the word $w$.

The initial state of $\aut B$ is equal to the set of initial states of $\aut A$, and a register $(q,x)$ of $\aut B$ is initialized to $I(q)(x)$ if $q$ is initial. The initialization of the rest of the registers can be chosen arbitrarily.

A state $P$ of $\aut B$ is final if $P$ contains a final state $q$ of $\aut A$. Since $\aut A$ is unambiguous, each state of $\aut B$ contains at most one final state of $\aut A$. The finalization term for $P$ is defined to be the term that results from $F(q)$ by substituting $(q,x)$ for every register $x$.

Consider now a state $P$ of $\aut B$ and a tag $a \in \Sigma$. Then, $R = \Delta(P,a) = \tbigcup_{p \in P} \Delta(p,a)$ is also a state of $\aut B$. We include a unique transition $P \to^a R$ in $\aut B$, which ensures that $\aut B$ is deterministic. It remains to specify the register update function $\eta$ for this transition. Consider an arbitrary $q \in R$. Since $\aut A$ is unambiguous and trim, there is a unique $p \in P$ with $q \in \Delta(p,a)$. For a register $(q, x)$ of $\aut B$ we put
\[
  \eta(q, x) =
  \text{the term resulting from $\theta(x)$ by replacing every register $y$ by $(p,y)$},
\]
where $\theta$ is the register update function of the unique transition $p \to^{a,\theta} q$ in the automaton $\aut A$. The update for registers not covered by the above description can be set arbitrarily.

In order to show that $\aut B$ is equivalent to $\aut A$, it suffices to establish that on any input, the sole active state of $\aut B$ encodes a copy of the registers of $\aut A$ for every active state of $\aut A$.
\end{proof}

Theorem~\ref{thm:UCRAtoDCRA} establishes the expressive equivalence of copyful UCRAs and copyful DCRAs. Now, we will show (Theorem~\ref{thm:separation}) that copyless UCRAs are strictly more expressive than copyless DCRAs (for a fixed family of data operations). To establish this separation result, it suffices to focus on CRAs that do not make use of the input data values in their computations. We say that a CRA is \emph{value-oblivious} if its register update function contains no occurrence of the $\val$ symbol. In this case the denotation of the CRA is a value-oblivious transduction, since only the input tags are used and the input data values do not affect the computation. For the rest of this subsection, we will be dealing exclusively with value-oblivious CRAs. For this reason, a data transduction is taken here to be a function of type $\Sigma\kstar \pto D$, and a CRA is implicitly assumed to not contain any occurrence of the $\val$ symbol. For the special case where $D = \Gamma\kstar$ for some finite alphabet $\Gamma$, a transduction $\Sigma\kstar \pto D$ is also called a \emph{string transduction} (the term \emph{word transduction} is often used in the relevant literature).

\newcommand{\f}{\mathsf{f}}
As an example of this, consider the following string transduction $\f: \Sigma\kstar \pto \Sigma\kstar$ with domain $\dom(\f) = \{ a,b \}\kplus \#$, where the alphabet is $\Sigma = \{ a,b,\# \}$:
\[
  \f(u \#) = \begin{cases}
    a^i \#, &\text{if $|u| = i$ and $u$ ends with $a$}
    \\
    b^i \#, &\text{if $|u| = i$ and $u$ ends with $b$}
  \end{cases}
\]
for all $u \in \{a,b\}\kplus$. We assume that the only data operation that is allowed is to concatenate string constants to the registers. An unambiguous CRA can compute this transduction with a single register by guessing the letter that will precede the $\#$ symbol. A deterministic CRA, on the other hand, requires at least two registers: one register for the case where the input ends in $a \#$, and one register for the case where the input ends in $b \#$.

This situation extends to the case where the computation of $\f$ is iterated several times. That is, let us define the string transduction $\f^k: \Sigma\kstar \pto \Sigma\kstar$ that iterates $\f$ sequentially $k$ times. The domain of $\f^k$ is equal to $\dom(\f^k) = (\{a,b\}\kplus \#)^k$, and $\f^k$ is given by
\[
  \f^k(u_1 \# \cdots u_k \#) = \f(u_1 \#) \cdots \f(u_k \#)
\]
for all $u_1, \ldots, u_k \in \{a,b\}\kplus$. In order to compute $\f^k$, an unambiguous CRA requires just one register: for every block of the input (where a block is a sequence of letters that ends with a $\#$ symbol), the automaton guesses the last letter of the block and extends the unique register with the appropriate letter. As we will show formally, however, a copyless DCRA requires at least $2^k$ registers: a separate register is needed for each possible combination of ending letters for the $k$ input blocks. For example, if $k = 2$ then there are $2^k = 4$ different cases for the form of the input strings in the domain of $\f^k$:
\begin{align*}
\ldots a \# \ldots a \#
&&
\ldots a \# \ldots b \#
&&
\ldots b \# \ldots a \#
&&
\ldots b \# \ldots b \#
\end{align*}
A copyless DCRA can only compute $\f^k$ by using one separate register for each one of these $2^k$ possibilities.

In the previous paragraph, we discussed how the iteration of $\f$ a fixed number $k$ of times creates the need for $2^k$ registers in the model of copyless DCRAs. An immediate consequence of that is that in this model, the iteration of $\f$ a non-constant number of times cannot be computed with a finite number of registers. This will establish the result that UCRAs are more expressive than DCRAs in the copyless case. Formally, the result is rather involved and requires a sequence of constructions on CRAs:
\begin{enumerate}
\item
Every CRA over \emph{unary} data operations is equivalent to the disjoint union of single-register CRAs.
\item
A copyless DCRA can be modified in a way that suppresses a prefix of the output.
\item
If a string transduction $g$ requires at least $k$ registers to be computed by a copyless DCRA, then the string transduction that computes $\f$ (defined previously) and then $g$ in sequence requires at least $2k$ registers in this model.
\end{enumerate}
Lemma~\ref{lemma:numberOfRegisters} is the heart of the argument. It establishes that the computation of the sequential iteration $\f^k$ (defined previously) requires at least $2^k$ registers in the model of copyless DCRAs.

We start the technical development with Lemma~\ref{lemma:simplification} below, which describes a construction for simplifying CRAs that use only unary operation on the values. For this special case, it is shown that the register updates are w.l.o.g.\ of the form $x := \delta(c)$ or $x := \delta(x)$, where $x$ is a register, $c$ is a constant, and $\delta$ is a unary operation. In other words, the CRA can be decomposed into the disjoint union of several single-register CRAs.

\begin{lemma}[Separable Updates]
\label{lemma:simplification}
\normalfont
Suppose that the collection of operations $\O$ consists only of unary operations and constants. Then, every copyless NCRA (resp., UCRA/DCRA) over $\O$ is equivalent to some copyless NCRA (resp., UCRA/DCRA) over $\O$ (with the same set of registers) whose register updates are of one of the following forms: $x := \delta(c)$ or $x := \delta(x)$, where $c$ is a constant and $\delta$ is the composition of unary operations.
\end{lemma}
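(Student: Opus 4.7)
The plan is to augment the state space of the given CRA with a bijection $\sigma : X \to X$ that records which new register currently holds the value of each original register. Because every operation in $\O$ is unary and every update is copyless, each ``copy'' $x := \delta(y)$ in the original machine can be absorbed into this renaming: the new machine keeps the data in place (in register $\sigma(y)$) and merely re-applies $\delta$ there, producing an update of the form $z := \delta(z)$. Updates whose source is a constant become updates $z := \delta(c)$, which is already of the allowed form.

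More concretely, let $\aut{A} = (Q, X, \Delta, I, F)$ be the given copyless CRA. Since we are in the value-oblivious setting, $\expr_\O[X \cup \{\val\}] = \expr_\O[X]$, and since $\O$ contains only unary operations and constants, every update $\theta(x)$ has the form $\delta_x(y_x)$ for some register $y_x \in X$ or $\delta_x(c_x)$ for some constant $c_x$, where $\delta_x$ is a (possibly empty) composition of unary operations. Let $X_r$ (resp.\ $X_c$) be the set of registers whose source under $\theta$ is a register (resp.\ a constant). Copylessness says that $x \mapsto y_x$ is injective on $X_r$. I build a new CRA $\aut{A}'$ with the same register set $X$, state space $Q \times \mathrm{Sym}(X)$, initial states $(q,\mathrm{id})$ for $q$ initial in $\aut{A}$, and final states $(q,\sigma)$ for $q$ final in $\aut{A}$. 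The intended invariant is that a new configuration $((p,\sigma),\beta)$ simulates the original configuration $(p,\alpha)$ whenever $\alpha(x) = \beta(\sigma(x))$ for every $x \in X$.

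For each transition $p \to^{a,\theta} q$ in $\aut{A}$ and each $\sigma \in \mathrm{Sym}(X)$, I create a new transition $(p,\sigma) \to^{a,\eta} (q,\sigma')$ as follows. On $X_r$, set $\sigma'(x) = \sigma(y_x)$; this is injective by copylessness, and its image in $X$ has size $|X_r|$, so the complementary set has size $|X_c|$ and $\sigma'$ can be extended to $X_c$ via any canonical bijection onto this complement, producing a genuine permutation of $X$. Define $\eta$ by: for each $z = \sigma'(x)$ with $x \in X_r$, put $\eta(z) = \delta_x(z)$ (using that $z = \sigma(y_x)$ already holds the original value of $y_x$); for each $z = \sigma'(x)$ with $x \in X_c$, put $\eta(z) = \delta_x(c_x)$. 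Initialization at $(q,\mathrm{id})$ just reuses $I(q)(x)$, which is a closed expression of the form $\delta(c)$, and finalization at $(q,\sigma)$ is obtained from $F(q)$ by substituting $\sigma(y)$ for each register $y$.

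The verifications are routine: (i) the invariant $\alpha(x) = \beta(\sigma(x))$ is preserved step by step by direct calculation; (ii) every new update is syntactically of the form $z := \delta(z)$ or $z := \delta(c)$; (iii) copylessness of $\eta$ holds because each register $z$ is the left-hand side of exactly one update and occurs on the right-hand side either once (in the form $\delta(z)$) or not at all; and (iv) the structural property of the automaton is preserved, because by fixing once and for all a canonical rule for extending $\sigma'$ from $X_r$ to all of $X$, runs of $\aut{A}'$ are in bijection with runs of $\aut{A}$, so a deterministic (resp.\ unambiguous, nondeterministic) original yields a deterministic (resp.\ unambiguous, nondeterministic) new machine. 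The main obstacle I anticipate is purely combinatorial bookkeeping: one must confirm that at every transition the cardinalities of $X_r$ and its image under $\sigma'$ match up so that the extension to a full permutation is well-defined, and that the canonical choice of extension can be fixed uniformly enough to preserve unambiguity. Once this is in place, the rest of the proof is an easy induction on the length of the input.
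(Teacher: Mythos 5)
Your proposal is correct and follows essentially the same route as the paper's proof: a product of the original state space with the group of register permutations, where the permutation tracks which physical register currently holds each original register's value, the copyless restriction guarantees the source map $x \mapsto y_x$ is injective and hence extends to a bijection, and updates collapse to the forms $z := \delta(z)$ and $z := \delta(c)$. Your treatment of the extension of the partial bijection on $X_r$ to a full permutation (the paper's $\upsilon$, chosen arbitrarily among registers not occurring on any right-hand side) and of the preservation of determinism and unambiguity is, if anything, slightly more explicit than the paper's.
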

\begin{proof}
We use the symbol $\delta$ in order to indicate an arbitrary composition of unary operation symbols. In a CRA $\aut A = (Q,X,\Delta,I,F)$ over $\O$ the register updates can be of any of the forms: $x := \delta(c)$, or $x := \delta(x)$, or $x := \delta(y)$ for $x \neq y$. We will employ a product construction in order to eliminate updates of the form $x := \delta(y)$ with $x \neq y$. If the assignment $x := \delta(y)$ is part of a parallel update, then $y$ cannot flow into another register due to the copyless restriction.

We will describe now the construction of the CRA $\aut B$, which has the same registers as $\aut A$. The idea is that $\aut B$ computes like $\aut A$, but also maintains some extra information in the finite control that allows to ``rename'' the registers. A renaming function $\sigma$ is a bijection on the set $X$ of registers, and the state space of $\aut B$ is the cartesian product of the state space of $\aut A$ with the function space of bijections $X \to X$. The intuition is that when the computation is in a state $(q,\sigma)$ of $\aut B$, the value that $\aut A$ would hold in the register $x$ is contained in the register $\sigma(x)$ of $\aut B$.

If $q$ is an initial state of $\aut A$, then $(q, \id_X)$ is an initial state of $\aut B$ with the same initialization. The renaming function $\id_X$ is the identity, which essentially means that the registers are not renamed.

If $q$ is a final state of $\aut A$, then $(q,\sigma)$ is a final state of $\aut B$. The finalization term for $(q,\sigma)$ in $\aut B$ is defined as $\sigma(F(q))$.

Since $\aut A$ is copyless, every register update function $\theta$ induces \emph{at least one} bijection $\upsilon: X \to X$, where $\upsilon(x)$ is the register that appears on the right-hand side of the assignment to $x$ (if such a register exists). If no register is used (i.e., $x$ is set to $\delta(c)$ for a constant $c$), then $\upsilon(x)$ can be set arbitrarily to one of the registers that does not appear in the right-hand side of $\theta$ at all. Thus, there will be multiple possible bijections $\upsilon$, and we choose one arbitrarily. For a transition from state $p$ to state $q$ in $\aut A$ with label $a$ and register update $\theta$, we put for all renaming functions $\sigma$ a transition from $(p,\sigma)$ to $(q,\tau)$ in $\aut B$ with label $a$ and register update $\eta$, where:
\begin{align*}
\tau &= \upsilon; \sigma
&
\eta(x) &= \sigma(\theta(\tau^{-1}(x)))
\end{align*}
and $;$ is function composition (in diagrammatic order). In other words, if $x := \delta(y)$ is a register update of $\aut A$, then we put $\tau(x) := \delta(\sigma(y))$ for the corresponding register update in $\aut B$. Moreover, notice that $\upsilon(x) = y$ and therefore $\tau(x) = (\upsilon; \sigma)(x) = \sigma(\upsilon(x)) = \sigma(y)$. So, the update in $\aut B$ has the same register appearing in both the left-hand and right-hand side.

The main claim is that the possible runs in $\aut B$ correspond to the runs of $\aut A$ modulo the variable renaming described by the renaming functions.
\end{proof}

For an alphabet $\Sigma$ that contains the symbol $\#$, we define the \emph{prefix removal} operation $\partial_\#: \Sigma\kstar \to \Sigma\kstar$ as follows:
\[
  \partial_\#(w) = \begin{cases}
    \eps, &\text{if $\#$ does not appear in $w$}; \\
    v, &\text{if $w = u \# v$ where $u$ has no occurrence of $\#$.}
  \end{cases}
\]
That is, $\partial_\#$ removes the prefix of the string until the first occurrence of a $\#$.

Suppose now that $\Sigma$ and $\Gamma$ are finite alphabets, and that $\Gamma$ contains the $\#$ symbol. For a string transduction of type $\Sigma\kstar \pto \Gamma\kstar$, we define the \emph{output prefix removal} operation $\partial_\#$ as follows:
\[
  \partial_\#(f)(w) = \begin{cases}
    \text{undefined}, &\text{if $f(w)$ is undefined}; \\
    \partial_\#(f(w)), &\text{if $f(w)$ is defined.}
  \end{cases}
\]
We show below that there is a construction on CRAs that corresponds to the output prefix removal operation on string transductions.

\begin{lemma}[Remove Output Prefix]
\label{lemma:outputPrefix}
\normalfont
Suppose that $\Sigma$ and $\Gamma$ are tag alphabets with $\# \in \Gamma$, $\Gamma\kstar$ is the set of values, and $\O$ consists of the constant $\eps$ and the unary operations $(- \cdot a)$ for every $a \in \Gamma$. If the string transduction $f: \Sigma\kstar \pto \Gamma\kstar$ is implementable by a UCRA (resp., DCRA) over $\O$, then $\partial_\#(f)$ is implementable by a UCRA (resp. DCRA) with the same number of registers.
\end{lemma}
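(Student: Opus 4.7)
The plan is to first apply Lemma~\ref{lemma:simplification} to the given CRA $\aut{A}$ that computes $f$, normalizing every register update to one of the forms $x := w$ or $x := x \cdot w$ for a fixed constant string $w \in \Gamma\kstar$. Because every operation in $\O$ is unary or a constant and $\aut{A}$ is copyless, the finalization term at each final state $q$ is either a constant string $w$ or of the form $x \cdot w$ for a single register $x$.

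I then build $\aut{B}$ with state space $Q \times \{0,1\}^X$ and the same register set $X$, maintaining the invariant that, for every register $x$, the value of $x$ in $\aut{B}$ equals the content of $x$ in $\aut{A}$ when the auxiliary bit $b_x = 0$ (in which case that content contains no $\#$), and equals the suffix of $x$'s content in $\aut{A}$ after its first $\#$ when $b_x = 1$. Each initial state $p$ of $\aut{A}$ becomes $(p, b_0)$ in $\aut{B}$ with $b_0(x) = 1$ iff the constant $I(p)(x)$ contains $\#$, and $x$ is initialized accordingly.

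Transitions are rewritten case by case. For an update $x := c$: if $c$ is $\#$-free, use $x := c$ with $b_x := 0$; otherwise write $c = u \# v$ with $u$ $\#$-free and use $x := v$ with $b_x := 1$. For an update $x := x \cdot w$: if $b_x = 1$, keep $x := x \cdot w$; if $b_x = 0$ and $w$ is $\#$-free, keep $x := x \cdot w$ and $b_x := 0$; otherwise split $w = u \# v$ with $u$ $\#$-free and use $x := v$ with $b_x := 1$. Each resulting update is either a constant or a fixed append, so it stays in $\O$ and remains copyless. Finalization splits into three cases: $F(q) = w$ becomes the constant $\partial_\#(w)$; $F(q) = x \cdot w$ with $b_x = 1$ becomes $x \cdot w$; and $F(q) = x \cdot w$ with $b_x = 0$ becomes the constant $\partial_\#(w)$ (equal to $\eps$ if $w$ contains no $\#$).

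A straightforward induction on input length verifies the invariant and hence that $\sem{\aut{B}}(w) = \partial_\#(f(w))$ on every accepted $w$. The bits $b_x$ evolve deterministically as a function of the transition fired in $\aut{A}$, so $\aut{B}$ is deterministic whenever $\aut{A}$ is and unambiguous whenever $\aut{A}$ is; the register set is unchanged, and only the finite control grows by a factor of $2^{|X|}$. The main obstacle is purely bookkeeping in the finalization: the first $\#$ of the output $f(w)$ may sit either inside the register content or inside the trailing constant $w$ of $F(q)$, and these two situations call for structurally different finalization terms in $\aut{B}$.
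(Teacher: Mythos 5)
Your construction is essentially the paper's proof: a product with $Q \times (X \to \{0,1\})$ where each bit records whether the first $\#$ has already entered the corresponding register, with the register itself holding the post-$\#$ suffix once the bit flips; the only cosmetic difference is that you let a register keep its (unused) $\#$-free content while its bit is $0$, whereas the paper keeps it at $\eps$, and you make explicit the normalization via Lemma~\ref{lemma:simplification} and the finalization case analysis that the paper leaves as a sketch. The argument is correct and preserves determinism, unambiguity, copylessness, and the register count as required.
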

\begin{proof}
Suppose that the UCRA $\aut A = (Q,X,\Delta,I,F)$ implements $f$. We will describe a UCRA $\aut B$ that implements $\partial_\#(f)$. The idea is to employ a product construction that records some additional information in the finite control: whether the first $\#$ symbol would have already been added to a register in the execution of $\aut A$ or not. Additionally, we modify the updates of $\aut A$ so that no tags are appended to registers until after the first $\#$ would have been appended in the execution of $\aut A$.

The state space of $\aut B$ is $Q \times (X \to \{0,1\})$. A run of $\aut A$ that ends in a configuration $(q,\alpha) \in Q \times (X \to \Gamma\kstar)$ corresponds to a run of $\aut B$ that ends in a configuration $(q,\rho,\beta) \in Q \times (X \to \{0,1\}) \times (X \to \Gamma\kstar)$ that satisfies
\begin{gather*}
\beta(x) = \partial_\#(\alpha(x))
\quad\text{and}\quad
\rho(x) = \begin{cases}
  0, &\text{if $\alpha(x)$ does not contain \#} \\
  1, &\text{if $\alpha(x)$ contains \#}
\end{cases}
\end{gather*}
for every register $x$.
Notice that the construction that is outlined here preserves determinism.
\end{proof}

\begin{lemma}
\normalfont
\label{lemma:numberOfRegisters}
Suppose that $\Sigma = \{ a, b, \# \}$ is the tag alphabet, $\Sigma\kstar$ is the set of values, and $\O$ consists of the constant $\eps$ and the unary operations $(- \cdot a)$, $(- \cdot b)$, and $(- \cdot \#)$. Let $f: \Sigma\kstar \pto \Sigma\kstar$ be a string transduction, and define the string transduction $g: \Sigma\kstar \pto \Sigma\kstar$ as follows:
\begin{align*}
\dom(g) &= \{a,b\}\kplus \cdot \# \cdot \dom(f)
\\
g(u \# v) &= \begin{cases}
  a^i \cdot \# \cdot f(v)\ \text{where $i = |u|$},
  &\text{if $u$ ends with $a$}
  \\
  b^i \cdot \# \cdot f(v)\ \text{where $i = |u|$},
  &\text{if $u$ ends with $b$}
\end{cases}
\end{align*}
for all $u \in \{a,b\}\kplus$ and $v \in \dom(f)$. If the transduction $g$ can be implemented by a copyless DCRA with $k$ registers, then $f$ can be implemented by a copyless DCRA with at most $k/2$ registers.
\end{lemma}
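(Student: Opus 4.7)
The plan is to take a copyless DCRA $\aut A$ for $g$ with $k$ registers and extract from it a copyless DCRA $\aut B$ for $f$ with at most $k/2$ registers. First, I would apply Lemma~\ref{lemma:simplification} to put $\aut A$ in separable form, so that each register update has the shape $x := \delta(x)$ or $x := \delta(c)$; in this form the registers evolve independently, driven only by the state sequence. This step preserves the register count.

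Since $\aut A$ is deterministic, it reaches unique configurations $(q_a, \alpha_a)$ and $(q_b, \alpha_b)$ after reading the prefixes $a\#$ and $b\#$, respectively. Let $\aut A_a$ be the automaton obtained from $\aut A$ by replacing the initial configuration with $(q_a, \alpha_a)$, and symmetrically for $\aut A_b$. Then $\aut A_a$ computes $v \mapsto a\#f(v)$ and $\aut A_b$ computes $v \mapsto b\#f(v)$. Applying Lemma~\ref{lemma:outputPrefix} to strip the constant output prefixes $a\#$ and $b\#$, we obtain copyless DCRAs $\aut A'_a$ and $\aut A'_b$ that both compute $f$, each using $k$ registers.

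For each of $\aut A'_a$ and $\aut A'_b$, let $R_a$, respectively $R_b$, denote the set of registers that appear in the finalization of some reachable final state. Registers outside these sets are dead and can be dropped without affecting the transduction, yielding DCRAs for $f$ with $|R_a|$ and $|R_b|$ registers respectively. The crux of the argument is to establish that $R_a \cap R_b = \emptyset$; once this is shown, we have $|R_a| + |R_b| \le k$, hence $\min(|R_a|, |R_b|) \le k/2$, and the corresponding pruned automaton is the desired $\aut B$.

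The hardest step will be proving the disjointness claim. The intended approach is by contradiction: if some register $r$ lies in both $R_a$ and $R_b$, then there exist $v_a, v_b \in \dom(f)$ such that on inputs $a\#v_a$ and $b\#v_b$ the register $r$ carries, up to a fixed finalization decoration, the entire outputs $a\#f(v_a)$ and $b\#f(v_b)$ respectively. Extending the prefixes to $a^j\#v_a$ and $b^j\#v_b$ for arbitrarily large $j$ and using the separable form, one can force $r$'s updates on $a$-inputs in the steady-state cycle reached while reading $a^j$ to append only $a$'s, and symmetrically that its updates on $b$-inputs in the $b$-cycle append only $b$'s. Considering then a mixed prefix such as $a^i b^{j}\#v$ and tracing $r$'s evolution under determinism, one shows that $r$'s content cannot have the form required to be the output register on that mixed input, while by a pigeonhole on reachable final states some $r$ must play that role for a suitable $v$. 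The delicate aspects are the treatment of resets during the prefix reading and a careful pumping/residue-class analysis to align the two assumed behaviors of $r$ with an actual contradiction.
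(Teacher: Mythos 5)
There is a genuine gap: the disjointness claim $R_a \cap R_b = \emptyset$ is both unproven and, for your definition of $R_a$ and $R_b$, false in general. You define $R_a$ via the automaton $\aut A'_a$ obtained by restarting $\aut A$ from its configuration after the \emph{length-one} prefix $a\#$. On inputs of the form $a\#v$ the required output is $a\#f(v)$, whose prefix is a single constant letter; nothing forces the register carrying this output to differ from the one carrying $b\#f(v)$ on input $b\#v$. Concretely, a copyless DCRA for $g$ may special-case $|u|=1$ in its finite control: on the $\#$-transition out of the ``exactly one letter read'' states it can reset a common block of registers to the constants $a\#$ or $b\#$ and compute the $f(v)$ part in that shared block, while reserving the genuinely accumulating registers for $|u|\geq 2$. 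For such an automaton $R_a = R_b$, your inequality $|R_a|+|R_b|\leq k$ fails, and the construction yields an automaton for $f$ with up to $k$ registers rather than $k/2$. Your sketched rescue---``extending the prefixes to $a^j\#v_a$ and $b^j\#v_b$''---does not repair this: membership of $r$ in your $R_a$ is a statement about the runs on $a\#v_a$ only, and on the input $a^j\#v_a$ the automaton follows a different state sequence during the prefix, so a completely different register may serve as the output register there. The hypothesis you are trying to contradict simply does not constrain the long-prefix runs.

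The paper avoids this by reversing the order of the two moves you make. It first defines $R_a$ (resp.\ $R_b$) intrinsically from the behaviour of $\aut A$ on arbitrarily long prefixes in $\{a,b\}\kstar$: a register is in $R_a$ if it eventually only contains and is extended with $a$'s in those updates. Disjointness is then immediate from the definition, so one of the two sets has size at most $k/2$. Only \emph{after} this count does it restrict the input domain---and crucially it restricts to $a\kplus\cdot\#\cdot\dom(f)$, keeping all prefix lengths in play, which is what forces every (sufficiently long) output $a^i\#f(v)$ to live in an $R_a$-register and lets the other registers be discarded. Lemma~\ref{lemma:outputPrefix} and the final specialisation to the configuration after $a\#$ are applied only to this already-pruned automaton. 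Your use of Lemma~\ref{lemma:simplification} and Lemma~\ref{lemma:outputPrefix} is the right toolkit, but the register-counting must happen on the long-prefix side before the prefix is frozen, not after.
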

\begin{proof}
Let $\aut A$ be a DCRA with $k$ registers that implements $g$. By Lemma~\ref{lemma:simplification}, we can assume without loss of generality that every register update of $\aut A$ is of the form $x := u$ or $x := x \cdot u$ for some $u \in \Sigma\kstar$. Consider all register updates that are enabled when the automaton $\aut A$ consumes the input sequences $\{ a, b \}\kstar$. Let $R_a$ (resp., $R_b$) be the set of registers that eventually only contain and are extended with $a$ tags (resp., $b$ tags) in these updates. The registers that are not among $R_a$ and $R_b$ cannot contribute to the output of the computation for sufficiently long input prefixes before the first occurrence of $\#$. One of $R_a$, $R_b$ is of cardinality at most $k/2$. W.l.o.g. we can assume that $|R_a|$ is at most $k/2$. Then, consider the product $\aut B$ of $\aut A$ with an automaton that accepts $a\kplus \cdot \# \cdot \dom(f)$. In the automaton $\aut B$, only the registers of $R_a$ contribute nontrivially to the output, so we can assume that $R_a$ is the register set of $\aut B$. Notice that $\aut B$ computes the transduction $h: \Sigma\kstar \pto \Sigma\kstar$, given by:
\begin{align*}
\dom(h) &= a\kplus \cdot \# \cdot \dom(f)
&
h(a^i \# v) &=
a^i \cdot \# \cdot f(v)
\end{align*}
for all $i \geq 1$ and $v \in \dom(f)$. Now, Lemma~\ref{lemma:outputPrefix} implies that a DCRA $\aut C$ with at most $k/2$ registers can compute the transduction $k: \Sigma\kstar \pto \Sigma\kstar$, given by:
\begin{align*}
\dom(k) &= a\kplus \cdot \# \cdot \dom(f)
&
k(a^i \# v) &= f(v)
\end{align*}
for all $i \geq 1$ and $v \in \dom(f)$. Let $(q,\alpha)$ be the configuration of $\aut C$ after consuming $a \#$. The DCRA that results from $\aut C$ by setting the start state to be $q$ and the initialization function to $\alpha$ computes the transduction $f$ and has at most $k/2$ registers.
\end{proof}

\begin{theorem}
\label{thm:separation}
\normalfont
For some $\O$, there is a string transduction that can be computed by a copyless UCRA over $\O$ but not by a copyless DCRA over $\O$.
\end{theorem}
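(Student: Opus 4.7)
The plan is to take as witness the ``unbounded iteration'' of $\f$: let $F \colon \Sigma\kstar \pto \Sigma\kstar$ with $\Sigma = \{a,b,\#\}$, $\dom(F) = (\{a,b\}\kplus \cdot \#)\kstar$, and $F(u_1 \# \cdots u_n \#) = \f(u_1 \#) \cdots \f(u_n \#)$, over the operations $\O$ fixed in Lemma~\ref{lemma:numberOfRegisters}. The informal discussion preceding Lemma~\ref{lemma:numberOfRegisters} already conveys the heart of the matter: a copyless DCRA for $F$ is forced to double its register count for every additional block, which cannot be sustained by a fixed-size machine.

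For the upper bound, I would generalise Example~\ref{ex:copylessUCRA} to multiple blocks. The UCRA uses a single register $x$ and, at the start of each block, branches via $\eps$-transitions into one of two sub-automata: one guessing that the block will end in $a$, the other that it will end in $b$. Inside a block, each in-block letter triggers the append $x := x \cdot a$ in the $a$-guess branch (resp.\ $x := x \cdot b$ in the $b$-guess branch), and the $\#$-transition that closes the block is enabled only if the last in-block letter matches the guess (which is tracked by the finite control), in which case it appends $\#$ and returns to the boundary state. Exactly one branch per block survives, so the UCRA is unambiguous, a single register suffices, and every update is copyless.

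For the lower bound, suppose for contradiction that some copyless DCRA $\aut A$ over $\O$ implements $F$ with $N$ registers. For every $n \geq 1$, the restriction of $F$ to inputs in $(\{a,b\}\kplus \cdot \#)^n$ is precisely the transduction $\f^n$ from the text, and the synchronous product of $\aut A$ with a DFA for this regular language is a copyless DCRA for $\f^n$ with at most $N$ registers. Writing $c_n$ for the minimum register count of any copyless DCRA computing $\f^n$, instantiating Lemma~\ref{lemma:numberOfRegisters} with $f := \f^{n-1}$ and $g := \f^n$ gives $c_{n-1} \leq c_n / 2$, i.e.\ $c_n \geq 2 c_{n-1}$. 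Combined with a base case $c_1 \geq 2$, this yields $c_n \geq 2^n$, contradicting $c_n \leq N$ once $n$ satisfies $2^n > N$.

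The principal obstacle is the base case $c_1 \geq 2$, which does not follow from Lemma~\ref{lemma:numberOfRegisters} directly since $\f^0$ trivially has $c_0 = 0$. I would establish it by a direct pumping argument. By Lemma~\ref{lemma:simplification} one may assume that a hypothetical single-register copyless DCRA for $\f$ uses only register updates of the form $x := c$ or $x := x \cdot s$ for string constants $c, s$. Pigeonholing on the $a$-cycle of the underlying DFA forces the register value after $a^i$ to be of the form $a^{i - i_0} \cdot v_0$ for fixed $i_0, v_0$ once $i$ is large, and symmetrically for $b^i$. On a mixed input $a^i b^j \#$ with $j \geq 1$, the required output $b^{i+j} \#$ can then be produced neither by further appends (which preserve the $a$-prefix inherited from the $a^i$ phase) nor by any intermediate reset (which erases all dependence on $i$), yielding the desired contradiction.
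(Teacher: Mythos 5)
Your proposal is correct and follows essentially the same route as the paper: the witness is the unbounded iteration $\f\kstar$, the single-register UCRA guesses the last letter of each block, and the lower bound combines the product-with-a-DFA reduction to $\f^n$ with the recurrence $c_n \geq 2c_{n-1}$ from Lemma~\ref{lemma:numberOfRegisters} and the base case $c_1 \geq 2$. The only difference is cosmetic: you spell out the base case via Lemma~\ref{lemma:simplification} and a pumping argument, where the paper settles for a short informal justification of the same fact.
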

\begin{proof}
Suppose that $\Sigma = \{ a, b, \# \}$ is the tag alphabet, $\Sigma\kstar$ is the set of values, and $\O$ consists of the constant $\eps$ and the unary operations $(- \cdot a)$, $(- \cdot b)$, and $(- \cdot \#)$. Suppose that $\Sigma = \{ a,b,\# \}$ and $f: \Sigma\kstar \pto \Sigma\kstar$ is the string transduction with domain $\dom(f) = \{a,b\}\kplus \cdot \#$ that is defined as follows:
\[
  f(u \#) = \begin{cases}
    a^i \#, &\text{if $|u| = i$ and $u$ ends with $a$}
    \\
    b^i \#, &\text{if $|u| = i$ and $u$ ends with $b$}
  \end{cases}
\]
for all $u \in \{a,b\}\kplus$. Any copyless DCRA that computes $f$ needs at least two registers. Indeed, if there is only one register, then on any input string it either contains an $a$ or it contains a $b$ or it contains neither; if we take the input string to be sufficiently large and then read in a $b$ or an $a$, there is not enough information stored in the state to know the value of $|u| = i$ and output $b^i$ or $a^i$. Define $f^n$ to be the string transduction with domain $\dom(f^n) = (\{a,b\}\kplus \cdot \#)^n$ and
\[
  f^n(u_1 \# u_2 \# \cdots u_n \#) =
  f(u_1 \#) f(u_2 \#) \cdots f(u_n \#)
\]
where every $u_i$ is an element of $\{a,b\}\kplus$. A consequence of Lemma~\ref{lemma:numberOfRegisters} is that any copyless DCRA that computes $f^n$ requires at least $2^n$ registers.

Now, consider the string transduction $f\kstar$ with domain $(\{a,b\}\kplus \cdot \#)\kstar$ defined as $f\kstar(u_1 \# \cdots u_n \#) = f^n(u_1 \# \cdots u_n \#)$, where $u_i\in\{a,b\}\kplus$ for all $1\leq i \leq n$. The transduction $f\kstar$ can be computed by a copyless UCRA that guesses the last letter of each block (maximal segment of letters that ends with a $\#$ symbol).

Assume for the sake of contradiction that there exists a copyless DCRA $\aut A$ with $m \geq 1$ registers that computes $f\kstar$. The product of $\aut A$ with a DFA that accepts the language $(\{a,b\}\kplus \cdot \#)^k$ has $m$ registers and computes the transduction $f^k$. But we have already discussed that such a DCRA would need at least $2^k$ registers, hence taking $k$ large enough that $2^k > m$ gives us the desired contradiction.
\end{proof}

\subsection{Closure Under Regular Combinators}
\label{subsec:closure}

We will now show that both classes $\SR$ (streamable regular) and $\SLR$ (streamable linear regular) of transductions are closed under operations that are relevant for the design of streaming query languages. In particular, we will consider the collection of the regular combinators of Table~\ref{table:combinators}, which were defined for data transductions.

\begin{theorem}
\normalfont
For every $\O$, the classes of transductions $\SR$ and $\SLR$ (see Definition~\ref{def:streamable} and Table~\ref{table:classesCRA}) are closed under the regular combinators of Table~\ref{table:combinators}.
\label{thm:closure-under-regular-combinators}
\end{theorem}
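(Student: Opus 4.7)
The plan is to give four explicit constructions on UCRAs, one per combinator, starting from witness UCRAs for the argument transductions. In each case I will verify (i) semantic correctness, (ii) preservation of unambiguity (at most one accepting run), and (iii) preservation of copylessness when the inputs are copyless. For convenience I will use $\eps$-transitions freely, appealing at the end to the standard $\eps$-elimination procedure mentioned in the paper. For machines that lack transitions on some input letters I will first sinkify by adding a non-accepting sink state with constant register updates, so that all products below are defined on every input.

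\textbf{Product-like constructions.} For \emph{output combination} $\op(f_1, \ldots, f_n)$ and for \emph{choice} $f \elseQ g$, I will take the synchronous product of UCRAs $\aut A_1, \ldots, \aut A_n$ with disjoint register sets $X_1 \sqcup \cdots \sqcup X_n$ and with register updates inherited component-wise. For output combination, accepting tuples are those in which every component is accepting, with finalization term $\op(F_1(q_1), \ldots, F_n(q_n))$. For choice, a tuple is accepting if at least one component accepts, with finalization $F_f(q_f)$ when $q_f$ is accepting in $\aut A_f$ and $F_g(q_g)$ otherwise---the case split is purely on the finite-state component, so no test operation in $\O$ is needed. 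Unambiguity of the product follows from unambiguity of each factor, and copylessness is immediate since the components act on disjoint registers.

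\textbf{Sequential constructions.} For \emph{quantitative concatenation} $\splitQ(f, g, \op)$, I will glue $\aut A_f$ and $\aut A_g$ by adding a fresh register $z$ and, from each accepting state $q_f$ of $\aut A_f$, an $\eps$-transition to the start state of $\aut A_g$ whose update sets $z := F_f(q_f)$ and initializes each register in $X_g$ according to $\aut A_g$'s initialization; the finalization at an accepting $q_g$ of $\aut A_g$ becomes $\op(z, F_g(q_g))$. For \emph{quantitative iteration} $\iterQ(f, c, \op)$, I will add an accumulator register $y$ initialized to $c$, and from every accepting state $q_f$ of $\aut A_f$ an $\eps$-transition back to $\aut A_f$'s start state with update $y := \op(y, F_f(q_f))$ together with reinitialization of $X_f$ to its constant initial terms; a fresh initial accepting state handles the empty decomposition, and the finalization term is $y$. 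In both constructions, unambiguity of the output UCRA follows directly from the \emph{unambiguity} built into the operators $\odot$ and ${}\kostar$ that define the rate, forcing a unique split point or block decomposition for any accepted input.

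\textbf{Main obstacle.} The delicate step is maintaining copylessness through the $\eps$-augmented constructions and through $\eps$-elimination. In the constructions themselves, each $\eps$-update has the form $z := F_f(q_f)$ or $y := \op(y, F_f(q_f))$ with $X_f$ reinitialized to constants; copylessness of $F_f$ together with disjointness of $z$ and $y$ from $X_f \cup X_g$ then ensures the update is copyless. The $\eps$-elimination step composes each $\eps$-update $\theta_\eps$ with the following letter-update $\theta$ by substituting $\theta_\eps(w)$ for each register $w$ occurring in $\theta$'s right-hand sides; if both $\theta$ and $\theta_\eps$ are copyless, then each register $w$ occurs in at most one right-hand side of $\theta$, and each register used by $\theta_\eps(w)$ occurs in at most one right-hand side of $\theta_\eps$, so the composed update remains copyless. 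Since the constructions introduce no $\eps$-cycles, standard elimination preserves both unambiguity and copylessness, completing the proof for all four combinators.
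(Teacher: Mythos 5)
There is a genuine gap, and it sits exactly where you wave your hands: the claim that unambiguity of the glued automata for $\splitQ(f,g,\op)$ and $\iterQ(f,c,\op)$ ``follows directly from the unambiguity built into the operators $\odot$ and ${}\kostar$.'' Those operators are a \emph{semantic} restriction on the rate of the combinator --- they declare the combinator undefined on words with more than one factorization --- but they do nothing to the machine you built. Your glued automaton has one accepting run for each factorization of the input into a $\rate(f)$-prefix and a $\rate(g)$-suffix (resp.\ into $\rate(f)$-blocks), so on a word admitting two factorizations it has two accepting runs: it is ambiguous, hence not a UCRA, and its domain is $\rate(f)\cdot\rate(g)$ rather than $\rate(f)\odot\rate(g)$. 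The paper's proof confronts this head on: after the $\eps$-gluing for $\iterQ$ it notes the resulting NCRA ``could potentially be ambiguous'' and takes the product with a register-free DFA accepting $\rate(\sem{\aut A})\kostar$, which simultaneously removes the spurious runs and cuts the domain down to the unambiguous iteration. You need the same step, and the analogous product with a DFA for $\rate(f)\odot\rate(g)$ in the concatenation case (both languages are regular, e.g.\ as the ordinary concatenation minus the words with two distinct split points). With that insertion your sequential constructions, including the copylessness bookkeeping through $\eps$-elimination, go through.

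A second instance of the same oversight appears in your choice construction. The synchronous product with ``at least one component accepts'' is not unambiguous: an accepting run of the product is a \emph{pair} of component runs of which at least one is accepting, so a word in $\rate(f)\cap\rate(g)$ --- or even a word in $\rate(f)$ only, on which the (sinkified, still nondeterministic) $\aut A_g$ has several non-accepting runs --- yields several accepting runs of the product. Unambiguity of each factor only gives unambiguity of the product under \emph{conjunctive} acceptance, which is why your output-combination case is fine. For $f \elseQ g$ the standard repair is a disjoint union of $\aut A_f$ restricted to $\rate(f)$ and $\aut A_g$ restricted to $\rate(g)\setminus\rate(f)$, each restriction obtained by product with a register-free DFA; the two parts then have disjoint rates and the union is unambiguous and copyless-preserving.
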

\begin{proof}
The proof of the theorem relies on constructions that are variants of well-known constructions on classical finite-state automata. We will present the proofs only for the cases $\op(f, g)$ and $\iterQ(f, c, \op)$ and leave the rest of the cases as exercises for the reader. For the cases that we present, we will only give the main ingredients of the construction.

Suppose that $\aut A$ and $\aut B$ are CRAs. We will construct a CRA $\aut C$ that computes $\op(\sem{\aut A}, \sem{\aut B})$, where $\op$ is a binary operation on $D$. The state space of $\aut C$ is the product of the state space of $\aut A$ and the state space of $\aut B$. The registers of $\aut C$ is the union of the registers of $\aut A$ and the registers of $\aut B$. The automaton $\aut C$ simulates the parallel execution of $\aut A$ and $\aut B$, where the registers of the two machines are updated separately. When a state $(p,q)$ of $\aut C$ consists of a final state $p$ of $\aut A$ and a final state $q$ of $\aut B$, the finalization term is defined so that it combines the output values of the two automata $\aut A$ and $\aut B$. This construction preserves the ``copylessness'' property: if the automata $\aut A$ and $\aut B$ are copyless, then so is $\aut C$.

Suppose that $\aut A$ is a CRA, $c$ is a data value, and $\op$ is a binary operation on values. We will construct a CRA $\aut C$ that computes $\iterQ(\sem{\aut A}, c, \op)$. W.l.o.g.\ we assume that we can use $\eps$-transitions, and therefore $\aut A$ has a unique initial state and a unique final state. We will also assume that $\eps \notin \rate(\sem{A})$, because this special case can be handled separately to avoid $\eps$-cycles. We will describe the construction of $\aut C$ as a sequence of modifications on $\aut A$.
\begin{itemize}[$-$]
\item
First, add a new register $y$ to $\aut A$ that is meant to hold the aggregate value computed in the iteration. The register $y$ stays unchanged in the old transitions of $\aut A$.
\item
Add an $\eps$-transition from the unique final state to the unique initial state. In this transition, update the aggregate value of $y$ to reflect one more step of the iteration.
\item
Add a new initial and final state for computing the initial aggregate $c$.
\item
Since the NCRA that has resulted from the previous steps could potentially be \emph{ambiguous}, take the product with a (register-free) DFA that accepts the language $\rate(\sem{A})\kostar$.
\end{itemize}
The last step of the construction removes the ambiguity. So, the resulting CRA is unambiguous and computes the transduction $\iterQ(\sem{\aut A}, c, \op)$. This construction preserves the property of ``copylessness'': if $\aut A$ is copyless then so is $\aut C$.
\end{proof}

In contrast to Theorem~\ref{thm:closure-under-regular-combinators}, we can show that the class of \emph{all streaming algorithms} with an appropriate streaming complexity bound is not closed under the regular combinators of Table~\ref{table:combinators}. For instance, the following theorem shows that this is true if by ``streaming algorithm'' we require that the algorithm must use only $O(\log(n))$ space in the length of the input stream. This justifies the role of regularity in guaranteeing modular composition, as we mentioned in the introduction.

\begin{theorem}
\normalfont
Let $\Sigma = \{a, b\}$ and $D = \mathbb{N}$. There exist data transductions $f$ and $g$ (\emph{not} in $\SR$) such that $f$ and $g$ can be implemented as streaming algorithms which use at most $O(\log(n))$ space in the length of the input stream ($n$), but $\splitQ(f,g,\op)$ cannot.
\label{thm:general-streaming-algorithms-not-closed}
\end{theorem}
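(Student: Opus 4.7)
The plan is to produce explicit data transductions $f, g$ and a binary operation $\op$, together with a streaming lower bound argument for $\splitQ(f,g,\op)$ via communication complexity. The aim is to arrange the rates of $f$ and $g$ so that, in combination, determining the unique split and supplying the right summary of the prefix forces a streaming algorithm to carry super-logarithmic state.

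Concretely, I would take $\Sigma = \{a,b\}$, $D = \mathbb{N}$, and unambiguous rates such as $\rate(f) = (a+b)^* \cdot b$ and $\rate(g) = a^*$, so the unique split of $w = u v$ is at the last $b$. The transductions $f$ and $g$ are defined using data values to encode an instance of a one-way communication problem with $\Omega(n)$ complexity --- for example, an indexing or equality problem --- so that $\op(f(u), g(v))$ equals the intended answer. Each of $f$ and $g$ is made $O(\log n)$-space streaming by restricting its output to a polynomially-bounded summary of its input (a running hash or aggregate). The transductions are placed outside $\SR$ by using value operations (such as hashing, or modular arithmetic with an input-length-dependent modulus) that do not fit any fixed finite family $\O$ of operations allowed in CRAs.

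The lower bound is then established by a standard reduction: any streaming algorithm for $\splitQ(f,g,\op)$ with space $s(n)$ gives a one-way protocol of communication $O(s(n))$ for the underlying hard problem, and the $\Omega(n)$ communication lower bound yields $s(n) = \omega(\log n)$.

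The main obstacle is combining all of the above constraints consistently: the information that each of $f$ and $g$ can produce in $O(\log n)$ space must, once combined by $\op$, be exactly what is needed to answer the hard problem, yet the problem must remain hard for one-way communication. The subtle issue is that if the split were cheaply detectable in streaming \emph{and} $f, g$ were arbitrary $O(\log n)$-space streaming algorithms, a standard snapshot construction would give $\splitQ$ in $O(\log n)$ space. The delicate part of the construction is therefore arranging the rates and value operations so that the combined computation genuinely requires retaining more information than either $f$ or $g$ alone; verifying this via a clean communication-complexity reduction is the technical heart of the proof.
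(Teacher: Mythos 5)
There is a genuine gap, and it sits exactly where you flag ``the delicate part'': your construction as sketched cannot work, for an information-theoretic reason. With rates $\rate(f)=(a+b)^*b$ and $\rate(g)=a^*$, the unique split point of $w=uv$ is the last $b$, and the snapshot construction you yourself mention applies: run $f$'s streaming algorithm on the whole stream, record its ($O(\log n)$-bit) output at each $b$, keep only the most recent such record, and restart $g$'s algorithm after each $b$. This computes $\splitQ(f,g,\op)$ in $O(\log n)$ space for \emph{any} $O(\log n)$-space $f$ and $g$. More fundamentally, the final answer $\op(f(u),g(v))$ is a function of two quantities each of which fits in $O(\log n)$ bits (an $O(\log n)$-space algorithm cannot even hold a larger output), so no indexing- or equality-style communication hardness can be hidden in the data values once the split point is online-detectable. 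Your proposed fixes (hashing, length-dependent moduli) do not escape this: they only affect what $f(u)$ and $g(v)$ are, not the fact that the combined answer factors through two short summaries.

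The paper's resolution is to put the hardness in the \emph{domain} rather than in the values: take $f(w)=1$ iff $|w|_a = 2|w|_b$ and $g(w)=1$ iff $|w|_a=|w|_b$ (both undefined otherwise). Each is trivially $O(\log n)$-space streamable with two counters, and both lie outside $\SR$ simply because their rates are non-regular (no exotic operations needed). But $\splitQ(f,g,\op)$ must decide membership in the unambiguous concatenation $\rate(f)\odot\rate(g)$, and on the $2^n$ inputs $a(aab\,|\,aba)^n ab$ of length $3n+3$, any streaming algorithm must reach $2^n$ distinct states because these inputs are pairwise separated by continuations of the form $b^k$; hence $\Omega(n)$ bits are required. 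So the missing idea in your proposal is that the split point itself must be what is hard to locate online --- the rates must be chosen so that recognizing their (unambiguous) concatenation is a linear-space problem --- and once you see that, the whole communication-complexity apparatus for the values becomes unnecessary.
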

\begin{proof}[Proof]
This example is adapted from Theorem 5.3 in \cite{AMS2019}.
Define $f, g: (\Sigma \times \N)\kstar \pto \N$ as follows: $f(w) = 1$ if $|w|_{a} = 2 \cdot |w|_{b}$, and undefined otherwise, and $g(w) = 1$ if $|w|_{a} = |w|_{b}$, and undefined otherwise. Here $|w|_{a}$ is the number of occurrences of $a$ in $w$.
Note that $f$ and $g$ are value-oblivious, and neither $\rate(f)$ nor $\rate(g)$ is regular. The rate of the transduction $\splitQ(f,g,\op)$ (the choice of $\op$ doesn't matter) is the concatenation of these two languages, which is an unambiguous concatenation.

Both $f$ and $g$ can be implemented by maintaining two counters for the number of $a$'s and the number of $b$'s seen so far.
This requires $O(\log n)$ space.
On the other hand, any streaming algorithm that computes $h = \splitQ(f,g,\op)$ requires at least a linear number of bits. Specifically, consider the behavior of such a streaming algorithm on inputs of the form $a(aab | aba)^n ab$. On these $2^n$ distinct inputs, each of length $3n+3$, the streaming algorithm would have to reach $2^n$ different internal states, because the inputs are pairwise distinguished by reading in a further string of the form $b^k$.
For example, the inputs $w = a(aab)ab$ and $w' = a(aba)ab$ are distinguished by reading in $b$, and in general if we have two inputs
\begin{align*}
w &\in a(aab | aba)^{n-k} (aab) (aab | aba)^{k-1} ab \\
\text{and}\quad w' &\in a(aab | aba)^{n-k} (aba) (aab | aba)^{k-1} ab,
\end{align*}
they are distinguished by reading in the string $b^k$.

Thus on inputs of size $3n + 3 = \Theta(n)$ the streaming algorithm requires at least $n$ bits to store the state, which is not $O(\log n)$.
\end{proof}

Finally, we note as another contrast to Theorem~\ref{thm:closure-under-regular-combinators} that there are interesting and natural operations on data transductions under which $\SR$ is closed, but $\SLR$ is not closed. In particular, let $f: (\Sigma \times D)\kstar \pto D$ be a data transduction such that $\rate(f) = \Sigma\kstar$, i.e. $f$ is \emph{total}, let $c$ be a constant in $\O$ and $\op$ a binary operation in $\O$, and define the \emph{prefix sum} of $f$ by:
\begin{align*}
\prefsumQ(f, c, \op) &: (\Sigma \times D)\kstar \pto D
\\
\rate(\prefsumQ(f, c, op)) &= \Sigma\kstar
\\
\prefsumQ(f, c, \op)(w) &=
\fold(\op, c, [f(\eps), f(w_1), \ldots, f(w_1 w_2 \ldots w_n)]),
\end{align*}
where $w = w_1, \ldots, w_n$ and each $w_i \in \Sigma \times D$. Then:

\begin{theorem}
\normalfont
For every $\O$, the class of data transductions $\SR(\O)$ is closed under the operation $\prefsumQ$. For some $\O$, the class $\SLR(\O)$ is not closed under $\prefsumQ$.
\label{thm:prefix-sum-closure}
\end{theorem}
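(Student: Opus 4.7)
For the first claim, I would first apply Theorem~\ref{thm:UCRAtoDCRA} to replace a UCRA for $f$ by an equivalent copyful DCRA $\aut A = (Q, X, \Delta, I, F)$; since $\rate(f) = \Sigma\kstar$, after trimming $\aut A$ every reachable state lies on an accepting run and is therefore final, so $F$ is defined everywhere. I would then build $\aut A'$ by adjoining a fresh register $y$ whose invariant is that, after consuming the prefix $w_1 \cdots w_i$, $y$ holds $c \oplus f(\eps) \oplus f(w_1) \oplus \cdots \oplus f(w_1 \cdots w_i)$. This is achieved by initializing $y$ at $q_0$ to $\op(c, t_0)$, where $t_0$ is the closed expression obtained by substituting each $I(q_0)(x_j)$ for $x_j$ in $F(q_0)$ (so $t_0$ evaluates to $f(\eps)$), and by augmenting each transition $p \xrightarrow{a \mid \theta} q$ of $\aut A$ with the extra assignment $y := \op\bigl(y, F(q)[\theta(x_1)/x_1, \ldots, \theta(x_k)/x_k]\bigr)$, whose right-hand side denotes the old $y$ combined with the value that $F(q)$ would evaluate to after the $X$-update $\theta$. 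Finalization is just $y$ in every state. Correctness is a straightforward induction on prefix length. The construction is inherently copyful, since each $x_k$ is read both into the original $X$-updates and into the new $y$-update via $F(q) \circ \theta$, placing the result in $\SR(\O)$.

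For the second claim, take $\Sigma = \{a\}$, $D = \N$, $\O = \{0, 1, +\}$, and let $f(w) = |w|$, which is computable by an obvious single-register copyless DCRA. Then $h = \prefsumQ(f, 0, +)$ satisfies $h(w) = 0 + 0 + 1 + 2 + \cdots + n = n(n+1)/2$ on any input of length $n$, growing quadratically in $n$. Suppose for contradiction a copyless UCRA $\aut B$ over $\O$ computes $h$. I would consider inputs of the form $(a, 0)^n$, so that $\val = 0$ throughout, and track the total register sum $S_i = \sum_{x \in X} \alpha_i(x)$ along the accepting run. Copylessness ensures that each register appears at most once across the right-hand sides of any parallel update, and since $+$ is the only binary operation, the sum of the new register values can exceed $S_i$ by at most a fixed constant $C$ (the total literal constants appearing per update; the $\val = 0$ contribution vanishes). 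Hence $S_n \le S_0 + nC = O(n)$. The finalization term $F(q_n)$ is itself a copyless expression in $X$ using only $+, 0, 1$, so the output is bounded by $S_n$ plus a constant, giving an $O(n)$ output, which contradicts $h(w) = \Theta(|w|^2)$.

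The main technical care is in the growth lemma: one must verify that a copyless parallel update combined with having $+$ as the only binary operation means the sum of right-hand-side expressions uses each old register-variable with coefficient at most one, so the register contributions fold into $S_i$ itself and leave only a bounded constant and $\val$-occurrence surplus; once this is established for updates and for $F(q)$, the $O(n)$ output bound is immediate. Conceptually, the failure of $\SLR$ to be closed under $\prefsumQ$ mirrors the unavoidable duplicate reading of registers in the $\SR$ construction of part~1, where the same $X$-values must simultaneously continue the original computation and feed a fresh snapshot of $f$ into the accumulator $y$.
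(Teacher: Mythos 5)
Your argument for the first claim is essentially the paper's: determinize via Theorem~\ref{thm:UCRAtoDCRA}, trim, observe that totality forces every reachable state to be final, and accumulate the running output in a fresh, copyfully updated register by substituting the transition's update into the finalization term; your explicit handling of the $f(\eps)$ term in the initialization is a detail the paper leaves implicit. For the second claim you take a genuinely different route. The paper works over the free monoid on $\{a\}$ with operations $\O_M$, takes $f$ to be the identity-length transduction $w \mapsto a^{|w|}$, and invokes the observation recorded after Lemma~\ref{lemma:monoidWeak} that copyless transductions over $\O_M$ have output length linear in the input, so the quadratic-length output $a^{n(n+1)/2}$ is immediately out of reach. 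You instead work over $\N$ with $\{0,1,+\}$ and prove a potential-function bound: on inputs whose values are all $0$, copylessness together with monotonicity of $+$ over $\N$ gives $S_{i+1} \le S_i + C$ for the total register sum along the unique accepting run, hence an $O(n)$ output against the required $\Theta(n^2)$. Both arguments are correct. The paper's is shorter only because the linear-output fact for $\O_M$ was already available; yours is self-contained and arguably more robust, since the additive growth lemma does not depend on the free-monoid setting, but it genuinely needs the care you flag: nonnegativity of $\N$ (so register contributions cannot cancel and the copyless sum really folds into $S_i$), a uniform bound on the constants occurring in any one transition's update expressions, and the same accounting applied once more to the finalization term.
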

\begin{proof}
First we show $\SR$ is closed. Let $c \in \O$ be a constant, $\op \in \O$ a binary operation, and $\aut A = (Q,X,\Delta,I,F)$ a copyful UCRA which denotes a total transduction. By Theorem~\ref{thm:UCRAtoDCRA}, assume w.l.o.g. that $\aut A$ is deterministic; additionally, assume it is trim. Then, we construct a DCRA for $\prefsumQ(f, c, \op)$ by adding one register $\mathsf{total}$, and updating it along every transition by applying $\op$ to the previous $\mathsf{total}$ and the new output of $\aut{A}$. We always know the new output of $\aut{A}$, because in a deterministic, trim automaton which denotes a total transduction, every state is final.

Next we show that $\SLR$ is \emph{not} closed. Let $\Sigma = \{a\}$ be the tag alphabet, let $M = (D, \cdot, \eps)$ be the free monoid generated by $\Sigma$, and consider the family of operations $\O_M$. Let $f: (\Sigma \times D)\kstar \pto D$ be the transduction which outputs the same number of $a$'s in the input: $f((a, d_1) \ldots (a, d_n)) = a^n$. Then $f \in \SLR(\O)$ because it can be implemented by a copyless UCRA. However, $\prefsumQ(f, \eps, \cdot)$ has the property that on inputs of length $n$, the output $a^{n(n+1)/2}$ is of quadratic size. As we have observed in the comments following Lemma~\ref{lemma:monoidWeak}, the output of a copyless transduction over $\O_M$ must be linear in the input, so this means that $\prefsumQ(f, \eps, \cdot) \notin \SLR(\O)$.
\end{proof}

\subsection{Complexity of Evaluation}
\label{subsec:complexity}

We are interested in evaluation in the streaming model of computation. The main resources in this model are: 1) the total space required by the algorithm, and 2) the time needed to process each data item. Both these resources are given as a function of the length of the stream that has been seen so far. Ideally, both these key parameters should be constant or logarithmic in the length of the stream.

Suppose that $\aut A = (Q,X,\Delta,I,F)$ is a trim UCRA. The computation of the semantics of $\aut A$ in the streaming model of computation is similar to the deterministic simulation of classical NFAs, in which the set of \emph{active states} is maintained and updated at every step of the computation. The difference for CRAs is that we need to maintain for every active state $q$ a register assignment $\alpha_q: X \to D$. Unambiguity guarantees that at every step of the computation we need to store \emph{at most one} register assignment for every state. This property does not hold for ambiguous machines, as there are several different paths that could lead to the same state. So, the total number of values that are needed to store is bounded above by $|Q| \cdot |X|$. For deterministic CRAs, the corresponding upper bound is $|X|$.
Analogous upper bounds can be given on the \emph{number of operations} in $\O$ that need to be executed (roughly, the time needed) per element of the input stream. For both copyless and copyful cases, these bounds are the same.

However, a more precise accounting of the used computational resources requires looking at the number of \emph{bits} that are needed to store each value during the computation (rather than just the number of values that need to be stored and the number of operations that need to be computed). In order to make such finer distinctions, we need to take into account the operations that are allowed in $\O$. For example, suppose that the values are the natural numbers and that $\O$ contains numerical constants, binary addition, $\max$ and $\min$. In the copyless case, the contents of registers can grow only linearly in the length of the input and therefore a logarithmic numbers of bits is required for each register. In the copyful case, however, even if we restrict $\O$ to contain only $0$ and binary addition, the contents of registers can grow exponentially and therefore a linear number of bits may be needed.

In some applications, the data values range over a finite domain (e.g., 32-bit integers, 64-bit floating-point numbers, and so on), and therefore the space needed to store each register is constant. Moreover, the operations on such bounded values require constant time. In this setting, the space and time-per-element resource requirements for evaluating CRAs is constant.

\section{MSO-Definable Transductions}
\label{sec:MSO}

In this section, we give logical characterizations of the classes of transductions defined by the Cost Register Automata of Definition~\ref{def:CRA}:
\begin{enumerate}[(1)]
\item
streamable regular or $\SR$ (computed by copyful UCRAs),
\item
streamable linear regular or $\SLR$ (computed by copyless UCRAs).
\end{enumerate}
These logical characterizations are in terms of word-to-DAG transformations which are given by formulas written in the language of monadic second-order logic (MSO).

The MSO-definable transductions that we consider here specify output graphs \emph{with no backward edges}. This restriction for the output DAGs to only have forward edges is crucial so that the function is implementable by an automaton which maintains values (as opposed to maintaining terms which may have ``holes'', as in \cite{AdA2012STT}). Indeed, we will show that the forward-only MSO-definable transductions are exactly the transformations that are implementable by the value-based UCRAs of Definition~\ref{def:CRA}.

This class of transductions has previously arisen in the study of attribute grammars \cite{BE2000} under the name of \emph{direction preserving} MSO graph transductions (page 11 in \cite{BE2000}). An important difference here compared to \cite{BE2000} is the characterization of the transduction classes in terms of automata (\cite{BE2000} uses attribute grammars) whose efficient evaluation in the streaming model of computation is obvious. Another technical difference is that the transductions in the present paper involve data values, and we use a special symbol $\val$ and the position of a $\val$-labeled vertex in order to define the semantics of an MSO transduction. Our proof is self-contained, and exhibits a pleasant uniformity due to the translation from forward-only MSO transductions to \emph{unambiguous} CRAs. Recall that unambiguity is essential in order to capture the class $\SLR$ in the copyless CRA model (Theorem~\ref{thm:separation}). In the word-to-DAG case, the output graphs can have more than one edge emanating from a vertex, which corresponds to copying a register in the machine model. In the word-to-tree case, each vertex of the output graph has at most one edge emanating from it, which implies that registers are used in a copyless manner in the machine model.

An \emph{MSO transduction} specifies the output DAG using MSO formulas for the vertices and edges that are interpreted over the entire input stream. The challenge is to convert this global description of the output into local computation rules for the automaton. In particular, the existence of a vertex or edge depends on a regular property of the \emph{future} of the input (regular look-ahead). To deal with this regular look-ahead, we make use of \emph{unambiguously nondeterministic} CRAs. The unambiguous nondeterminism conveniently allows us to make guesses regarding a regular property of the future of the input.

The crucial intermediate result for obtaining equivalence between MSO transductions and CRAs (Corollary~\ref{coro:MSOtoCRA} and Theorem~\ref{thm:CRAtoMSO}) asserts that every MSO transduction can be transformed equivalently into a \emph{single-step} MSO transduction, namely one where the edges of the output DAG extend over at most one input element (Lemma~\ref{lemma:singleStep}). To implement a single-step MSO transduction it is then sufficient to maintain as many data registers as the maximum number of DAG vertices that appear in some position. As before, w.l.o.g.\ we allow the use of $\eps$-transitions for unambiguous CRAs.

Let $\Sigma$ be the finite alphabet of tags, $D$ a (possibly infinite) set of data values, and $\O$ a collection of constants and operations on $D$. Recall that we only consider transductions $f: (\Sigma \times D)\kstar \pto D$ that satisfy the following condition: whether $f(w)$ is defined or not depends only on the sequence of tags $w |_\Sigma$.

\subsection{MSO Transductions}

An MSO-definable graph transduction \cite{C1994MSOGT} specifies a partial function from labeled graphs to labeled graphs. The nodes, edges and labels of the output graph are described in terms of MSO formulas over the nodes, edges and labels of the input graph.

The tag projection of a data word $w = (a_1,d_1) \sspc (a_2,d_2) \ldots (a_n,d_n)$ of length $n$ over $\Sigma \times D$ can be represented as a labeled directed graph $\G(w)$ with $n+1$ vertices $\{ 0,1,\ldots,n \}$ and $n$ $\Sigma$-labeled edges $(i-1,a_i,i)$ for all $i=1,\ldots,n$. For example, we visualize the graph $\G(w)$ of the word $w = (\tg A,1) \sspc (\tg B,2) \sspc (\tg A,3) \sspc (\tg B,4) \sspc (\tg A,1)$ over the alphabet with tags $\Sigma = \{ \tg{A}, \tg{B} \}$ and values that are natural numbers as follows:
\[
  \includegraphics{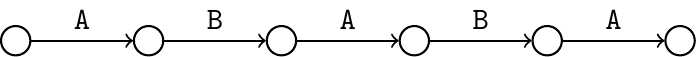}
\]
Note that the values in the input word are not represented in the graph $\G(w)$; they will be represented later using the special symbol $\val$.

A graph $\G(w)$ specifies a relational structure whose domain is the set of vertices and which has a binary predicate $\edge^w_\tg{a}$ for every tag $\tg{a} \in \Sigma$ consisting of the $\tg{a}$-labelled edges.
We use the \emph{monadic second-order language} $\MSO(\Sigma)$ to express properties of words when represented as graphs. For every tag $\tg{a}$ of the alphabet $\Sigma$, the language has a binary predicate symbol $\edge_\tg{a}$. Informally, $\edge_\tg{a}(x,y)$ means that there is an edge from $x$ to $y$ labeled with the tag $\tg{a}$. The language also contains the equality predicate $=$ (i.e., $x=y$ means that the vertices $x$ and $y$ are equal) and the containment predicate $\in$ (i.e., $x \in X$ means that the vertex $x$ belongs to the set of vertices $X$). The formulas of $\MSO(\Sigma)$ are built from atomic formulas using the Boolean connectives, first-order quantification (over vertices of $\G(w)$), and second-order monadic quantification (over sets of vertices of $\G(w)$). We do not need to include $\le$, because we will shortly introduce it as an abbreviation.

Note that we use binary predicates $\edge_{\tg{a}}(x,y)$ to encode tags, instead of the more commonly used unary predicates \cite{T1997LAL}. That is, $\G(w)$ has labeled edges instead of labeled vertices. The reason for this encoding is that we want to allow transductions that map the empty input word into some output graph (as is done, for example, in page 232 of \cite{EH2001}). In any case, this difference is inconsequential for the definable class of functions.

We write $\Word(\Sigma)$ for the class of graph structures over the signature of $\MSO(\Sigma)$ that correspond to finite words over $\Sigma$. If $\phi$ is a sentence of $\MSO(\Sigma)$ and $w$ is a word over $\Sigma$, then we write $w \models \phi$ to denote that the graph $\G(w)$ satisfies $\phi$. The abbreviation $\edge(x,y) := \tbigvee_{\tg{a} \in \Sigma} \edge_{\tg{a}}(x,y)$ says there is an edge from $x$ to $y$. Moreover, we encode $\le$ as follows:
\[
(x \le y) := \forall X. x \in X \land (\forall u \forall v. u \in X \land \edge(u,v) \to v \in X) \to y \in X.
\]

A \emph{ranked alphabet} $\Gamma$ is a finite set with assignment of a natural number to each element, called its \emph{arity}. A \emph{directed acyclic graph} (DAG) over a set of ranked symbols $\Gamma$ has $\Gamma$-labeled vertices, a single sink vertex (its \emph{root}), and edges indexed between $1$ and the maximum arity, such that the graph respects the arity of the symbols of $\Gamma$. For example, suppose $\Gamma = \{ a,b,f \}$, where $a$ and $b$ are constants (arity 0) and $f$ is binary (arity 2). The following graph is a DAG over $\Gamma$, where we use arrows $\to$ for the first argument of $f$, and arrows $\tto$ for the second argument of $f$:
\[
  \includegraphics{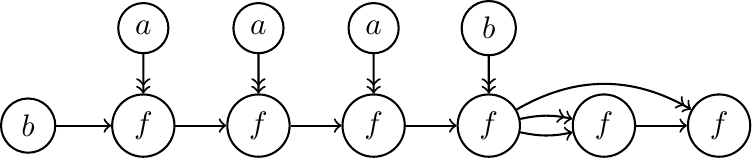}
\]

\begin{definition}[Forward-only word-to-DAG MSO transductions]
\label{def:msoTransduction}
\normalfont
Let $\Gamma = \O \cup \{ \val \}$ be the ranked alphabet that includes the operations $\O$ on the data values $D$ as well as a fresh constant symbol $\val$.
Let $\imax$ denote the maximum arity of a symbol in $\Gamma$.
A \emph{forward-only MSO-definable word-to-DAG transduction} over $\O$ from $(\Sigma \times D)\kstar$ to $D$ consists of:
\begin{enumerate}
\item
An $\MSO(\Sigma)$ sentence $\phi_\dom$, called the \emph{domain sentence}.
\item
A finite \emph{copy set} $C$.
\item
For every copy element $c \in C$ and every symbol $\gamma \in \Gamma$, an $\MSO(\Sigma)$ \emph{vertex formula} $\phi^c_\gamma(x)$ with one free variable $x$.
\item
For all copy elements $c,d \in C$ and for all argument indices $i$ from $1$ to $\imax$, an $\MSO(\Sigma)$ \emph{edge formula} (for the $i$-th argument) $\psi_i^{c,d}(x,y)$ with two free variables $x$ and $y$ s.t. $\models \forall x \forall y.\, \psi_i^{c,d}(x,y) \to x \leq y$.
\end{enumerate}
\emph{such that} all of the well-formedness conditions shown in Observation~\ref{obs:wellformedness} hold.
These conditions assert that the formulas define a properly structured DAG over the symbols $\Gamma$ 
in the output structure (which consists of $C$ copies of the vertices of $\G(w)$).

\textbf{\em Semantics}:
The MSO transduction \emph{denotes} a partial function $\tau: (\Sigma \times D)\kstar \pto D$ whose domain is the set of models $\sem{\phi_\dom}$ of the domain sentence. For every data word $w$ satisfying $\sem{\phi_\dom}$, the transduction specifies a single DAG over the symbols $\O \cup \{ \val \}$. Every symbol in $\O$ is interpreted as an operation on $D$, and $\val$ is a special symbol for the data value of the input string at the corresponding position. The DAG therefore can be evaluated to a value in $D$, which is defined to be the value of $\tau$ on $w$.
\end{definition}

Note that while the formulas above \emph{define} an output graph, they are only logically \emph{interpreted over} an input graph $\G(w)$. This is why they define a single output structure (not, e.g., a set of output structures or a relation between input and output structures). The well-formedness conditions of Observation~\ref{obs:wellformedness} then guarantee this output structure has the correct form. These conditions are stated as MSO formulas, but it is not actually necessary for our results that they are expressible in MSO; rather, we only use the fact that the output of every forward-only MSO transduction is a properly-structured DAG, and we have baked this condition into the definition of such transductions. However, the fact that they are $\MSO(\Sigma)$ formulas means the conditions are effectively checkable (decidable).

\begin{observation}[Well-formedness]
\label{obs:wellformedness}
\normalfont
We list here the well-formedness conditions for the forward-only MSO-definable transductions of Definition~\ref{def:msoTransduction}. Note that all conjunctions and disjunctions ($\tbigvee,\tbigwedge$) are finite, and there are finitely many formulas overall.
Logical validity is over all structures in $\Word(\Sigma)$. I.e., for every input graph $\G(w) \in \Word(\Sigma)$ (and all interpretations of the free variables), the formula should be true.
\begin{enumerate}[(1)]
\addtolength{\itemsep}{-1ex}
\item
\textbf{At most one vertex label}:
$\tbigwedge_{\gamma \in \Gamma} \tbigwedge_{\delta \in \Gamma, \delta\neq\gamma} \neg\phi^c_\gamma(x) \lor \neg\phi^c_\delta(x)$
is valid for every $c \in C$.
\item
\textbf{Edges connect active vertices}: $\psi_i^{c,d}(x,y) \to \phi^c(x) \land \phi^d(y)$ is valid for all $c,d \in C$ and $i=1,2,\ldots,\imax$, where $\phi^c(x) := \tbigvee_{\gamma \in \Gamma} \phi^c_\gamma(x)$ abbreviates that the $c$-vertex at position $x$ is active. Informally, a vertex $(c,x)$ of the output graph (where $c$ is an element of the copy set $C$ and $x$ is a position) is \emph{active} if it is labeled with some symbol, i.e. $\phi^c(x)$ is true (when interpreted in the input graph).
\item
\textbf{(Non)existence of output}:
The formulas
$\phi_\dom \to \exists x \tbigvee_{c \in C} \phi^c(x)$ and $\neg\phi_\dom \to \forall x \tbigwedge_{c \in C} \neg\phi^c(x)$ are valid.$\val$
\item
\textbf{No local cycle at any position}: For every potential cycle $c_1 \to^{i_1} c_2 \to^{i_2} \cdots \to^{i_{k-1}} c_k \to^{i_k} c_1$ of length $k \leq |C|$, where $c_j \in C$ and $1 \le i_j \le \imax$ are edge labels, the following formula is valid:
\[
  \lnot \, \exists x.\,
  \psi_{i_1}^{c_1,c_2}(x,x) \land
  \psi_{i_2}^{c_2,c_3}(x,x) \land
  \cdots \land
  \psi_{i_k}^{c_k,c_1}(x,x).
\]
\item
\textbf{The arity of labels is respected}: For every symbol $\gamma \in \Gamma$ of arity $n$, every $i=1,\ldots,n$, and every $c \in C$, the following formula is valid:
\begin{align*}
\phi^d_\gamma(y) \to \exists x.\, \Bigl( &
\tbigvee_c(\psi_i^{c,d}(x,y) \land \tbigwedge_{\hat c \neq c} \neg\psi_i^{\hat c,d}(x,y)) \land {}
\\[-0.5ex]
&\forall \hat x \tbigwedge_{\hat c} (
   \psi_i^{\hat c,d}(\hat x,y) \to \hat x = x)
\Bigr).
\end{align*}
\item
\textbf{Global uniqueness of sink}: For all $c \in C$, the formula
\[
  \sink_c(x) \to
  \tbigwedge_{d \neq c}\neg\sink_c(x) \land
  \forall y.(y \neq x \to \tbigwedge_d \neg\sink_d(y))
\]
is valid, where $\sink_c(x) := \forall y \tbigwedge_d \tbigwedge_i \neg\psi^{c,d}_i(x,y)$ abbreviates that there is no outgoing edge from copy $c$ at position $x$.
\item
\textbf{No value at first position}:
$\first(x) \to \tbigwedge_c \neg\phi^c_\val(x)$ is valid,
where $\first(x) := \forall y.\, x \le y$ abbreviates the very first position. The constant symbol $\val$, which is meant to refer to the input data value immediately before the point of occurrence, should not appear in this position.
\end{enumerate}
\end{observation}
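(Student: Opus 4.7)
The plan is to verify two complementary facts about the seven conditions listed in the observation. First, each condition is indeed an $\MSO(\Sigma)$ sentence over the transduction data $(\phi_\dom, C, \{\phi^c_\gamma\}, \{\psi_i^{c,d}\})$, so well-formedness is decidable by classical results about MSO over words. Second, the conjunction of the seven conditions is equivalent to the statement that for every $w \models \phi_\dom$, the output structure on vertex set $C \times \{0,1,\ldots,|w|\}$ specified by the formulas is a properly-structured DAG over the ranked alphabet $\Gamma = \O \cup \{\val\}$, with a unique sink and with $\val$ never placed at position $0$.

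For the first fact I would walk through the list and note that every condition is a finite Boolean combination (indexed by the finite sets $\Gamma$, $C$, and $\{1,\ldots,\imax\}$) of the basic vertex and edge formulas $\phi^c_\gamma$ and $\psi_i^{c,d}$, together with equality and first-order quantification over positions. The abbreviations $\leq$, $\edge$, $\first$, $\phi^c$, and $\sink_c$ each unfold into bona fide $\MSO(\Sigma)$ formulas (with $\leq$ using one monadic quantifier), so the whole check stays inside $\MSO(\Sigma)$.

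For the second fact I would read off the intended semantic content of each condition. Condition (1) makes the labeling $(c,x) \mapsto \gamma$ a partial function. Condition (2) ensures edges connect only active (labeled) vertices. Condition (3) aligns the set of active vertices with the domain of the transduction. Condition (5) enforces that every active vertex labeled by a symbol of arity $n$ receives, for each argument index $i \leq n$, exactly one incoming $i$-edge from a uniquely identified source $(c,x)$. Condition (6) gives the unique sink required to evaluate the DAG to a single value. Condition (7) reflects the semantic role of $\val$: its value is the datum attached to the input position where it lives, and the leftmost vertex of $\G(w)$ corresponds to position $0$, which carries no input datum.

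The main subtle step is acyclicity. The edge formulas are required a priori to satisfy $\psi_i^{c,d}(x,y) \to x \leq y$, so every edge of the output goes weakly forward in input position. Any cycle must therefore consist entirely of edges with $x = y$, i.e.\ edges confined to a single input position. At any fixed position there are at most $|C|$ active output vertices, so such a cycle has length at most $|C|$; condition (4) explicitly rules out every potential local cycle of length up to $|C|$. Combining this with conditions (2), (5), and (6), we conclude that the output is a DAG over $\Gamma$ respecting the arity constraints and having a unique sink, which is precisely the "properly-structured DAG" required by Definition~\ref{def:msoTransduction}.
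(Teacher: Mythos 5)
The paper offers no proof of this observation at all: it is definitional. The surrounding text states explicitly that the well-formedness conditions are ``baked into the definition'' of a forward-only transduction, that the only fact used downstream is that the output of such a transduction is a properly-structured DAG, and that the MSO-expressibility of the conditions matters only because it makes them effectively checkable. So there is no paper proof to match your argument against; what you have written is a verification of the two claims that the paper leaves implicit. As such a verification, your proposal is sound. The first half (each condition is a finite Boolean combination of the given vertex and edge formulas plus first-order quantification, with the abbreviations $\leq$, $\first$, $\phi^c$, $\sink_c$ unfolding into genuine $\MSO(\Sigma)$ formulas, so validity over $\Word(\Sigma)$ is decidable by B\"uchi--Elgot) is exactly the point the paper gestures at. The second half contains the one piece of real reasoning, and you identify it correctly: because Definition~\ref{def:msoTransduction} already forces $\psi_i^{c,d}(x,y) \to x \leq y$, any cycle in the output graph has all its positions equal, hence lives at a single input position, hence (after extracting a simple cycle, whose copy elements are distinct) has length at most $|C|$, which is precisely what condition (4) excludes. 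That is the argument the reader is expected to supply, and you supply it.

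Two small cautions. First, you claim the conjunction of the seven conditions is \emph{equivalent} to the output being a properly-structured DAG; the paper only needs (and only asserts) sufficiency, and since the conditions \emph{are} the paper's formalization of ``properly structured,'' there is no independent target for an equivalence. In particular, condition (5) guarantees exactly one incoming $i$-edge for each $i$ up to the arity of the label but does not, as written, forbid incoming edges with index exceeding the arity, so the reverse implication would need care if one insisted on it. Second, you silently repair two apparent typos in the statement (the free $d$ versus the quantified $c$ in condition (5), and $\neg\sink_c(x)$ where $\neg\sink_d(x)$ is surely intended in condition (6)); your reading is the intended one, but it is worth flagging rather than assuming.
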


\begin{example}[MSO Transduction]
\normalfont
\label{ex:transduction1}
The tag alphabet is $\Sigma = \{ \tg{A}, \tg{B} \}$ and the data domain $D$ is the set of natural numbers. The collection $\O$ of operations on $D$ contains the constant 0 and addition.
\[
  \includegraphics{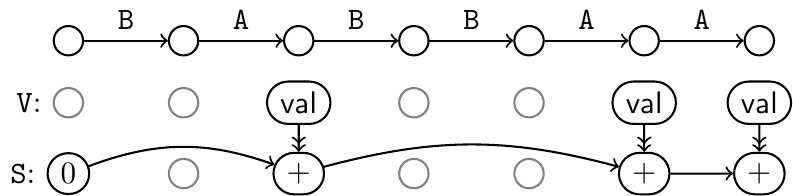}
\]
We will describe the transduction that sums up all data values that are tagged with $\tg{A}$. The copy set is taken to be $C = \{ \tg{V}, \tg{S} \}$. The domain sentence is $\phi_\dom = \true$ and the vertex formulas are:
\begin{align*}
\phi^\tg{V}_\val(x) &=
\phi^\tg{S}_+(x) =
\edge_\tg{A}(x{-}1,x)
&
\phi^\tg{S}_0(x) &= \first(x)
\end{align*}
where $\edge_\tg{A}(x-1,x)$ is notation for $\exists y. \edge_\tg{A}(y,x)$. The edge formulas are given as follows:
\begin{align*}
\psi^{\tg V, \tg S}_2(x,y) = {}
&(x=y) \land \edge_\tg{A}(x{-}1,x)
\\
\psi^{\tg S, \tg S}_1(x,y) = {}
&(\first(x) \lor \edge_\tg{A}(x{-}1,x) ) \land 
\edge_\tg{A}(y{-}1,y) \land {}
\\
&\forall z.\, x<z<y \to \neg\edge_\tg{A}(z{-}1,z)
\end{align*}
All omitted vertex and edge formulas are equal to $\false$.
\end{example}

As we will see later, the data transductions that are definable by forward-only word-to-DAG MSO-transductions are exactly the streamable regular transductions. Similarly, the data transductions that are definable by forward-only word-to-tree MSO-transductions will be shown to be exactly the streamable linear regular transductions. See Table~\ref{table:classesMSO}. The condition that the output graph is a tree can be guaranteed by requiring that every vertex of the graph has at most one outgoing edge. This condition can be expressed by an MSO formula, and therefore its validity can be effectively checked.

\begin{table}[t]
\centering
\begin{tabular}{c|c}
Output graph & Class of transductions
\\ \hline
Tree
& Streamable Linear Regular
\\
DAG
& Streamable Regular
\end{tabular}
\caption{Classes of MSO-definable transductions with forward-only edges.}
\label{table:classesMSO}
\end{table}

\begin{definition}[Models of Open Formulas]
\normalfont
Let $\phi(x)$ be an $\MSO(\Sigma)$ formula with one free individual variable. The set of \emph{models} of $\phi(x)$, denoted $\sem{\phi(x)}$,
\[
  \includegraphics{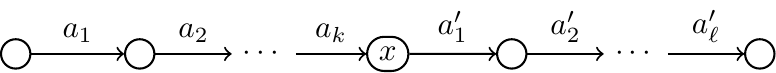}
\]
is defined to be the set of pairs $(u,v) \in \Sigma\kstar \times \Sigma\kstar$ such that the word $u \cdot v$ satisfies $\phi(x)$ under the variable assignment $x \mapsto |u|$.

Let $\psi(x,y)$ be an $\MSO(\Sigma)$ formula with two free individual variables such that $\models \psi(x,y) \to x \leq y$. The set of \emph{models} of $\psi(x,y)$, denoted $\sem{\psi(x,y)}$,
\[
  \includegraphics{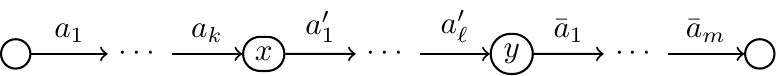}
\]
is defined to be the set of triples $(u,v,w) \in \Sigma\kstar \times \Sigma\kstar \times \Sigma\kstar$ such that the word $u \cdot v \cdot w$ satisfies $\psi(x,y)$ under the variable assignment $x \mapsto |u|$, $y \mapsto |uv|$.
\end{definition}

\begin{lemma}[\cite{EH2001MDST}, Lemma 5]
\label{lemma:MSOtoRegex}
\normalfont
Let $\phi(x)$ be an $\MSO(\Sigma)$ formula with one free individual variable. The set of models $\sem{\phi(x)}$ is a finite disjoint union of sets of the form $\sem{r_1} \times \sem{r_2}$, where $r_1$ and $r_2$ are regular expressions over $\Sigma$.

Similarly, suppose that $\psi(x,y)$ is an $\MSO(\Sigma)$ formula with two free individual variables such that $\models \psi(x,y) \to x \leq y$. The set of models $\sem{\psi(x,y)}$ is a finite disjoint union of sets of the form $\sem{r_1} \times \sem{r_2} \times \sem{r_3}$, where $r_1$, $r_2$ and $r_3$ are regular expressions over $\Sigma$. \end{lemma}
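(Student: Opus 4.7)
The plan is to reduce the lemma to Büchi's theorem (MSO on finite words equals regular languages) by internalizing the free individual variables as position markers. For the one-variable case, I introduce a fresh symbol $\ast \notin \Sigma$ and work over the enlarged alphabet $\Sigma \cup \{\ast\}$. To every pair $(u,v) \in \sem{\phi(x)}$ I associate the single word $u \cdot \ast \cdot v$, so that the position of $\ast$ encodes the valuation of $x$. The MSO formula $\phi(x)$ over $\Sigma$-structures can be converted into an MSO sentence $\tilde\phi$ over $(\Sigma \cup \{\ast\})$-structures by adding a conjunct that says ``$\ast$ occurs exactly once'' and interpreting the free variable $x$ as the unique position right before the $\ast$. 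Then the language
\[
  L_\phi = \{\, u\,\ast\,v \mid (u,v) \in \sem{\phi(x)} \,\}
\]
is exactly the language of $\tilde\phi$, hence regular by Büchi's theorem.

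Next I pick a \emph{deterministic} finite automaton $\aut{A} = (Q, q_0, \delta, F)$ for $L_\phi$. For every pair of states $(p, p') \in Q \times Q$ with $\delta(p, \ast) = p'$, set
\[
  L_1^p = \{\, u \in \Sigma\kstar \mid \delta^*(q_0, u) = p \,\}
  \quad\text{and}\quad
  L_2^{p'} = \{\, v \in \Sigma\kstar \mid \delta^*(p', v) \in F \,\}.
\]
Both are regular, since each is the language of a DFA with a single designated start and final set. I then claim
\[
  \sem{\phi(x)} \;=\; \bigcup_{(p,p'):\,\delta(p,\ast)=p'} L_1^p \times L_2^{p'},
\]
and moreover the union is disjoint: for any $(u,v) \in \sem{\phi(x)}$, determinism forces $p = \delta^*(q_0, u)$ and $p' = \delta(p, \ast)$ to be uniquely determined by $u$. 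Expressing each $L_i$ by a regular expression yields the desired decomposition of $\sem{\phi(x)}$.

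For the two-variable case I use two fresh markers $\ast_x, \ast_y$ and encode each $(u,v,w) \in \sem{\psi(x,y)}$ by the single word $u \cdot \ast_x \cdot v \cdot \ast_y \cdot w$. The hypothesis $\models \psi(x,y) \to x \le y$ guarantees that $\ast_x$ precedes $\ast_y$, so every model has this canonical shape. Again by Büchi, the resulting language is regular, and picking a DFA for it I decompose $\sem{\psi(x,y)}$ by summing over tuples $(p_0, p_1, p_2, p_3)$ of states with $\delta(p_0, \ast_x) = p_1$ and $\delta(p_2, \ast_y) = p_3$, taking the three factors to be the sets of $\Sigma$-strings driving $q_0 \to p_0$, $p_1 \to p_2$, and $p_3 \to F$ respectively. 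Determinism again yields a disjoint decomposition into products of three regular languages.

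The only mildly delicate point is the first one: showing that $\phi(x)$ genuinely translates into an MSO sentence $\tilde\phi$ over the extended alphabet whose graphs include a marker. This amounts to defining, inside $\tilde\phi$, the restriction of the edge predicates to the non-marked positions (so that the Büchi argument applies to an honest $\Sigma$-word), and replacing each occurrence of the free variable $x$ by a bound variable constrained to sit at the unique $\ast$-position; this is a routine syntactic translation. Everything else is bookkeeping, and the disjointness of the decomposition falls out for free from determinism of the DFA.
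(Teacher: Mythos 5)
Your proof is correct and follows essentially the same route as the paper's sketch: both reduce to a deterministic automaton recognizing the marked models (the paper labels DFA states with $x$ where you insert an explicit marker letter $\ast$, a cosmetic difference) and then decompose $\sem{\phi(x)}$ into products of the prefix language reaching the marked state and the suffix language accepted from it, with disjointness following from determinism. Your write-up is somewhat more detailed than the paper's, but there is no substantive difference in approach.
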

\begin{proof}[Proof sketch]
This is a folklore result for regular languages (see, for example, Lemma 5 of \cite{EH2001MDST}). For the first part of the lemma, the idea is to construct the DFA $\aut A$ that accepts the models of $\phi(x)$. The annotation with $x$ is performed by the automaton by allowing some states to be labeled with $x$. For every state $q$ of $\aut A$ that is labeled with $x$, we include the regular expression for the set $L_q \cdot R_q$, where $L_q$ is the set of strings that lead from the initial state to $q$, and $R_q$ is the set of strings that lead from $q$ to a final state of $\aut A$. The construction for the second part of the lemma is similar.
\end{proof}

Using Lemma~\ref{lemma:MSOtoRegex} an MSO transduction can be transformed into a set of rules that are expressed equivalently with regular expressions instead of MSO formulas. The domain sentence $\phi_\dom$ can be given as a regular expression. A vertex formula $\phi^c_\gamma(x)$ can be given as a finite set of pairwise disjoint \emph{vertex rules} of the form
\[
  \text{label $\gamma$ at $c$}:\ r_1\sepa r_2,
\]
where $r_1$ and $r_2$ are regular expressions. The rule specifies a prefix (from start to $x$) with $r_1$ and a suffix (from $x$ to end) with $r_2$ so that the copy $c$ of the output DAG is active at position $x$. Similarly, an edge formula $\psi^{c,d}_i(x,y)$ of the transduction can be presented as a finite set of pairwise disjoint \emph{edge rules} of the forms
\begin{align*}
\text{edge $c \to_i d$} &:\ r_1\sepa r_2\sepa r_3
&&\text{[forward edges]}
\end{align*}
where $r_1$, $r_2$ and $r_3$ are regular expressions. The rule specifies a prefix (from start to $x$) with $r_1$, a middle part (from $x$ to $y$) with $r_2$, and a suffix (from $y$ to end) with $r_3$ so that the output DAG has an edge from copy $c$ at position $x$ to copy $d$ at position $y$. We have required in the definition that edges be forward-only, so there are no models in which $y$ comes before $x$ in $\psi^{c,d}_i(x,y)$.

\begin{example}
\label{ex:transduction1regex}
\normalfont
Using Lemma~\ref{lemma:MSOtoRegex} we will present the MSO transduction of Example~\ref{ex:transduction1} equivalently with regular expressions. We have: $\phi_\dom = \Sigma\kstar$ and
\begin{align*}
\phi^\tg{V}_\val =
\phi^\tg{S}_+ &=
(\Sigma\kstar \tg{A})\sepa \Sigma\kstar
&
\psi^{\tg V, \tg S}_2 &=
(\Sigma\kstar \tg{A})\sepa \eps\sepa \Sigma\kstar
\\
\phi^\tg{S}_0 &= \eps\sepa \Sigma\kstar
&
\psi^{\tg S, \tg S}_1 &=
(\eps \cup \Sigma\kstar \tg{A})\sepa
(\tg{B}\kstar \tg{A})\sepa \Sigma\kstar
\end{align*}
The vertex rule $\phi^{\tg S}_+$, for example, says that the copy $\tg S$ is active with label $+$ at every position immediately after the occurrence of a letter $\tg A$ in the input string. Note that here, each formula becomes a single rule; in general, each formula will become a set of pairwise disjoint rules.
\end{example}

A MSO transduction is said to be \textbf{\em single-step} if the edges that it specifies are restricted so that they refer to positions that are at most 1 apart, that is: the formula
\[
  \psi_i^{c,d}(x,y) \to (x=y) \lor \edge(x,y)
\]
is valid for all $c,d \in C$ and every argument index $i$.

\begin{lemma}[Single-Step Decomposition]
\label{lemma:singleStep}
\normalfont
Let $\Sigma$ be a finite tag alphabet, and $D$ be a set of data values with operations $\O$. A forward-only MSO-definable word-to-DAG transduction over $\Sigma, D, \O$ can be transformed into an equivalent single-step transduction over $\Sigma, D, \O \cup \{ \id \}$, where $\id: D \to D$ is the identity function on $D$.
\end{lemma}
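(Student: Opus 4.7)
The plan is to replace each edge of length greater than one with a chain of single-step edges labeled by the identity operation $\id$. By Lemma~\ref{lemma:MSOtoRegex}, each edge formula $\psi^{c,d}_i(x,y)$ can be rewritten as a finite disjoint union of edge rules of the form $r_1 \sepa r_2 \sepa r_3$. Rules whose middle language $r_2$ is contained in $\{\eps\} \cup \Sigma$ already define single-step edges and are left untouched; we only need to decompose the rules whose middle can match strings of length at least two.

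Fix such a rule $R$ defining edges from copy $c$ to copy $d$ at argument index $i$, and let $A_R = (S_R, s_0, \delta_R, F_R)$ be a DFA recognizing $r_2$. Introduce a new copy element $c^R_s$ (with label $\id$) for each non-initial state $s \in S_R$. Declare $(c^R_s, z)$ to be active precisely when there exist $x,y$ with $x < z \le y$ satisfying (i) $a_1 \cdots a_x \in r_1$, (ii) running $A_R$ from $s_0$ on $a_{x+1} \cdots a_z$ ends in state $s$, (iii) $a_{x+1} \cdots a_y \in r_2$, and (iv) $a_{y+1} \cdots a_n \in r_3$; each of these conditions is MSO-expressible because DFA execution is MSO-definable. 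The new edges are all single-step: (a) source connections $(c,x) \to (c^R_{s_1}, x+1)$ with $s_1 = \delta_R(s_0, a_{x+1})$; (b) internal chain edges $(c^R_{s'}, z-1) \to (c^R_s, z)$ whenever $\delta_R(s', a_z) = s$ and both endpoints are active; and (c) same-position endpoint edges $(c^R_{s_f}, y) \to^i (d,y)$ for $s_f \in F_R$ that replace the original long edge from $(c,x)$ to $(d,y)$.

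The key technical claim is that for each rule $R$ the new copies behave like vertices, i.e., at most one chain can occupy any copy $c^R_s$ at any position $z$. Suppose $(x_1, y_1)$ and $(x_2, y_2)$ are two distinct chains of $R$ passing through $z$ in the same DFA state $s$. Because $A_R$ is deterministic, both chains follow identical DFA paths from $z$ onward, so the chain originating at $x_2$ also reaches an accepting state at $y_1$. Since $a_1 \cdots a_{x_2} \in r_1$ (from chain 2 being valid) and $a_{y_1+1} \cdots a_n \in r_3$ (from chain 1 being valid), the pair $(x_2, y_1)$ matches rule $R$ and supplies a second source for vertex $(d, y_1)$ at index $i$, contradicting the arity well-formedness condition (Observation~\ref{obs:wellformedness}) of the original transduction. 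This is the main obstacle; the rest is bookkeeping.

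The remaining verifications are routine. Every new edge is manifestly single-step; each new $\id$-labeled vertex receives exactly one incoming edge, so arity is respected; no new cycles are introduced because every new edge goes from position $z{-}1$ to $z$ or from $z$ to $z$; vertex labels are unique; and sinks and first-position constraints are unaffected. Finally, the value computed at $(d,y)$ is unchanged, because an $\id$-chain merely propagates the value of its source $(c,x)$. The domain sentence, vertex formulas for original copies, and the original short-edge rules are all kept as-is. Applying the construction to every long rule in every edge formula yields an equivalent single-step transduction over $\O \cup \{\id\}$.
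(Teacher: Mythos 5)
Your construction follows the same route as the paper's proof: present the transduction with regular-expression rules via Lemma~\ref{lemma:MSOtoRegex}, take a DFA for the middle expression $r_2$ of each offending rule, introduce one fresh $\id$-labelled copy per DFA state, and replace each long edge by a chain of single-letter edges simulating the DFA run, with same-position edges at the endpoints. Your explicit argument that two distinct chains cannot occupy the same copy at the same position --- determinism forces their runs to merge, and the merged run would yield two sources for a single argument of $(d,y)$, contradicting the arity condition of the \emph{original} transduction --- is a welcome justification of a step the paper only asserts.

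There is, however, one concrete flaw: you introduce a copy $c^R_s$ only for the \emph{non-initial} states of $A_R$. If the DFA for $r_2$ has a transition back into its initial state $s_0$ (e.g.\ the minimal DFA for $\Sigma\kstar a$ over $\Sigma=\{a,b\}$ sends $b$ from either state back to $s_0$), then a match whose run visits $s_0$ at an interior position has no vertex to stand on, and the corresponding output edge is silently dropped; the same omission loses the edge when $\eps\in\sem{r_2}$ and $y=x$. The paper avoids this by keeping a copy $v_q$ for \emph{every} state and attaching the source by a same-position $\eps$-edge $c\to_1 v_{s_0}$; alternatively, first normalize $A_R$ so that $s_0$ has no incoming transitions and split off the $\eps$-part of $r_2$ into a separate rule. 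Two smaller points: (i) your key claim must be strengthened --- you must also rule out two \emph{active} chain vertices at position $z{-}1$ in \emph{distinct} states $s_1'\neq s_2'$ with $\delta_R(s_1',a_z)=\delta_R(s_2',a_z)$, since that is exactly what would give an $\id$-vertex two incoming argument-$1$ edges; the same merging argument disposes of this case, but ``each new $\id$-labeled vertex receives exactly one incoming edge'' does not follow from the claim as you state it. (ii) The endpoint edge $(c^R_{s_f},y)\to^i(d,y)$ must be guarded by the condition that the suffix from $y$ lies in $r_3$, not merely that $s_f$ is accepting and the vertex is active; otherwise a chain passing through an accepting state before its true endpoint emits a spurious edge into a possibly inactive $(d,y)$.
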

\begin{proof}
Suppose that the transduction is presented in the form using regular expressions instead of MSO formulas, as described earlier. We will show how to eliminate rules that violate the single-step constraint by replacing them with rules that are either $\eps$-transitions or single-letter transitions.

Consider a rule $\psi^{c,d}_i = c \to_i d: r_1\sepa r_2\sepa r_3$ for $i$-edges from copy $c$ to copy $d$ which is not an $\eps$-transition, i.e. $r_2 \neq \eps$. The idea is to decompose an edge specified by the rule, which spans a substring matching $r_2$, by simulating the execution of an automaton for the expression $r_2$. Take the minimal DFA for $r_2$ and prune it into an equivalent partial DFA so that every state has a path to some final state. Suppose this automaton is $(Q,\Sigma,\Delta,q_0,F)$. For a state $q \in Q$ define the regular set
\[
  L_q =
  \{ w \in \Sigma\kstar \mid \Delta(q_0,w) = q \},
\]
that is, $L_q$ is the set of strings over $\Sigma$ that lead from the initial state $q_0$ to $q$. Moreover, define the regular set
\[
  R_q =
  \{ w \in \Sigma\kstar \mid \Delta(q,w) \in F \},
\]
that is, $R_q$ is the set of strings over $\Sigma$ that lead from $q$ to some final state. Notice that $L_q \cdot R_q \subseteq r_2$ for every state $q$, and additionally $r_2 = \tbigcup\{ L_q \mid q \in F \}$.

For every state $q$ of the automaton, we introduce a fresh copy element $v_q$. We can replace the rule $\psi^{c,d}_i$ by the following set of vertex and edge rules:
\begin{align*}
\text{label $\id$ at $v_q$} &:
(r_1 \cdot L_q)\sepa (R_q \cdot r_3)
\\
\text{edge $c \to_1 v_I$} &:
r_1\sepa \eps\sepa (r_2 \cdot r_3)
\\
\text{edge $v_q \to_i d$} &:
(r_1 \cdot L_q)\sepa \eps\sepa r_3,
\ \text{for every $q \in F$}
\\
\text{edge $v_p \to_1 v_q$} &:
(r_1 \cdot L_p)\sepa \tg{a}\sepa (R_q \cdot r_3),
\ \text{ for every } (p,\tg{a},q) \in \Delta.
\end{align*}
Since the automaton is deterministic, we are guaranteed that all of these rules are disjoint, and moreover that the modified transduction satisfies all the well-formedness conditions. In addition, it preserves the property of the output DAG being a tree.

The construction is iterated until there is no edge rule left that violates the single-step condition. The resulting transduction can then be equivalently presented using MSO formulas.
\end{proof}

\begin{example}[Single-step]
\label{ex:transduction1singleStep}
\normalfont
We will apply now the construction of the proof of Lemma~\ref{lemma:singleStep} to the transduction of Example~\ref{ex:transduction1}. Using the presentation of Example~\ref{ex:transduction1regex} we see that the only edge rule that violates the single-step condition is $\psi^{\tg S, \tg S}_1$. Now, the pruned partial DFA for $\tg{B}\kstar \tg{A}$ is the following:
\begin{center}
\text{\includegraphics{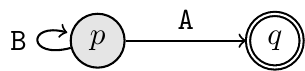}}
\qquad
$\begin{aligned}[b]
L_p &= \tg{B}\kstar
&
R_p &= \tg{B}\kstar \tg{A}
\\
L_q &= \tg{B}\kstar \tg{A}
&
R_q &= \eps
\end{aligned}$
\end{center}
The initial state is indicated with a gray background. We introduce fresh copy elements $p$ and $q$, one for every state of the automaton for $\tg{B}\kstar \tg{A}$. So, the new copy set is $C' = \{ \tg{V}, \tg{S}, p, q \}$. The following figure illustrates how the edges defined by $\psi^{\tg S, \tg S}_1$ are decomposed into $\eps$-transitions and single-letter transitions.
\[
  \includegraphics{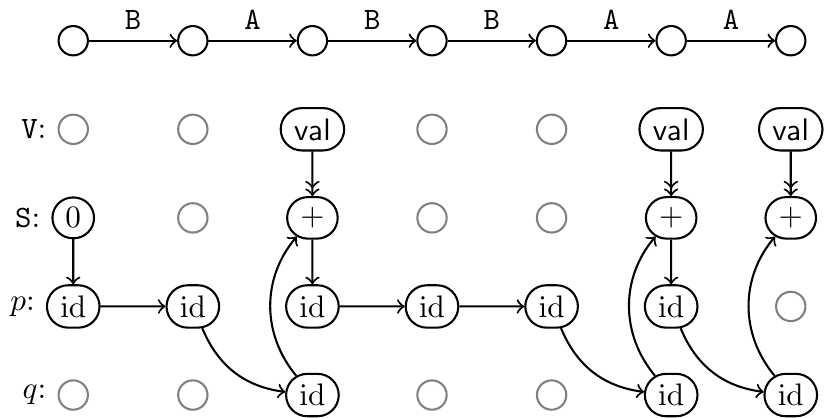}
\]
So, the transduction is specified by the following rules:
\begin{allowdisplaybreaks}
\begin{align*}
\begin{aligned}[t]
\phi_\dom &= \Sigma\kstar
\\
\phi^\tg{V}_\val &=
(\Sigma\kstar \tg{A})\sepa \Sigma\kstar
\\
\phi^\tg{S}_0 &= \eps\sepa \Sigma\kstar
\\
\phi^\tg{S}_+ &= (\Sigma\kstar \tg{A})\sepa \Sigma\kstar
\\
\phi^{p}_\id &=
(\eps \cup \Sigma\kstar\tg{A})\tg{B}\kstar\sepa
\tg{B}\kstar\tg{A}\Sigma\kstar
\\
\phi^{q}_\id &=
(\eps \cup \Sigma\kstar\tg{A})\tg{B}\kstar\tg{A}\sepa
\Sigma\kstar
\end{aligned}
&&
\begin{aligned}[t]
\psi^{\tg V, \tg S}_2 &=
(\Sigma\kstar \tg{A})\sepa \eps\sepa \Sigma\kstar
\\
\psi^{\tg S, p}_1 &=
(\eps \cup \Sigma\kstar\tg{A})\sepa
\eps\sepa
\tg{B}\kstar\tg{A}\Sigma\kstar
\\
\psi^{p, p}_1 &=
(\eps \cup \Sigma\kstar\tg{A})\tg{B}\kstar\sepa
\tg{B}\sepa
\tg{B}\kstar\tg{A}\Sigma\kstar
\\
\psi^{p, q}_1 &=
(\eps \cup \Sigma\kstar\tg{A})\tg{B}\kstar\sepa
\tg{A}\sepa
\Sigma\kstar
\\
\psi^{q, \tg S}_1 &=
(\eps \cup \Sigma\kstar\tg{A})\tg{B}\kstar\tg{A}\sepa
\eps\sepa
\Sigma\kstar
\end{aligned}
\end{align*}
\end{allowdisplaybreaks}%
The $\eps$-transitions $\psi^{\tg S, p}_1$, $\psi^{q, \tg S}$ and the single-letter transitions $\psi^{p,p}_1$, $\psi^{q,\tg S}_1$ expand the edges of $\psi^{\tg S, \tg S}_1$ according to the automaton for $\tg{B}\kstar\tg{A}$.
\end{example}

\begin{theorem}
\label{thm:MSOtoCRA}
\normalfont
For any $\O$, every single-step word-to-dag (resp., word-to-tree) forward-only MSO transduction over $\O$ can be implemented by a copyful (resp., copyless) UCRA over $\O$.
\end{theorem}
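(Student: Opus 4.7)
The plan is to construct, from a single-step forward-only MSO transduction with copy set $C$, a UCRA $\aut{A}$ with one register $r_c$ per $c \in C$, maintaining the invariant that after consuming the input prefix of length $x$, each register $r_c$ corresponding to an active vertex $(c,x)$ holds the value of the sub-DAG rooted at $(c,x)$. First, I would apply Lemma~\ref{lemma:MSOtoRegex} to replace every vertex formula $\phi^c_\gamma$ and every edge formula $\psi^{c,d}_i$ by a finite disjoint union of rules over regular expressions. The single-step requirement forces the middle regex of each edge rule to lie in $\{\eps\} \cup \Sigma$, so no edge spans more than one input symbol.

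Second, I would build the finite-state control. Every prefix regular expression is tracked deterministically by its forward DFA. Every suffix regular expression is handled by the standard backward subset construction: the control nondeterministically guesses, at position $x$, the function $f_x$ sending each DFA state $q$ of the suffix DFA to $[\delta^*(q, w[x{+}1..n]|_\Sigma) \in F]$, with successive guesses linked by $f_x(q) = f_{x+1}(\delta(q,a))$ on the next tag $a$, and final acceptance checking $f_n(q) = [q \in F]$. Because the correct $f_x$ is uniquely determined backwards from $f_n$ and the remaining input, the resulting NFA is unambiguous. From this combined state, together with knowledge of the next tag, one can deterministically identify which copies are active at positions $x$ and $x{+}1$, with which labels, and which edges $(c,z) \to_i (c', x{+}1)$ with $z \in \{x, x{+}1\}$ are present.

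Third, for the register updates: on reading $(a,d)$ to move from position $x$ to $x{+}1$, I would first topologically order the vertices active at position $x{+}1$ with respect to the same-position edges among them, which is possible by the local-acyclicity well-formedness condition of Observation~\ref{obs:wellformedness}. Then for each such $(c', x{+}1)$ with label $\gamma$ of arity $n$, whose $i$-th incoming edge originates at $(c_i, z_i)$, set $r_{c'} := \gamma(u_1, \ldots, u_n)$, where $u_i$ is the old $r_{c_i}$ if $z_i = x$ and the just-updated $r_{c_i}$ if $z_i = x{+}1$; for $\val$-labels set $r_{c'} := \val$. Inlining along the topological order compiles this into a single parallel assignment whose right-hand sides are expressions in $\expr_\O[X \cup \{\val\}]$ over the old $r_c$'s and $\val$. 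Initialization at position $0$ is analogous and uses only expressions in $\expr_\O[\emptyset]$, because Observation~\ref{obs:wellformedness} forbids $\val$-labels at the very first position. Each accepting state of $\aut{A}$ corresponds to a specific location of the unique sink (determined by the suffix guess), and the finalization expression at such a state is simply the appropriate $r_c$.

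For the word-to-tree case, each vertex has at most one outgoing edge, so each old $r_c$ appears in at most one right-hand side of the parallel update, preserving copylessness; in the word-to-DAG case a vertex may fan out and copyfulness is required. The main obstacle I expect is orchestrating the suffix guessing so that the control state contains precisely enough information at every step to locally recover every active vertex, every incoming edge's source, and, at input end, the unique sink's identity, while remaining uniquely verifiable against the actual input so that $\aut{A}$ is unambiguous. The topological handling of same-position edges within a single parallel update is a secondary subtlety that must be treated uniformly with the inter-position edges.
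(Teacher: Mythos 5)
Your construction is essentially the paper's: the paper builds the same unambiguous ``future automaton'' for the suffix tests (via Boolean atoms of a derivative-closed family of regular languages rather than your backward subset construction, but these yield the same unambiguous NFA), takes its product with a deterministic ``past automaton'' for the prefix tests, and reads the register updates off the resulting per-state ``shapes,'' exactly as you do, including the same treatment of copylessness in the tree case. The one detail you gloss over is that the unique sink may occur strictly before the end of the input, after which no vertex need be active; the paper explicitly propagates the final value forward through a designated register in that situation, and your updates must do the same for the finalization expression $r_c$ at an accepting state to be meaningful.
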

\begin{proof}
Suppose we are given an MSO transduction, presented in the form with regular expressions instead of formulas. Given an input string, a position $x$ in the string and a copy element $c$, in order to decide whether the vertex $(c,x)$ of the output DAG is labeled with $f$ we have to check if the suffix (from $x$ to the end) satisfies the regular expression $r_2$, where $\phi^c_f = r_1\sepa r_2$ is a vertex rule of the transduction. Similarly, to decide whether an edge $c \to_i d$ emanates from copy $c$ at position $x$ we have to check if the suffix (from $x$ to end) satisfies the expressions $r_2 \cdot r_3$, where $\psi^{c,d}_i = r_1\sepa r_2\sepa r_3$ is an edge rule of the transduction. Finally, to decide whether an edge $d \to_i c$ has the vertex $(c,x)$ as destination we have to check if the suffix (from $x$ to end) satisfies the expression $r_3$, where $\psi^{d,c}_i = r_1\sepa r_2\sepa r_3$ is an edge rule of the transduction.

For a regular expression $r$ over $\Sigma$, we write $\sem{r} \subseteq \Sigma\kstar$ to mean the regular set that $r$ denotes. The \emph{derivative} of a language $L \subseteq \Sigma\kstar$ w.r.t.\ a word $u \in \Sigma\kstar$ is the language $D_u(L) = \{ v \in \Sigma\kstar \mid uv \in L \}$. Observe that $D_v(D_u(L)) = D_{uv}(L)$ for all $L \subseteq \Sigma\kstar$ and $u,v \in \Sigma\kstar$.

Let $\R$ be the collection of all regular sets needed for these suffix tests, that is, $\sem{r_2}$ from every vertex rule $r_1\sepa r_2$, as well as both $\sem{r_2 \cdot r_3}$ and $\sem{r_3}$ from every edge rule $r_1\sepa r_2\sepa r_3$. Now, define $\R'$ to be the smallest collection of regular sets that contains $\R$ and is closed under derivatives. It can be seen that $\R' = \tbigcup_{L \in \R} \R'_L$, where $\R'_L$ is the collection of regular sets obtained in the following way: take the collection of all derivatives of $L$, which correspond to the unique minimal automaton for $L$. Finally, define $\At_\R$ to be the collection of atoms (nonempty minimal elements) of the Boolean algebra generated by $\R'$. The elements of $\At_\R$ partition $\Sigma\kstar$ (they are pairwise disjoint and their union is equal to $\Sigma\kstar$), and they define an \emph{unambiguous NFA}, where the language accepted from every state $T \in \At_\R$ is equal to $T$. To see how the transitions of the NFA $\At_\R$ are calculated, consider a state (atom) $T = L_1 \cap \compl L_2 \cap L_3$ where each of $L_i$ is an element of $\R'$. For a letter $\tg{a} \in \Sigma$ the derivative of $T$ is
\begin{align*}
D_\tg{a}(T) &=
D_\tg{a}(L_1) \cap D_\tg{a}(\compl L_2) \cap D_\tg{a}(L_3)
\\ &=
D_\tg{a}(L_1) \cap \compl D_\tg{a}(L_2) \cap D_\tg{a}(L_3),
\end{align*}
and it is partitioned by some atoms $T'_1, \ldots, T'_\ell$ of $\At_\R$ because every $D_\tg{a}(L_i)$ belongs to $\R'$. So, $T \to^\tg{a} T'_i$ are all the transitions from $T$ on letter $\tg{a}$ in the unambiguous NFA $\At_\R$. The automaton $\At_\R$ thus constructed is called the \emph{future automaton} for the transduction, because every state specifies the possible future suffixes.

Similarly, the application of the rules of the MSO transduction depends on regular properties of the input prefix seen so far. For a vertex rule $r_1\sepa r_2$, an $\eps$-transition rule $r_1\sepa \eps\sepa r_2$, and a letter transition rule $r_1\sepa \tg{a}\sepa r_2$, we need to check whether the input prefix satisfies the pattern $r_1$. To perform all these tests we construct the \emph{past automaton}, which simulates the parallel execution of automata for all these prefix patterns. The past automaton is always a DFA.

The product of the future automaton and the past automaton is an unambiguous NFA, which we call the \emph{future-past automaton} where every state specifies exactly which rules of the MSO transduction apply at the current position. So, every state of the future-past automaton specifies precisely the \emph{shape} of a position: which copy elements are active, which $\eps$-edges are enabled, and which outgoing letter-edges are enabled. A state only needs to specify values for variables whose values are going to be used in the next step of the computation. The expressions for these outgoing variables can be read off immediately from the shape.
For example, suppose the state $T'$ of the future-past automaton specifies the following active vertices and edges:
\[
  \includegraphics{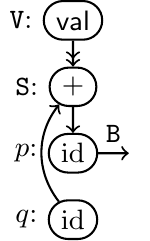}
\]
We know that any incoming transition $T \to T'$ from some state $T$ (e.g., states $q_1$, $q_3$ and $q_6$ in Figure~\ref{fig:futurePastAutShapes}) must specify a value for the variable $x_q$ (for the copy $q$), because it has to be used. The state $T'$ has to specify a new value for $x_p$, because the only outgoing letter-edge points to the copy $p$. So, the variable update for the transition $T \to T'$ must be $x_p := x_q + \val$ (see edges in Figure~\ref{fig:CRA1}). If a state is initial, then it must be possible to calculate the values for the variables without relying on any incoming value from a previous state (e.g., states $q_1$, $q_8$ and $q_9$ in Figure~\ref{fig:futurePastAutShapes}).
If the shape has an active copy without an outgoing edge, then its value is a potential final value for the computation (e.g., copy $\tg S$ in states $q_4$ and $q_9$ of Figure~\ref{fig:futurePastAutShapes}). For the particular case where no vertex is active (e.g., state $q_7$ in Figure~\ref{fig:futurePastAutShapes}), we have to propagate the final value of the computation (the variable where it is stored is uniquely determined).

It is easy to check that if the transduction is word-to-tree (that is, every vertex in a shape has at most one outgoing edge), then the constructed CRA is copyless.
\end{proof}

\begin{example}[Future Automaton]
\label{ex:futureAut}
\normalfont
For the single-step MSO transduction of Example~\ref{ex:transduction1singleStep} we will construct the future automaton, as described in the proof of Theorem~\ref{thm:MSOtoCRA}. The collection of regular sets needed for suffix tests is:
\[
  \R = \{
    \Sigma\kstar,\
    \tg{B}\kstar\tg{A}\Sigma\kstar,\
    \tg{B}\kplus\tg{A}\Sigma\kstar,\
    \tg{A}\Sigma\kstar
  \}.
\]
After closing under derivatives we obtain the collection
\[
  \R' = \{
    \Sigma\kstar,\
    \tg{B}\kstar\tg{A}\Sigma\kstar,\
    \tg{B}\kplus\tg{A}\Sigma\kstar,\
    \tg{A}\Sigma\kstar,\
    \emptyset
  \}.
\]
The set $\tg{B}\kstar\tg{A}\Sigma\kstar$ consists of all words that contain at least one $\tg{A}$, so its complement is $\tg{B}\kstar$. The set $\tg{B}\kplus\tg{A}\Sigma\kstar$ consists of all words that start with $\tg{B}$ and contain at least one $\tg{A}$, so its complement is $\tg{A}\Sigma\kstar \cup \tg{B}\kstar$. The complement of $\tg{A}\Sigma\kstar$ is $\eps \cup \tg{B}\Sigma\kstar$. The elements of $\At_\R$ are therefore the minimal nonempty intersections of elements from
$\tg{B}\kstar\tg{A}\Sigma\kstar$,
$\tg{B}\kplus\tg{A}\Sigma\kstar$,
$\tg{A}\Sigma\kstar$,
$\tg{B}\kstar$, and
$\eps \cup \tg{B}\Sigma\kstar$.
We claim that the Boolean atoms generated by this list are
\[
  \At_\R = \{
    \tg{A}\Sigma\kstar,\
    \tg{B}\kplus\tg{A}\Sigma\kstar,\
    \tg{B}\kstar
  \}.
\]
Indeed, $\At_\R$ partitions $\Sigma\kstar$ and moreover we have:
\begin{align*}
\tg{B}\kstar\tg{A}\Sigma\kstar &=
  \tg{A}\Sigma\kstar \cup
  \tg{B}\kplus\tg{A}\Sigma\kstar
\\
\eps \cup \tg{B}\Sigma\kstar &=
\tg{B}\kstar \cup \tg{B}\kplus\tg{A}\Sigma\kstar
\end{align*}
The unambiguous NFA $\At_\R$ is illustrated below:
\[
  \includegraphics{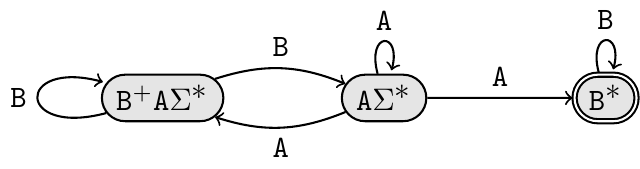}
\]
All the states are marked as initial (with the gray background) because the domain of the transduction is $\Sigma\kstar$.
\end{example}

\begin{example}[Past Automaton]
\label{ex:pastAut}
\normalfont
The expression $(\eps \cup \Sigma\kstar\tg{A})\tg{B}\kstar$ is equivalent to $\Sigma\kstar$, hence the rules of the single-step transduction of Example~\ref{ex:transduction1singleStep} can be simplified:
\begin{align*}
\begin{aligned}[t]
\phi^\tg{V}_\val &=
\Sigma\kstar \tg{A}\sepa \Sigma\kstar
\\
\phi^\tg{S}_0 &= \eps\sepa \Sigma\kstar
\\
\phi^\tg{S}_+ &=
\Sigma\kstar \tg{A}\sepa \Sigma\kstar
\\
\phi^{p}_\id &=
\Sigma\kstar\sepa
\tg{B}\kstar\tg{A}\Sigma\kstar
\\
\phi^{q}_\id &=
\Sigma\kstar\tg{A}\sepa \Sigma\kstar
\end{aligned}
&&
\begin{aligned}[t]
\psi^{\tg V, \tg S}_2 &=
\Sigma\kstar \tg{A}\sepa \eps\sepa \Sigma\kstar
\\
\psi^{\tg S, p}_1 &=
(\eps \cup \Sigma\kstar\tg{A})\sepa
\eps\sepa
\tg{B}\kstar\tg{A}\Sigma\kstar
\\
\psi^{p, p}_1 &=
\Sigma\kstar\sepa
\tg{B}\sepa
\tg{B}\kstar\tg{A}\Sigma\kstar
\\
\psi^{p, q}_1 &=
\Sigma\kstar\sepa
\tg{A}\sepa
\Sigma\kstar
\\
\psi^{q, \tg S}_1 &=
\Sigma\kstar\tg{A}\sepa
\eps\sepa
\Sigma\kstar
\end{aligned}
\end{align*}
The regular expressions $\Sigma\kstar\tg{A}$, $\eps$, $\Sigma\kstar$, $\eps \cup \Sigma\kstar\tg{A}$ are relevant for checking whether the input prefix seen so far satisfies the left-hand expression of a rule. The automaton that simulates the parallel execution of DFAs for these expressions is the following:
\[
  \includegraphics{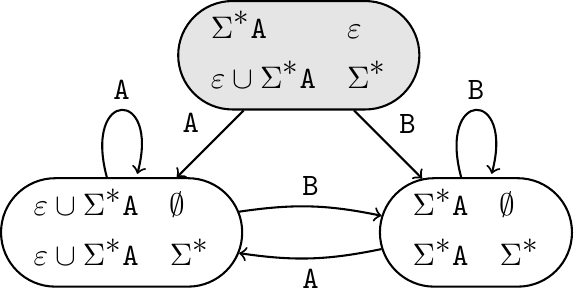}
\]
The labels are computed by taking derivatives of the expressions. It is more convenient to label a state $q$ with the regular expression that denotes the set of strings that lead from the initial state to $q$.
\[
  \includegraphics{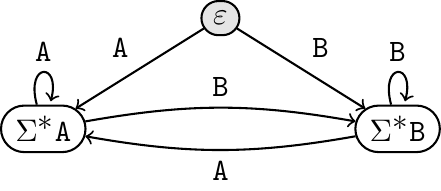}
\]
The label of each state determines if the automaton accepts for each one of the ``past'' expressions $\Sigma\kstar\tg{A}$, $\eps$, $\Sigma\kstar$, $\eps \cup \Sigma\kstar\tg{A}$.
\end{example}

\begin{example}[Future-Past Automaton, CRA]
\label{ex:futurePastAut}
\normalfont
The product of the future automaton from Example~\ref{ex:futureAut} and the past automaton from Example~\ref{ex:pastAut} is the \emph{future-past} automaton of Figure~\ref{fig:futurePastAut}. Every state $q$ is labeled with $L_q$ (top), the set of strings that lead from some initial state to $q$, and with $R_q$ (bottom), the set of strings that lead from $q$ to some final state. The purpose of these labelings is that they specify exactly which rules of the transaction are enabled at a particular state. In Figure~\ref{fig:futurePastAutShapes} every state of the future-past automaton is annotated with a shape that indicates the active copies and the edges ($\eps$-edges and letter-edges) that are enabled. From these shapes we can immediately read off the variable updates for the transitions, and the initialization of the variables in the initial states. The resulting CRA is shown in Figure~\ref{fig:CRA1}. Every state $q_i$ of the CRA is annotated with the set of variables that have to be set when the computation reaches $q_i$.
\end{example}

\begin{figure}
\centering
\includegraphics{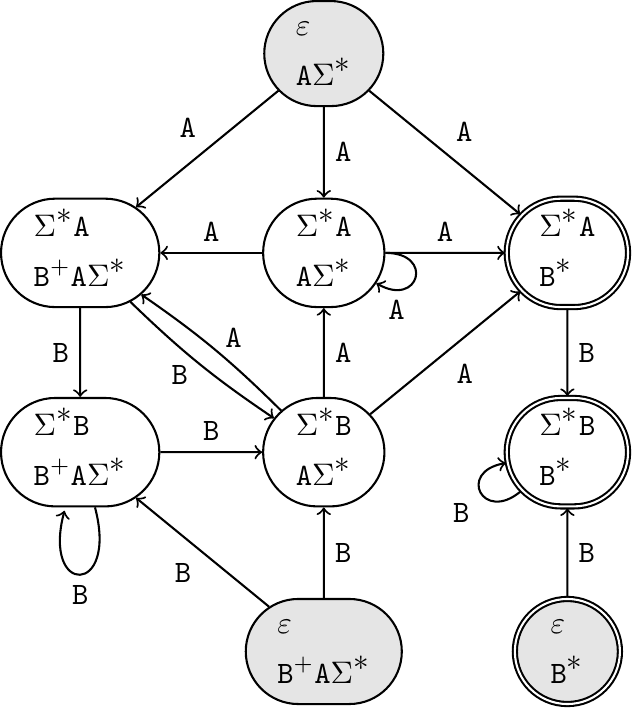}
\caption{The future-past automaton for the single-step transduction of Example~\ref{ex:transduction1singleStep}.}
\label{fig:futurePastAut}
\end{figure}

\begin{figure}
\centering
\includegraphics{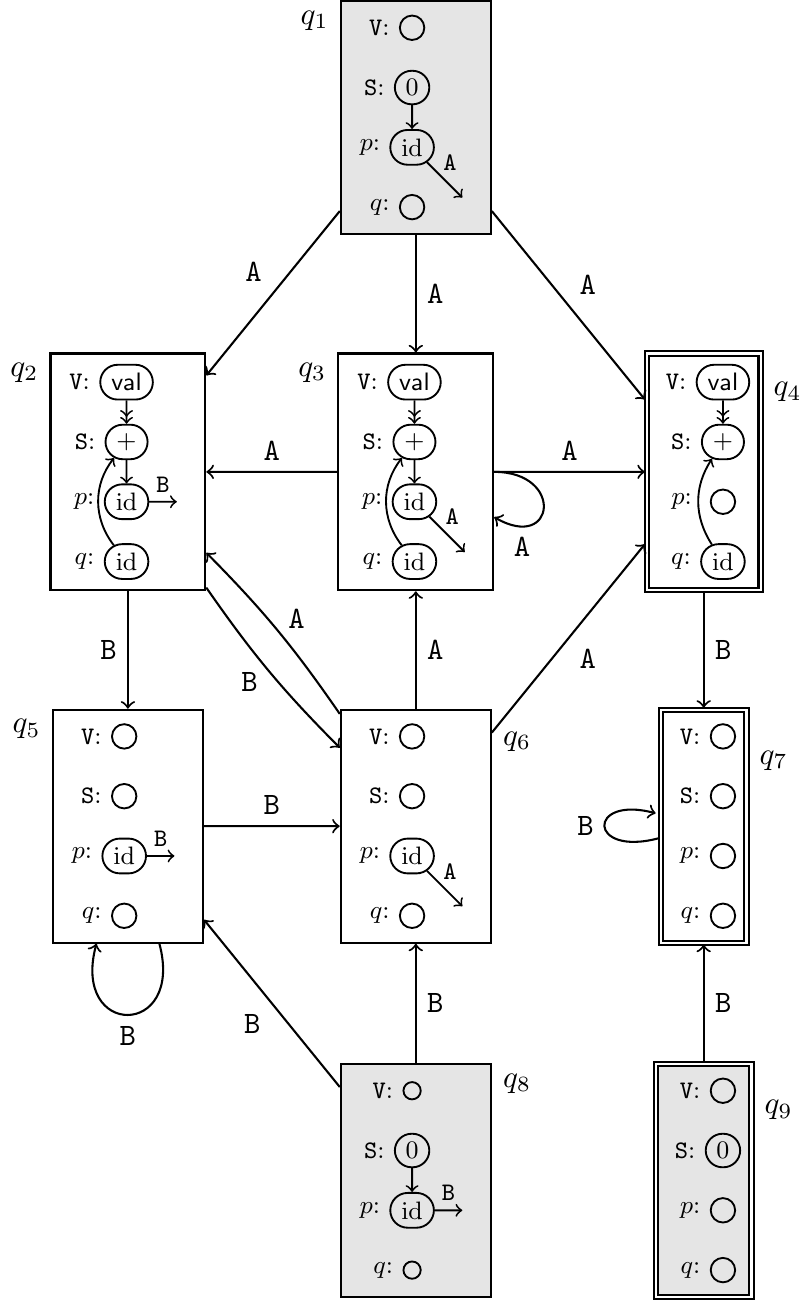}
\caption{The future-past automaton for the single-step transduction of Example~\ref{ex:transduction1singleStep}, annotated with ``shapes'' that show the enabled rules.}
\label{fig:futurePastAutShapes}
\end{figure}

\begin{figure}[t]
\centering
\includegraphics{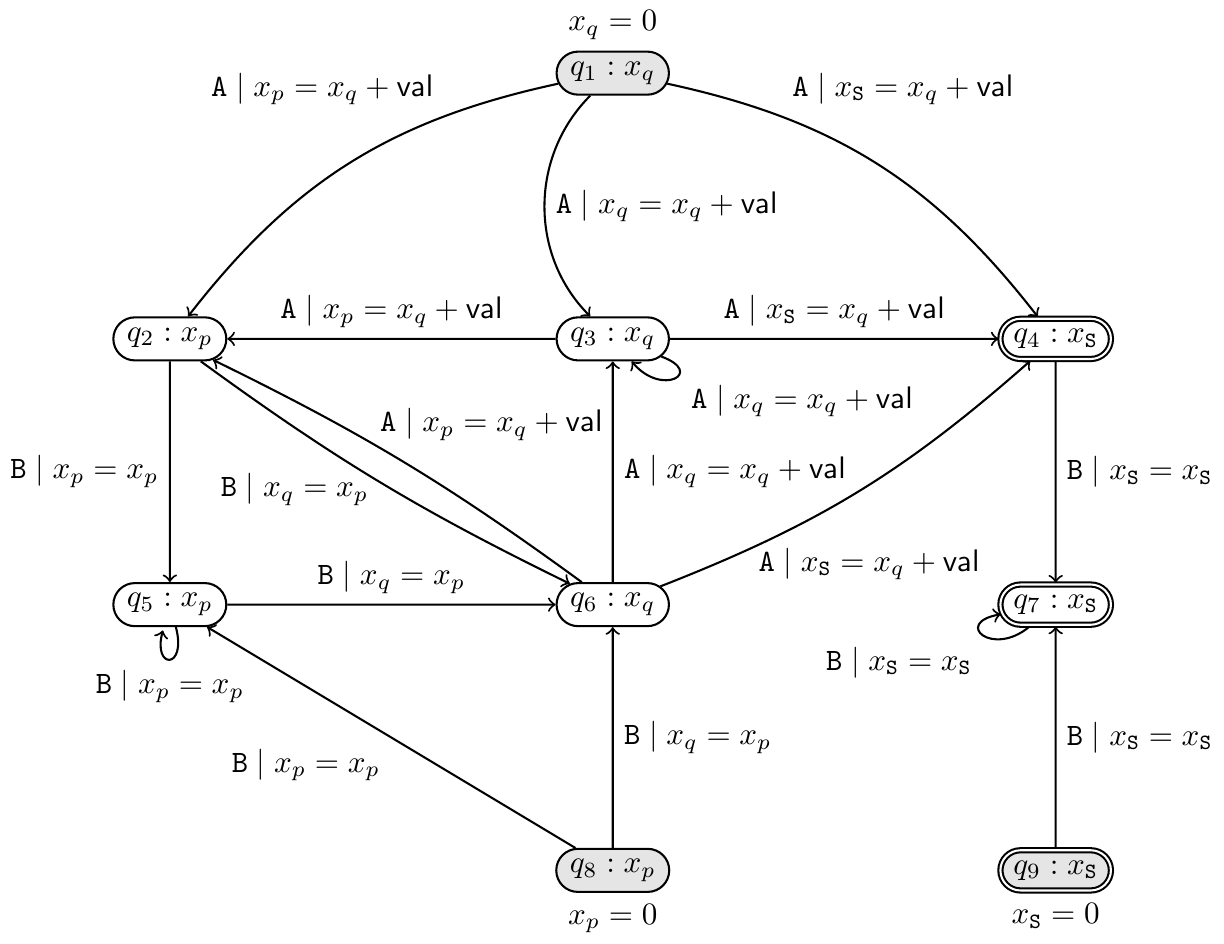}
\caption{Unambiguous CRA for the single-step transduction of Example~\ref{ex:transduction1singleStep}.}
\label{fig:CRA1}
\end{figure}

\begin{corollary}
\label{coro:MSOtoCRA}
\normalfont
For any $\O$, every forward-only word-to-DAG (resp., word-to-tree) MSO transduction over $\O$ can be implemented by a copyful (resp., copyless) UCRA over $\O$.
\end{corollary}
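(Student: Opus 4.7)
The plan is to obtain the corollary by chaining together Lemma~\ref{lemma:singleStep} and Theorem~\ref{thm:MSOtoCRA}, with one small observation at the end to eliminate the auxiliary identity operation introduced along the way. First, given an arbitrary forward-only word-to-DAG (resp., word-to-tree) MSO transduction $\tau$ over $\O$, I would invoke Lemma~\ref{lemma:singleStep} to replace $\tau$ by an equivalent single-step transduction $\tau'$ over $\O \cup \{\id\}$. It is worth noting that the construction in the proof of Lemma~\ref{lemma:singleStep} preserves the tree property of the output graph (the new intermediate vertices $v_q$ each have exactly one outgoing edge to the next vertex in the decomposition chain), so in the word-to-tree case $\tau'$ is again word-to-tree.

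Next, I would apply Theorem~\ref{thm:MSOtoCRA} to $\tau'$, yielding a copyful (resp., copyless) UCRA $\aut A'$ over $\O \cup \{\id\}$ that implements $\tau'$, and therefore also $\tau$. Here the copyless-preserving clause of Theorem~\ref{thm:MSOtoCRA} for word-to-tree transductions is what gives us copylessness in the second case.

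The only remaining issue is that $\aut A'$ may use $\id$ in its register update expressions, whereas the corollary asks for an automaton over $\O$ alone. To remove $\id$, observe that $\id$ is purely a syntactic marker: in any expression $t \in \expr_{\O \cup \{\id\}}[X \cup \{\val\}]$, replacing every subexpression $\id(s)$ by $s$ yields an expression $t' \in \expr_\O[X \cup \{\val\}]$ with $\sem{t}(\alpha,d) = \sem{t'}(\alpha,d)$ for all assignments $\alpha$ and values $d$. Performing this rewriting on every right-hand side of every register update in $\aut A'$, and on every finalization term, yields an equivalent UCRA $\aut A$ over $\O$. Since each removal of $\id(s)$ leaves the number of occurrences of every register (and of $\val$) unchanged, this rewriting preserves copylessness as well.

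I do not foresee any real obstacle here: the corollary is essentially a composition of the two preceding results, and the small bookkeeping step of eliminating $\id$ is routine. The only subtlety worth emphasizing in the writeup is the verification that both reductions preserve the ``word-to-tree implies copyless'' correspondence, which follows from the observations already made in the proofs of Lemma~\ref{lemma:singleStep} and Theorem~\ref{thm:MSOtoCRA}.
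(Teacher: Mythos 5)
Your proof takes essentially the same route as the paper: invoke Lemma~\ref{lemma:singleStep} to obtain an equivalent single-step transduction and then apply Theorem~\ref{thm:MSOtoCRA} to construct the UCRA. Your extra step eliminating the auxiliary $\id$ operation (rewriting each $\id(s)$ to $s$, which preserves both semantics and copylessness) addresses a detail the paper's proof leaves implicit, and is handled correctly.
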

\begin{proof}
Suppose we are given an MSO transduction. By virtue of Lemma~\ref{lemma:MSOtoRegex} we can present it equivalently using regular expressions instead of MSO formulas. Lemma~\ref{lemma:singleStep} says that the transduction can be transformed into an equivalent one, with potentially more copy elements, in which edges are either $\eps$-edges or single-letter edges. From this modified transduction we can construct an unambiguous CRA that computes the transduction, as shown in Theorem~\ref{thm:MSOtoCRA}.
\end{proof}

\begin{theorem}
\label{thm:CRAtoMSO}
\normalfont
For every $\O$, the transductions of the class $\SR(\O)$ (resp., $\SLR(\O)$) can be defined in forward-only word-to-DAG (resp., word-to-tree) MSO.
\end{theorem}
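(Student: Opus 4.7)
The plan is to invert the construction of Theorem~\ref{thm:MSOtoCRA}: given an unambiguous CRA $\aut A = (Q,X,\Delta,I,F)$ (copyful or copyless) we produce a forward-only word-to-DAG (resp.\ word-to-tree) MSO transduction whose output graph is, essentially, the ``unfolded computation DAG'' of $\aut A$ on the input. The crucial enabler is unambiguity: for each state $q \in Q$ there is a regular, hence $\MSO(\Sigma)$-definable, predicate $\mathrm{active}_q(i)$ saying that the unique accepting run visits $q$ after reading the first $i$ tags, and for each transition $\tau = (p,a,\theta,q) \in \Delta$ a predicate $\mathrm{trans}_\tau(i)$ saying that $\tau$ is fired from position $i{-}1$ to $i$. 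Uniqueness on accepted inputs makes these predicates a partition of positions.

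The copy set $C$ is taken to be the finite disjoint union of (i) one copy $[x]$ for each register $x \in X$, (ii) one copy $[\tau,s]$ for each transition $\tau \in \Delta$ and each proper internal subterm $s$ of each of the expressions $\theta(x_1),\ldots,\theta(x_m)$ in $\tau$, together with analogous copies for every initialization expression $I(q)(x)$ and every finalization expression $F(q)$, and (iii) one copy $[\val]$. At position $i$ of an accepted input, the vertex $(c,i)$ is declared active (with the appropriate operation symbol as its label) exactly when the subterm that $c$ names is actually used at step $i$; concretely, $\phi^{[x]}(i)$ holds with label $=$ root operator of $\theta(x)$ whenever $\mathrm{trans}_\tau(i)$ holds, $\phi^{[\tau,s]}(i)$ fires on $\mathrm{trans}_\tau(i)$ with label equal to the root of $s$, $\phi^{[\val]}(i)$ holds at every non-initial position, and boundary positions $i=0$ and $i=n$ are handled by the initialization expressions and the finalization expression of the unique final state (the finalization's root provides the unique sink). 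All of these predicates are $\MSO(\Sigma)$-definable, since they depend on regular properties of the prefix and suffix only.

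Edges are of three kinds and all satisfy $x \le y$, so the transduction is forward-only. Internal to the subterm DAG of a single expression, an edge $\psi_j^{[\tau,s_j],[\tau,s]}(x,y)$ with $x=y$ encodes the $j$-th child relation inside $\theta(x)$ (or inside $I$ or $F$) and only fires when $\mathrm{trans}_\tau(y)$ holds. A register leaf $y$ occurring at argument slot $j$ of a subterm in $\theta(x)$ is encoded by an edge $\psi_j^{[y],\cdot}(x{-}1,x)$, firing when $\mathrm{trans}_\tau(x)$ holds; this is the only place a backward direction could arise, but here source position is $x{-}1$ and destination $x$, so $x{-}1 \le x$ holds. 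A $\val$-leaf becomes $\psi_j^{[\val],\cdot}(x,x)$; the ``no value at first position'' condition is met because $[\val]$ is only active at non-initial positions. Finalization contributes the analogous edges at the final position, with the root of $F(q_n)$ as the unique sink. The well-formedness conditions of Observation~\ref{obs:wellformedness} then follow mechanically: exactly one label per active vertex by unambiguity and disjointness of the $\mathrm{trans}_\tau$; arities respected by construction; no cycles because every edge either stays within a single position inside an acyclic syntax tree or strictly increases the position; and the finalization expression at position $n$ provides the unique sink.

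For the $\SLR$ case, copylessness gives the tree refinement essentially for free: in a copyless update, each register $y$ appears at most once across $\theta(x_1),\ldots,\theta(x_m)$, so the vertex $[y]$ at position $i{-}1$ has at most one outgoing edge into position $i$; internal subterm vertices always have exactly one parent in their expression tree; and copylessness of $F(q_n)$ handles the last position. Thus every vertex has at most one outgoing edge, and the output graph is a tree, placing the transduction in the word-to-tree MSO fragment. The main obstacle is not conceptual but bookkeeping: specifying the vertex- and edge-formulas per subterm slot so that every operator node receives exactly the right number of incoming labelled edges and no spurious edges are created, and checking that the MSO-definability of $\mathrm{active}_q$ and $\mathrm{trans}_\tau$ genuinely uses unambiguity (without it several $\mathrm{trans}_\tau$ predicates could fire at the same position, which would break condition~(1) and the arity condition~(5)). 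Once this is laid out, the semantic equivalence is a straightforward induction on the length of the input: by construction, evaluating the sub-DAG rooted at $(c,i)$ yields exactly the value that $\aut A$ would assign to the syntactic position $c$ at step $i$, and in particular the value at the sink equals $\sem{F(q_n)}(\alpha_n)$.
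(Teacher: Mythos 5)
Your overall strategy --- one copy per register plus copies for the subterms of the update, initialization, and finalization expressions, with unambiguity supplying MSO-definable run predicates, intra-position edges encoding the syntax trees and single-step cross-position edges encoding register references --- is essentially the same construction the paper uses. But there is a concrete gap: updates whose right-hand side is a bare register, such as $y := y$ or $x := y$ (these occur in nearly every example in the paper, e.g.\ $\theta_a$ in Example~\ref{ex:copylessDCRA}), have no ``root operator'' in $\Gamma = \O \cup \{\val\}$ with which to label the vertex $([x],i)$. As described, your construction produces an active vertex with no admissible label, violating well-formedness, and a register carried unchanged across many positions yields a long chain of such vertices. The paper resolves this by temporarily labelling these vertices with $\id$ and then observing that a chain $f \to_1 \id \to_1 \cdots \to_1 \id \to_i g$ is an MSO-definable (and still forward-only) relation, so the whole chain collapses into a single edge $f \to_i g$, eliminating $\id$ from the output alphabet. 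Equivalently, you could make your cross-position edge formulas jump directly to the last position at which the referenced register was assigned a non-variable expression, but this requires tracking the (regular) renaming chain across arbitrarily many positions and must be stated explicitly; a fixed-offset edge of the form $\psi(x{-}1,x)$ does not suffice.

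A second, smaller issue concerns $\val$. You use a single copy $[\val]$ per position, but the copyless definition explicitly permits $\val$ to occur several times in one update expression. In that case the $[\val]$ vertex acquires several outgoing edges, which destroys the at-most-one-outgoing-edge property you need for the word-to-tree conclusion in the $\SLR$ case; the fix is one $\val$-copy per occurrence of $\val$ in the update terms (this is what the paper means by ``sufficient copies to construct the algebraic terms''). Relatedly, declaring $[\val]$ active at \emph{every} non-initial position creates spurious sinks at positions where the fired transition does not use $\val$, violating the global uniqueness of the sink; $[\val]$ should be active only where it is consumed.
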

\begin{proof}
With a construction that is a simple variant of the classical transformation of an automaton into a MSO formula defining the same language, we can convert an unambiguous CRA into a single-step MSO transduction. We need one copy element per register, as well as sufficient copies to construct the algebraic terms that appear in all variable updates of the CRA. The resulting MSO transduction may specify vertices labeled with the identity function $\id$. Sequences of edges of the form $f \to_1 \id \to_1 \cdots \to_1 \id \to_i g$ can be specified with an MSO formula, and can therefore be replaced by a single edge of the form $f \to_i g$. This results in a (not necessarily single-step) MSO transduction that specifies the desired function. This construction handles copyful and copyless UCRAs uniformly: i.e., on a copyless UCRA it produces a word-to-tree MSO transduction. The well-formedness conditions for MSO transductions hold, including that all edges are forward-only.
\end{proof}

\section{Relation to Weighted Automata}
\label{sec:WA}

In this section we investigate the relationship between weighted automata and CRAs. We will show, in particular, that there are CRAs for suitable choice of $\O$ that correspond precisely to unambiguous and nondeterministic weighted automata respectively. This means that CRAs are a more general model of quantitative automata. In section~\ref{sec:related} we give a brief overview of relevant literature from the line of work on weighted automata.

The model of weighted automata is a widely studied quantitative extension of nondeterministic finite-state automata \cite{DKV2009HWA} that describe (total) functions from $\Sigma\kstar$ to a set of data values $D$. The set $D$ is typically equipped with two operations, product and sum, that form a semiring. In a weighted automaton every transition is labeled with a tag in $\Sigma$ and a weight in $D$. The weight of a run of the automaton is the product of all transition weights in the run. The output weight associated with a input string $w$ is the sum of the weights of all the runs that are labeled with $w$. Unlike CRAs, weighted automata define total functions: if there are no such runs, the output weight would be the zero of the semiring.

Formally, a \emph{semiring} (also called a \emph{rig}) is an algebraic structure $\sring D = (D,+,\cdot,0,1)$ where $+$ is a commutative, associative binary operation on $D$ with identity $0$, $\cdot$ is an associative binary operation on $D$ with identity $1$ and absorbing element $0$, and $\cdot$ distributes over $+$.
A \emph{weighted automaton} over a semiring $\sring D$ is a tuple $\aut A = (Q,\Sigma,\Delta,I,F)$,
where $Q$ is the set of states, $\Sigma$ is the input alphabet, $\Delta: Q \times \Sigma \times Q \to D$ is the transition function, $I: Q \to D$ is the initial weight function, and $F: Q \to D$ is the final weight function. For a word $u = a_1 a_2 \cdots a_n \in \Sigma\kstar$, a \emph{$u$-path} in $\aut A$ is a sequence
\[
  \pi =
  q_0
  \xlongrightarrow{a_1/d_1} q_1
  \xlongrightarrow{a_2/d_2} q_2
  \cdots
  \xlongrightarrow{a_n/d_n} q_n,
\]
where $\Delta(q_{i-1},a_i,q_i) = d_i$ for all $i = 1, 2, \ldots, n$. The \emph{weight} of the path $\pi$ is $\weight(\pi) = d_1 \cdot d_2 \cdots d_n$. We denote by $\first(\pi)$ and $\last(\pi)$ the first state $q_0$ and the last state $q_n$ of the path $\pi$ respectively.
Finally, the \emph{weight} of a word $u \in \Sigma\kstar$ in $\aut A$ is given by
\[
  \weight(u) =
  \tsum_\text{$u$-path $\pi$}\,
    I(\first(\pi)) \cdot \weight(\pi) \cdot F(\last(\pi)).
\]
The weighted automaton $\aut A$ computes the function $\sem{\aut A}: \Sigma\kstar \to D$, where $\sem{\aut A}(u)$ is the weight of $u$ in $\aut A$.

The set of \emph{initial states} of $\aut A$ is $\{ q \in Q \mid I(q) \neq 0 \}$, and similarly the set of \emph{final states} of $\aut A$ is $\{ q \in Q \mid F(q) \neq 0 \}$. A path $\pi$ in $\aut A$ is \emph{successful} if $\first(\pi)$ is initial and $\last(\pi)$ is final. A path is \emph{unsuccessful} if it is not successful. Observe that the weight of an unsuccessful path is equal to 0. A weighted automaton $\aut A$ is said to be \emph{unambiguous} if for every word $u \in \Sigma\kstar$ there is at most one successful $u$-path in $\aut A$.

An unambiguous weighted automaton can be over a monoid, instead of a semiring: modify $I$ and $F$ to be partial functions instead of total, where their domains are the initial and final states, respectively. Then removing $+$ and $0$, we are left a single associative operation $\cdot$ on $D$ with identity $1$. The unambiguous automaton then defines, in general, a \emph{partial} function from input strings to elements of the monoid $(D, \cdot, 1)$. The following theorem shows a correspondence between unambiguous weighted automata over the monoid $(D,\cdot,1)$ and \emph{copyless} UCRAs that can use the multiplication operation $\cdot$ in a certain restricted way: only multiplying by constants on the right. For a constant $d \in D$, we write $(- \cdot d)$ for the function that multiplies by $d$ on the right.

Recall from Section~\ref{sec:choiceOps} that, for a monoid $M$, the set of operations $\O^\unary_M$ contains the identity $1$ and the unary operation $(- \cdot d)$ for every $d \in D$.

\begin{theorem}
\label{thm:UWA}
\normalfont
Suppose that $M = (D,\cdot,1)$ is a monoid, i.e.\ $\cdot$ is an associative binary operation on $D$ and $1$ is a left and right identity for $\cdot$. Then the following are equal: (1) the class of total functions $\Sigma\kstar \to D$ computed by unambiguous weighted automata over $M$, (2) the class of total transductions in $\SLR(\O^\unary_M)$, (3) the class of total transductions in $\SR(\O^\unary_M)$.
\end{theorem}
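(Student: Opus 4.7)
The plan is to first reduce the three-way equality to a single direction. Observation: since $\O^\unary_M$ consists only of unary operations (the maps $(-\cdot d)$ for $d\in D$) and the constant $1$, Lemma~\ref{lemma:monoidWeak} already tells us $\SR(\O^\unary_M)=\SLR(\O^\unary_M)$, so classes (2) and (3) are identical and it suffices to establish (1) $=$ (2).

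For the direction (1) $\subseteq$ (2), I would give a one-register, copyless, unambiguous translation. Given a UWA $\aut A=(Q,\Sigma,\Delta,I,F)$ over $M$, build a UCRA $\aut B$ whose underlying NFA is the skeleton of $\aut A$ (keeping exactly the transitions with nonzero weight, and the initial/final states being the domain of $I$/$F$), with a single register $x$. Initialize $x:=I(q)$ at each initial state $q$; for every $\aut A$-transition $p\to^{a/d}q$ put a $\aut B$-transition $(p,a,\theta,q)$ with $\theta(x)=x\cdot d$, which uses only the unary operation $(-\cdot d)\in\O^\unary_M$; and set the finalization at final state $q$ to $x\cdot F(q)$. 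Unambiguity is inherited, the single register appears once on each update (copyless), and along the unique accepting run the register value telescopes to $I(\first(\pi))\cdot\weight(\pi)\cdot F(\last(\pi))$, matching $\sem{\aut A}$.

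For the harder direction (2) $\subseteq$ (1), I would first normalize. Apply the state-splitting from the proof of Lemma~\ref{lemma:monoidWeak} to assume the given copyless UCRA has a single register $x$, and then apply Lemma~\ref{lemma:simplification} to assume every update is either a reset $x:=c$ or an accumulation $x:=x\cdot d$ (and the finalization is either a constant or $x\cdot e$). The essential obstacle is that a WA has no mechanism to discard accumulated weight, whereas the UCRA may perform arbitrarily many resets. To overcome this, I would double the state space into unmarked states $q$ (``a further reset is still to come'') and marked states $\hat q$ (``we are at or past the last reset''). The UWA transitions are: for an accumulation update $x:=x\cdot d$, include both $p\to^{a/1}q$ (unmarked) and $\hat p\to^{a/d}\hat q$ (marked); for a reset update $x:=c$, include $p\to^{a/1}q$ (unmarked, reset occurs but is not the last) and the crossing transition $p\to^{a/c}\hat q$ (this reset is guessed to be the last one). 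Set initial weight $I(q)(x)$ on $\hat q$ for each UCRA initial state $q$ (for the case of no reset along the run) and $1$ on the unmarked $q$; set final weight $e$ on $\hat q$ whenever $q$ has UCRA finalization $x\cdot e$ (and similarly $c$ on $\hat q$ if the finalization is a bare constant $c$, making the marked state reachable from the initial instant by adjusting the crossing).

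The main obstacle will be verifying that the doubled UWA is still unambiguous and computes the same partial function. The key point is that the unique UCRA accepting run deterministically pins the position of the last reset (or declares there is none), which in turn forces exactly one UWA accepting path: any earlier ``guess'' of the last reset dies when the next reset in the UCRA is encountered, since from a marked state no crossing transition is enabled; and any later ``guess'' fails because a marked accumulation is the only option and no phantom reset exists. The weight of the surviving path telescopes to (initial or final-reset constant) $\cdot$ (product of subsequent right multiplicands) $\cdot$ (finalization constant), which is precisely the UCRA output. One mild subtlety worth flagging: if one interprets UCRAs in $\SLR(\O^\unary_M)$ as genuinely able to access $\val$, then the WA in (1) should be read with alphabet $\Sigma\times D$ and data-dependent transition weights, and updates of the form $x:=\val\cdot d_1\cdots d_k$ are handled as data-dependent crossing transitions into the marked region; otherwise the construction goes through as written for value-oblivious transductions.
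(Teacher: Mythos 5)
Your proposal follows the same skeleton as the paper's proof: collapse classes (2) and (3) via Lemma~\ref{lemma:monoidWeak}, simulate a UWA by a single-register copyless UCRA for the easy inclusion, and pass to a single-register normal form for the converse. Where you genuinely go beyond the paper is in that converse direction: the paper's proof only says to ``apply the same construction as in Lemma~\ref{lemma:monoidWeak}'' with state space $Q \times (X \cup \{\bot\})$ and that ``additional states'' may be needed for finalization, leaving unaddressed the real difficulty --- a weighted automaton cannot discard accumulated weight, while the normalized UCRA may perform arbitrarily many resets $x := c$. Your marked/unmarked doubling, which unambiguously guesses the position of the last reset (weight $1$ before it, the reset constant on the crossing transition, the genuine multiplicands after it), supplies exactly the missing ingredient, and your unambiguity argument (an early guess dies at the next reset because marked states admit no reset transition; a late guess never reaches a state carrying a final weight) is sound. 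Two small caveats. First, for a finalization that is a bare constant $c$, placing final weight $c$ on $\hat q$ does not work as stated, since the path reaching $\hat q$ has already accumulated nontrivial weight; you need a third mode in which the path carries weight $1$ throughout and only states whose finalization is constant carry a final weight (disjoint from the marked mode, which serves finalizations of the form $x \cdot e$). Second, your flag about $\val$ is legitimate: expressions over $\O^\unary_M$ may have $\val$ as their base, so the literal statement matches weighted automata over $\Sigma$ only for value-oblivious transductions (or one must read the weighted automaton over $\Sigma \times D$); the paper's proof silently ignores this point, and your two proposed readings are the right ways to repair it.
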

\begin{proof}
Recall from Lemma~\ref{lemma:monoidWeak} that $\SR(\O^\unary_M) = \SLR(\O^\unary_M)$. So we just have to show that unambiguous weighted automata over $(D, \cdot, 1)$ are expressively equivalent to copyless UCRAs over $\O^\unary_M$.

It is easy to see that every unambiguous weighted automaton over $(D,\cdot,1)$ can be simulated by a copyless single-register UCRA over $\O^\unary_M$. So it remains to show that the computation of a copyless UCRA $\aut A$ over $\O^\unary_M$ can be simulated by an unambiguous weighted automaton $\aut B$ over $(D,\cdot,1)$. We apply the same construction as in Lemma~\ref{lemma:monoidWeak}, based on the idea that at every step of the computation of $\aut A$, the content of at most one register can contribute to the final output. So, $\aut B$ is defined to have state space $Q \times (X \cup \{\bot\})$, where $Q$ is the state space of $\aut A$ and $X$ is the set of registers of $\aut A$. We also may need additional states to compute the finalization function.
\end{proof}



Similarly to Theorem~\ref{thm:UWA} we now establish a correspondence between (potentially ambiguous) weighted automata over a semiring $(D,+,\cdot,0,1)$ and copyful CRAs that can use the multiplication operation $\cdot$ only to right-multiply by constants.

\begin{theorem}
\normalfont
\label{thm:RtoSR}
Suppose that $D$ is the type of data values and $\sring D = (D,+,\cdot,0,1)$ is a semiring. Moreover, assume that $\O$ consists of the constants $0, 1$, the binary operation $+$, and the family of unary operations $(- \cdot d)$ for every $d \in D$. The class of total transductions of $\SR(\O)$ is equal to the class of transductions computed by weighted automata over $\sring D$.
\end{theorem}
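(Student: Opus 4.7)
The plan is to prove both inclusions separately, using constructions that mirror the proof of Theorem~\ref{thm:UWA} but now exploit the binary $+$ available in $\O$, so that several registers rather than a single one may contribute simultaneously to the output.

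For the easy direction, given a weighted automaton $\aut A = (Q, \Sigma, \Delta, I, F)$ over $\sring D$, I would construct a copyful DCRA (hence UCRA) $\aut C$ over $\O$ with a single control state and one register $x_q$ per state $q \in Q$. The intended invariant is that, after consuming a prefix $u$, the register $x_q$ holds $\tsum_{\pi} I(\first(\pi)) \cdot \weight(\pi)$, summed over all $u$-paths $\pi$ in $\aut A$ ending at $q$. The initialization sets $x_q := I(q)$, the update on tag $a$ is the parallel assignment $x_q := \tsum_{p \in Q} x_p \cdot \Delta(p, a, q)$, and the finalization is $\tsum_{q} x_q \cdot F(q)$; each of these expressions uses only $+$ and the unary ``multiply by a constant'' operations in $\O$. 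A short induction on the length of the input establishes $\sem{\aut C}(w) = \sem{\aut A}(w)$.

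For the converse, let $\aut A = (Q, X, \Delta, I, F)$ be a copyful UCRA over $\O$ denoting a total transduction. By Theorem~\ref{thm:UCRAtoDCRA} I may assume $\aut A$ is a trim DCRA. The preparatory step is to \emph{normalize} every update and every finalization expression into the ``linear form'' $c_0 + \tsum_{x \in X} x \cdot c_x$ with constants $c_0, c_x \in D$; this is possible by repeatedly applying associativity of $\cdot$ and right distributivity of $\cdot$ over $+$, which are precisely the semiring manipulations sanctioned by $\O$. To absorb the additive constants, I extend the register set to $X' = X \cup \{*\}$, where $*$ is initialized to the semiring unit $1$ and carried along every transition by the trivial update $* := *$, so that every update and finalization is now purely of the shape $\tsum_{x' \in X'} x' \cdot c_{x'}$. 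With the CRA in this normalized shape, I construct a weighted automaton $\aut B$ over $\sring D$ with state set $Q \times X'$: for the unique initial state $q_0$ of $\aut A$ set $I_{\aut B}((q_0, x')) = \alpha_0(x')$ (a constant); for each CRA transition $p \to^{a} q$ with normalized update $y' := \tsum_{x'} x' \cdot c_{x',y'}^{(p,a,q)}$, add a WA transition $(p, x') \to^{a} (q, y')$ of weight $c_{x',y'}^{(p,a,q)}$; and read the final weight $F_{\aut B}((q, x'))$ off as the coefficient of $x'$ in the normalized finalization term for $q$. Correctness follows by distributing out the nested linear updates of the CRA: each single path in $\aut B$ from $(q_0, x_0')$ to $(q_n, x_n')$ contributes exactly one summand of the expanded CRA computation, and the semiring sum over all such paths reconstructs $\sem{F(q_n)}(\alpha_n)$.

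The main obstacle I anticipate is the rigorous bookkeeping in the converse direction: proving that the normalization into linear forms really can be carried out using only the operations in $\O$, and verifying by induction on the input length that the path-sum in $\aut B$ matches the CRA's semiring evaluation step by step. A subsidiary issue is the treatment of the special symbol $\val$ inside updates, given that a WA as defined reads only $\Sigma\kstar$; here, in the same spirit as Theorem~\ref{thm:UWA}, the correspondence is understood by identifying WA functions $\Sigma\kstar \to D$ with value-oblivious transductions $(\Sigma \times D)\kstar \to D$, or else by extending WA transition weights to be linear functions of the current data value, which mirrors the fact that $\val$ appears only linearly in normalized CRA expressions.
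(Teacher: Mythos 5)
Your proposal is correct and takes essentially the same approach as the paper: one register per weighted-automaton state with update $x_q := \tsum_{p} x_p \cdot \Delta(p,a,q)$ for the easy direction, and for the converse a normalization of every update into the linear form $\tsum_i x_i \cdot d_i + d$ followed by a weighted automaton on state pairs $(q,x)$ whose transition weights are the coefficients. Your auxiliary register $*$ holding $1$ plays exactly the role of the paper's bare states $q$ (with transitions $\Delta(q,a,q')=1$ absorbing the additive constants), and your preliminary determinization step is harmless but not needed; your explicit attention to $\val$ and value-obliviousness is if anything more careful than the paper's.
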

\begin{proof}
Every weighted automaton $(Q,\Sigma,\Delta,I,F)$ can be simulated by a copyful single-state CRA by considering one register $x_q$ for each state $q$ of the weighted automaton. The registers are used to encode the configuration of the weighted automaton, which is a mapping from the state space to $D$. For every tag $a$, we include an $a$-labeled transition in the CRA that updates the variable $x_q$ with the assignment $x_q := \tsum_{p \in Q}\ x_p \cdot \Delta(p,a,q)$.

Consider now a copyful CRA $\aut A$ that computes a total transduction. It remains to show that the computation of $\aut A$ can be simulated by a weighted automaton. We will describe a weighted automaton $\aut B$ that computes the same transduction as $\aut A$. We may algebraically simplify every update of $\aut A$ to be of the form
\[
  x = x_1 \cdot d_1 + \cdots + x_n \cdot d_n + d,
\]
where $x_1, \ldots, x_n$ is an enumeration of the registers of $\aut A$ and $d, d_i \in D$ are constant weights. The state space of $\aut B$ consists of states $q$ and pairs $(q,x)$, where $q$ is a state of $\aut A$ and $x$ is a register of $\aut A$. Using the above algebraic simplification for each register $x_1$ to $x_n$, the register update function of a transition $q \to^a q'$ of $\aut A$ can be presented in the following form:
\[
  \begin{bmatrix}
    x_1 \\ x_2 \\ \vdots \\ x_n
  \end{bmatrix}^\top
  =
  \begin{bmatrix}
    x_1 \\ x_2 \\ \vdots \\ x_n
  \end{bmatrix}^\top
  \cdot
  \begin{bmatrix}
    d_{11} & d_{12} & \cdots & d_{1n} \\
    d_{21} & d_{22} & \cdots & d_{2n} \\
    \vdots & \vdots & \ddots & \vdots \\
    d_{n1} & d_{n2} & \cdots & d_{nn}
  \end{bmatrix}
  +
  \begin{bmatrix}
    d_1 \\ d_2 \\ \vdots \\ d_n
  \end{bmatrix}^\top,
\]
where $d_{i,j} \in D$ and $d_i \in D$ are constant weights.
The above matrix equation can be reformulated as follows:
\[
  x_j = (\tsum_{i=1}^n x_i \cdot d_{ij}) + d_j
  \ \text{for all $j=1,\ldots,n$}.
\]
The update function is simulated in the weighted automaton $\aut B$ by defining the following transitions:
\begin{align*}
\Delta((q,x_i),a,(q',x_j)) &= d_{ij}
&
\Delta(q,a,(q',x_j)) &= d_j
&
\Delta(q,a,q') &= 1
\end{align*}
The function $\Delta$ assigns the weight $0$ to every other possible transition. If $q$ is a final state of $\aut A$ with finalization function $x_1 \cdot d_1 + \cdots + x_n \cdot d_n + d$, we define the final weight function $F$ of $\aut B$ by $F(q) = d$ and $F((q,x_i)) = d_i$. If $q$ is not a final state, then we put $F(q) = 0$ and $F(q,x) = 0$.
Suppose that $I$ is the initialization function of $\aut A$. We will describe the initial weight function $J$ of $\aut B$. If $q$ is an initial state of $\aut A$, we define $J(q) = 1$ and $J((q,x)) = \sem{I(q)(x)}$. If $q$ is not an initial state of $\aut A$, we define $J(q) = 0$ and $J((q,x)) = 0$.
\end{proof}






It is important that we have restricted the allowed operations of the CRA to include only right-multiplication by a constant, a unary operation. If a (copyful) CRA can use the binary $\cdot$ operation of the semiring $(D,+,\cdot,0,1)$, then it can compute the function $a^i \mapsto d^{2^i}$ (where $a \in \Sigma$ and $d \in D$), which is not implementable by a weighted automaton.

\section{Related Work}
\label{sec:related}

\paragraph*{\bf\em Weighted automata and extensions}

Weighted Automata, which were introduced in 1961 by Sch\"utzenberger \cite{S1961WA} (see also the more recent monograph \cite{DKV2009HWA}), extend classical nondeterministic automata by annotating transitions with \emph{weights} and can be used for the computation of simple quantitative properties on finite or infinite strings of symbols \cite{CDH2010QL}. Weighted automata have found applications in speech and language processing \cite{M1997FST}, and they have been extended for modeling systems and verifying quantitative properties of these systems \cite{CHO2016QMA}. The studied models for the latter purpose include \emph{Nested Weighted Automata} \cite{CHO2015NWA}, a two-level variant of weighted automata for infinite strings. The computational problems that are relevant for quantitative verification are analysis questions such as universality and equivalence. These questions are decidable only when the weights and the operations used on them are very simple \cite{K1994EPRS,ABK2011WA}, so the studied models are usually equipped with a very limited set of primitive operations that are insufficient for expressing realistic computations over streaming data. We have shown in Section~\ref{sec:WA} that CRAs generalize weighted automata. In particular, when the family $\O$ of available operations is chosen appropriately, CRAs correspond precisely to unambiguous and nondeterministic weighted automata respectively.


\paragraph{\bf\em Register automata}

Another approach to augment classical automata with quantitative features has been with the addition of \emph{registers} that can store values from a potentially infinite set. These models are typically varied in two aspects: by the choice of data types and operations that are allowed for register manipulation, and by the ability to perform tests on the registers for control flow. 

The literature on data words, data/register automata and their associated logics \cite{KF1994FMA, NSV2004FSM, DL2009LFQ, BS2010NRDL, BDMSS2011LDW} studies models that operate on words over an infinite alphabet, which is typically of the form $\Sigma \times \mathbb{N}$, where $\Sigma$ is a finite set of tags and $\mathbb{N}$ is the set of the natural numbers. They allow comparing data values for equality, and these equality tests can affect the control flow. Due to this feature many interesting decision problems become undecidable \cite{NSV2004FSM}.

\paragraph{\bf\em Cost register automata}
Cost Register Automata~(CRAs)~\cite{AdADRY2013CRA, AR2013ARF} and Streaming String Transducers~(SSTs)~\cite{AC2010SST,
AC2011STA, AdA2012STT} have recently been proposed as models to permit more complex operations over the cost domain.
Unlike the case of register automata, which traditionally only allow checking for equality, CRAs are parameterized by
the set of permissible operations $\O$ over the values of the registers. On the other hand, CRAs and SSTs also enforce a
strict separation between data and control, so that data values from the input sequence can only affect the contents of
the registers but cannot determine the state transitions. SSTs and CRAs (we are referring to the original deterministic CRA definition of \cite{AdADRY2013CRA} with demands copylessness and allows terms with parameters in the registers) have robust expressive power~\cite{AC2010SST,
AdADRY2013CRA}, including equivalent characterizations in MSO  and closure under various transformations, and have been
used in applications such as verifying list processing programs~\cite{AC2011STA}. Work has been devoted to study the
circuit complexity of the functions being computed~\cite{AM2015, AKM2017}, and on analysis problems such as determining
the register complexity~\cite{AR2013ARF, DRT2016}. Previous work on CRAs has either focused on some concrete families of allowed operations (e.g. addition and multiplication over a semiring~\cite{MR2015}), or it has allowed registers whose contents may include \emph{syntactic terms} rather than concrete values from the data domain. In this work, we study CRAs over a general set of operations, either copyful or copyless, and we do not allow syntactic terms, so as to identify the \emph{streamable} regular functions.

More recently,~\cite{AMS2019} proposes a model that is expressively equivalent to CRAs, but exponentially more succinct, and still permits streaming evaluation. Rather than having a finite set of states and a separate finite set of registers, this model (Data Transducers) employs a finite set of combined ``state variables'' which are either inactive or active (like a CRA state), but also hold a value if they are active (like a CRA register). A more succinct model than CRAs is appropriate for the design of practical query languages based on streamable regular functions (see the paragraph on quantitative regular expressions, below).

\paragraph{\bf\em MSO-definable graph transformations}
An important goal for language theorists has been to characterize operational models such as automata and transducers
using properties and functions expressible in various logical theories. The earliest of these results includes the
seminal work of B\"uchi~\cite{B60} and Elgot~\cite{E61}, establishing the equivalence between regular languages and
properties expressible using monadic second-order logic (MSO). Courcelle considered the notion of MSO-definable graph
transductions~\cite{C1994MSOGT}. Engelfriet and Hoogeboom~\cite{EH2001MDST} showed the equivalence of string-to-string
transformations expressible in Courcelle's framework with those computable using two-way finite state transducers, and Alur and
\v{C}erny equated their expressive power with the deterministic one-way model of streaming string transducers~(SSTs)~%
\cite{AC2010SST}. These have also been extended to transformations of infinite strings~\cite{AFT2012RTIS}. Streaming
tree transducers~\cite{AdA2012STT} and CRAs~\cite{AdADRY2013CRA} are similar characterizations of tree-to-tree and
string-to-tree transformations respectively. With this background, the present paper shows that copyful CRAs over $\O$
can compute exactly those functions which can be expressed as word-to-DAG transformations in MSO, i.e. the class of
streamable regular transductions, $\SR(\O)$.

\paragraph{\bf\em Quantitative regular expressions}
The connection between finite automata and regular expressions has motivated research into identifying similar
linguistic characterizations of the transductions computable by CRAs and SSTs~\cite{AFR2016QRE, AFR2014LICS}. The resulting formalisms, \emph{Quantitative Regular Expressions}~(QREs) and DReX respectively, have subsequently been applied in expressing string transformations~\cite{AdAR2015DReX}, monitoring network traffic~(NetQRE)~\cite{YLMMAL2017NQRE}, and processing streaming data~(StreamQRE)~\cite{MRAIK2017SQRE, AM2017SQRE}. StreamQRE has been used to specify streaming algorithms for medical monitoring \cite{ARMBSG2018, AAMMR2018}, and the efficiency of its evaluation has been investigated in \cite{AMS2019, AMS2017, AMU2017}. StreamQRE has also inspired work on the design of novel type-based language abstractions for distributed stream processing \cite{MSAIT2019, AMST2018}.
The closure of the transduction classes $\SR$ and $\SLR$ under
the regular combinators naturally facilitates the design of these
declarative languages.
These papers are then largely concerned with twin
problems of expressiveness---providing translation algorithms to and from CRAs and SSTs---and fast query evaluation. As
with previous research on CRAs, they focus on copyless models, and exploit these for time- and memory-efficient one-pass query evaluation.
An important technical challenge is the design of
a type system which constrains the rate of the transduction and the
composition rules to permit efficient modular evaluation.
The combinators of Table~\ref{table:combinators}, for example, have been refined in \cite{MRAIK2017SQRE, AM2017SQRE} with a type system that eliminates sources of complexity.
When the registers of CRAs maintain terms, an important consideration is term \emph{simplification}, so that the
contents of each register may be maintained in constant space, regardless of the length of the input stream seen so far%
~\cite{AFR2016QRE}. In this context, developing a regular expression-like characterization of the class of streamable
regular transductions, $\SR(\O)$, which we consider in this paper, is a problem for future work.

\section{Conclusion}
\label{sec:conclusion}

We have studied the class of \emph{streamable regular transductions} ($\SR$), which are partial functions from input streams of tagged values to output values that can be computed by unambiguous Cost Register Automata (UCRAs). The subclass of \emph{streamable linear regular transductions} ($\SLR$) consists of those transductions that are computed by \emph{copyless} UCRAs, where the register updates are restricted so that each register appears at most once in the right-hand side of a parallel update. We have shown that the classes $\SR$ and $\SLR$ have appealing logical characterizations: $\SR$ (resp., $\SLR$) corresponds to MSO-definable transformations from strings to DAGs (trees) without backward edges. A precise relationship with the classical model of weighted automata has also been established.


\section*{Acknowledgements}

\noindent
This work was supported by NSF award CCF 1763514.

\section*{References}

\bibliographystyle{elsarticle-num} 
\bibliography{regular-functions}

\end{document}